\theoremstyle{plain}
\newtheorem{proposition}{Proposition}
\newtheorem{lemma}{Lemma}
\newtheorem{theorem}{Theorem}
\newtheorem{assumption}{Assumption}
\newtheorem{remark}{Remark}
\def\bmg{{\bm g}}
\def\bmk{{\bm k}}
\def\bml{{\bm l}}
\def\bmn{{\bm n}}
\def\bmv{{\bm v}}
\def\bmx{{\bm x}}
\def\bmy{{\bm y}}
\def\bmz{{\bm z}}
\def\bmL{{\bm L}}
\def\bmR{{\bm R}}
\def\bmS{{\bm S}}
\def\bmbeta{{\bm \beta}}
\def\bmepsilon{{\bm \epsilon}}
\def\bmomega{{\bm \omega}}
\def\bmnu{{\bm \nu}}
\def\bmsigma{{\bm \sigma}}
\def\bmell{{\bm \ell}}
\def\bmpartial{{\bm \partial}}
\def\bmell{{\bm \ell}}
\newcommand{\raisemath}[1]{\mathpalette{\raisem@th{#1}}}
\newcommand{\raisem@th}[3]{\raisebox{#1}{$#2#3$}}
\NewDocumentCommand{\newrbar}{O{0pt} O{0pt}}{
  \ensuremath{\mathrlap{\raisemath{#2}{\hspace*{#1}{\mathchar'26\mkern-9mu}}}r}}
\newcounter{mnotecount}
\newcommand{\mnotex}[1]
{\protect{\stepcounter{mnotecount}}$^{\mbox{\footnotesize $\bullet$\themnotecount}}$ 
\marginpar{
\raggedright\tiny\em
$\!\!\!\!\!\!\,\bullet$\themnotecount: #1} }
\newcounter{mnote}
\begin{document}

\title{\textbf{Conformal geodesics in spherically symmetric vacuum spacetimes with Cosmological
  constant}}
 
\author[,1,3]{A. Garc\'{\i}a-Parrado G\'omez-Lobo \footnote{E-mail
    address:{\tt alfonso@math.uminho.pt}}}
\author[,2]{E. Gasper\'{\i}n \footnote{E-mail address:{\tt e.gasperingarcia@qmul.ac.uk}}}
\author[,2]{J.A. Valiente Kroon \footnote{E-mail address:{\tt j.a.valiente-kroon@qmul.ac.uk}}}

\affil[1]{F\'isica Te\'orica, Universidad del Pa\'is Vasco, Apartado 644, 48080 Bilbao, Spain.}

\affil[2]{School of Mathematical Sciences, Queen Mary, University of London,
Mile End Road, London E1 4NS, United Kingdom.}

\affil[3]{Centro de Matem\'atica, Universidade do Minho, 4710-057 Braga, Portugal.}

\maketitle

\begin{abstract}
An analysis of conformal geodesics in the Schwarzschild-de Sitter and
Schwarzschild-anti de Sitter families of spacetimes is given. For both
families of spacetimes we show that initial data on a spacelike
hypersurface can be given such that the congruence of conformal
geodesics arising from this data cover the whole maximal extension of
canonical conformal representations of the spacetimes without forming
caustic points. For the Schwarzschild-de Sitter family, the resulting
congruence can be used to obtain global conformal Gaussian systems of
coordinates of the conformal representation. In the case of the
Schwarzschild-anti de Sitter family, the natural parameter of the
curves only covers a restricted time span so that these global
conformal Gaussian systems do not exist.

\end{abstract}

\tableofcontents

\section{Introduction}
The purpose of this article is to analyse the behaviour of conformal
geodesics in vacuum spherically symmetric spacetimes with a
Cosmological constant. Conformal geodesics are a powerful tool for the
analysis of global properties of spacetimes. In
addition to their conformal invariance, their relevance stems from the
fact that they single out \emph{privileged} representatives of the conformal class of a
solution to the Einstein field  equations. General properties of
conformal geodesics in the context of General Relativity have been
studied in \cite{FriSch87,Fri03c}. In particular, in the later
reference it has been shown that for the Schwarzschild spacetime it is
possible to construct a \emph{non-singular} congruence of conformal
geodesics covering the whole of the \emph{Kruskal-Sz\'ekeres} maximal
extension of the spacetime. This congruence can be used, in turn, to
construct a \emph{conformal Gaussian gauge system} consisting of a system of
coordinates and an adapted frame which are suitably propagated off a
fiduciary hypersurface in the spacetime. The construction and analysis
of the properties of this class of gauge systems can be regarded as a
basic first step towards the analysis, by means of conformal methods,
of  generic spacetimes with a global
structure similar to that of the Schwarzschild spacetime.
 The use of conformal
Gaussian systems in conjunction with the \emph{conformal Einstein
  field equations} renders a particularly attractive system of evolution
equations for which all the conformal
fields, save for the Weyl tensor, satisfy mere transport
equations along the conformal geodesics ---see
e.g. \cite{Fri95,Fri03a,CFEBook}.

\medskip
The main result of our analysis is that, as in the case of the
Schwarzschild spacetime, it is possible to construct congruences of
conformal geodesics covering the maximal extensions of the subextremal
and extremal
Schwarzschild-de Sitter (SdS) solutions. These congruences allow, in
turn, to construct global systems of conformal Gaussian coordinates. For the case of the  Schwarzschild-anti de Sitter (SadS)
spacetime, the situation is more subtle: although the congruence of
conformal geodesics covers the whole of the maximal extension of the
spacetime, the natural parameter of the curves of the congruence only
describes  a portion of the time span of the curves. A similar phenomenon
has been observed in the anti de Sitter spacetime ---see \cite{Fri95}.

 Applications of the constructions described in this article to the analysis of more
general (i.e. non-symmetric) classes of vacuum spacetimes in the case
of a de Sitter-Like Cosmological constant are given
elsewhere ---see \cite{GasVal17a}.

\subsection*{Notations and conventions}
In what follows $a,\,b,\,c\ldots$ will denote spacetime abstract
tensorial indices, while $i,\,j,\,k,\ldots $ are
spatial tensorial indices ranging from 1 to 3. By contrast, $\mu,\,
\nu,\,\lambda,\ldots$ and  $\alpha,\, \beta,\gamma,\ldots$ will
correspond, respectively, to spacetime and spatial coordinate
indices. 

\medskip
 The signature convention for spacetime metrics is $(+,-,-,-)$. Thus, the
induced metrics on spacelike hypersurfaces are negative definite.

\medskip 
An index-free notation will be often used. Given a 1-form ${\bmomega}$
and a vector ${\bmv}$, we denote the action of ${\bmomega}$ on
${\bmv}$ by $\langle {\bmomega},{\bmv}\rangle$. Furthermore,
${\bmomega}^\sharp$ and ${\bmv}^\flat$ denote, respectively, the
contravariant version of ${\bmomega}$ and the covariant version of
${\bmv}$ (raising and lowering of indices) with respect to a given
Lorentzian metric.  This notation can be extended to tensors of higher
rank (raising and lowering of all the tensorial indices).
The  conventions for the curvature tensors will fixed by the relation
\[
(\nabla_a \nabla_b -\nabla_b \nabla_a) v^c = R^c{}_{dab} v^d.
\]

\section{The Schwarzschild-de Sitter and Schwarzschild-anti de Sitter
  spacetimes}
\label{Section:SdS-SadS}

In the remaining of this article we will be concerned with the analysis
of \emph{spherically symmetric} spacetimes
$(\tilde{\mathcal{M}},\tilde{\bm g})$
satisfying the vacuum Einstein equations with Cosmological constant $\lambda$
\begin{equation}
\tilde{R}_{ab} -\frac{1}{2}\tilde{R}\, \tilde{g}_{ab}
-\lambda \tilde{g}_{ab} =0.
\label{VacuumEFE}
\end{equation}
In the previous expression, $\tilde{R}_{ab}$ and $\tilde{R}$ denote,
respectively, the Ricci tensor and the Ricci scalar of the 
metric $\tilde{\bmg}$. It follows that
\[
\tilde{R}_{ab} = -\lambda \tilde{g}_{ab}, \qquad \tilde{R}=-4\lambda.
\]
The minus sign in front of the Cosmological constant in equation
\eqref{VacuumEFE} has been chosen to ensure that $\lambda>0$
corresponds to \emph{de Sitter-like spacetimes} while $\lambda<0$ is
associated to \emph{anti de Sitter-like spacetimes}. The assumption of
spherical symmetry dramatically reduces the number of solutions to the
field equations \eqref{VacuumEFE}. Indeed, a generalisation of
Birkhoff's theorem shows that the only spherically symmetric solutions
to the vacuum Einstein field equations with Cosmological constant are
the Schwarzschild-de Sitter, the Schwarzschild-anti de Sitter and the
Nariai solutions ---see \cite{Sta98}.

\medskip
The line element of the Schwarzschild-de Sitter and Schwarzschild-anti
de Sitter spacetimes is given, in standard coordinates
$(t,r,\theta,\varphi)$, by
\begin{equation}
\hspace{-10mm}{\tilde{\bmg}} = \left(
1-\frac{2m}{r}-\frac{\lambda}{3}r^2\right) \mathbf{d}t\otimes
\mathbf{d}t -\left( 1-\frac{2m}{r}-\frac{\lambda}{3}r^2\right)^{-1}
\mathbf{d} r\otimes \mathbf{d} r- r^2 {\bmsigma} \label{StandardSadS}
\end{equation}
where
\[
{\bmsigma} \equiv (\mathbf{d} \theta \otimes \mathbf{d} \theta +
\sin^2 \theta \mathbf{d}\varphi \otimes \mathbf{d} \varphi)
\]
is the standard metric of $\mathbb{S}^2$ and
\[
t\in(-\infty,\infty), \qquad r\in(0,\infty), \qquad \theta\in [0,\pi],
\qquad \varphi=[0,2\pi).
\]

In the conventions used in this article, the case $\lambda>0$
corresponds to the \emph{Schwarzschild-de Sitter} spacetime and the
case $\lambda<0$ to the \emph{Schwarzschild-anti de Sitter}
spacetime. In what follows, to simplify the computations we perform a
rescaling of the line element in \eqref{StandardSadS} so as to obtain
the expression
\begin{equation}
{\tilde{\bmg}} = \left(1-\frac{M}{r}+\epsilon r^2\right)
\mathbf{d}t\otimes \mathbf{d}t -\left( 1-\frac{M}{r}+\epsilon
r^2\right)^{-1} \mathbf{d} r\otimes \mathbf{d} r- r^2
{\bmsigma},\qquad M\equiv 2m\sqrt{\frac{-\epsilon\lambda}{3}}\;.
\label{eq:rescaledStandardSadS}
\end{equation}
The constant $\epsilon$ takes the value $-1$  for Schwarzschild-de
Sitter case and $+1$ for the Schwarzschild-anti de Sitter case ---hence, $M$ is always kept strictly positive. It is convenient to
define
\begin{equation}
D(r)\equiv 1-\frac{M}{r}+\epsilon r^2= \frac{1}{r}\left(\epsilon
r^3+r-M\right).
\label{DefinitionD}
\end{equation}
Note that in the representation given by the line element
\eqref{eq:rescaledStandardSadS} both $M$ and $r$ are dimensionless
quantities ---this corresponds to a choice of units in which
$\lambda=-3\epsilon$, hence the Cosmological constant is also
dimensionless.

\medskip
\noindent In the sequel, it will be necessary to make use of alternative
coordinate systems for the metric $\tilde{\bmg}$. In particular, an
\emph{isotropic radial coordinate} $\rho$ can be introduced by means
of the relations 
\begin{equation}
 r=r(\rho)\;,\qquad  r'(\rho)=\frac{r\sqrt{D(r)}}{\rho}
\label{eq:isotropic}
\end{equation}
so that the metric line element in equation
\eqref{eq:rescaledStandardSadS} transforms into
\[
\tilde{\bmg} = D(r)\mathbf{d}t\otimes \mathbf{d}t-
\frac{r^2}{\rho^2}(\mathbf{d}\rho\otimes \mathbf{d}\rho+\rho^2\bmsigma).
\]
The explicit form of the function $r(\rho)$ is not needed in our
computations. In addition, we also make use of
\emph{Eddington-Finkelstein null coordinates} defined by the relations  
\begin{equation}
u =t - \newrbar[-1pt][-4pt], \qquad v =t + \newrbar[-1pt][-4pt]
\label{eq:ed-fink}
\end{equation}
where $\newrbar[-1pt][-3.5pt]$ is the \emph{tortoise} coordinate defined via
\begin{equation}\label{tortoise}
\newrbar[-1pt][-3.8pt](r) =\int \frac{\mbox{d}r}{D(r)}
\end{equation}
the constant of integration in the last expression is usually to be chosen
so that 
\[
\lim_{r \to\infty}\newrbar[-1pt][-3.8pt](r)=0.
\]
consequently $u, \, v \in \mathbb{R}$.  Following standard conventions we
refer to $u$ as the \emph{retarded null
coordinate} while $v$ is the \emph{advanced null
coordinate}. These coordinates render the line elements 
\[
\tilde{\bmg}=D(r)\mathbf{d}u\otimes \mathbf{d}u -
(\mathbf{d}u \otimes \mathbf{d} r + \mathbf{d} r\otimes \mathbf{d}
u) -r^2 {\bmsigma}, \qquad \tilde{\bmg}=D(r)\mathbf{d}v\otimes \mathbf{d}v +
(\mathbf{d}v \otimes \mathbf{d} r + \mathbf{d} r\otimes \mathbf{d}
v) -r^2 {\bmsigma}. 
\]

\subsection{Specific properties of the Schwarzschild-de Sitter
  spacetime}
\label{SubSection:SdSSpecific}
In this Section the Schwarzschild-de Sitter case is analysed, consequently $\lambda>0$ will be assumed.
 Observe that, for $0<M<2/(3\sqrt{3})$, the
 function $D(r)$ can be rewritten as
\begin{equation}\label{eq:DrExplicitRoots}
D(r)=-\frac{1}{r}(r-r_{b})(r-r_{c})(r-r_{-})
\end{equation}
where $r_b$ and $r_c$ are two different positive real roots, and a $r_{-}$ is
a negative real. Moreover, one has that
\begin{equation}\label{eq:RelationRoots}
0<r_b<r_c, \qquad r_c + r_b + r_-=0. 
\end{equation}
The root $r_b$ corresponds to a black hole-type  horizon, while
the root $r_c$ is associated to a Cosmological de Sitter-like 
horizon.  It can be
readily verified that $D(r)>0$ for $r_b < r< r_c$ while $D(r)<0$ for $0 < r< r_b$  and $r>r_{c}$. Consequently, 
the metric is static in the region $r_b<r<r_c$ between the
two horizons and there are no other static regions outside this range for $r$.  
Using Cardano's
formula for cubic equations it is possible to find the explicit values
of $r_b$, $r_c$ and $r_-$.  One finds that 
\begin{subequations}
\begin{eqnarray}
&&r_-=-\frac{2}{\sqrt{3}}\cos\left(\frac{\phi}{3}\right), 
\label{LocationHorizon0}\\
&&r_b=\frac{1}{\sqrt{3}}\left(\cos\left(\frac{\phi}{3}\right)-\sqrt{3}\sin\left(\frac{\phi}{3}\right)\right), \label{LocationHorizon1}\\
&&r_c=\frac{1}{\sqrt{3}}\left(\cos\left(\frac{\phi}{3}\right)+\sqrt{3}\sin\left(\frac{\phi}{3}\right)\right), \label{LocationHorizon2}
\end{eqnarray}
\end{subequations}
where the parameter $\phi$ is defined through the relation 
\begin{equation}
M=\frac{2\cos\phi}{3\sqrt{3}},\qquad \phi\in(0,\tfrac{1}{2}\pi).
\label{eq:define-phi}
 \end{equation}
We will refer to this case, for which $\phi \in (0,\frac{1}{2}\pi)$,  as the
\emph{subextremal Schwarzschild de-Sitter spacetime}. In contrast,  the case
$\phi=\pi/2$ for which   $M=2/(3\sqrt{3})$ will be referred as the 
\emph{extremal Schwarzschild de-Sitter spacetime}. This is of special
 interest as one has 
\[
r_{\mathcal{H}}\equiv r_b=r_c=\frac{3M}{2}=\frac{1}{\sqrt{3}}
\]
so that the function $D(r)$ takes the form
\begin{equation}
D(r)=-\frac{1}{r}\left(\frac{2}{\sqrt{3}}+r\right)
\left(r-\frac{1}{\sqrt{3}}\right)^2.
\label{eq:define-D-extremal} 
 \end{equation}

\medskip
Finally, it is observed that one can also consider the
\emph{hyperextremal Schwarzschild de-Sitter spacetime}
 characterised by the condition
 $M>2/(3\sqrt{3})$. In this case, the spacetime does 
not contain horizons.

\subsection{Specific properties of the Schwarzschild anti de Sitter spacetime} 

For the Schwarzschild-anti de Sitter spacetime $\lambda<0$,
consequently, in this case, the function $D(r)$ has a single real root
$r_b$. Using Cardano's formula one finds that
\begin{equation}
 r_b=\frac{2}{\sqrt{3}}\sinh\frac{\phi}{3}, \qquad
 M=\frac{2\sinh\phi}{3\sqrt{3}},\qquad \phi\in(0,\infty).
\label{HorizonAntiDeSitter}
\end{equation}
The root $r_b$ is associated to an event horizon separating two
regions: a black hole region with a singularity at $r=0$ and an
exterior region defined by the condition $r>r_b$ where the spacetime
is static ---see Figure \ref{Fig:SadS}.
One can eliminate  the parameter $\phi$ in equation \eqref{HorizonAntiDeSitter}
 and write $M$ in terms of 
$r_b$ as
\begin{equation}
 M=r_b+r_b^3.
\end{equation}

\subsection{Penrose diagrams}

The Penrose diagrams of the Schwarzschild-de Sitter and the
Schwarzschild-anti de Sitter spacetimes 
are well known ---see e.g., \cite{GriPod09} for details on their
construction. These diagrams are given in Figures
\ref{Fig:SubSdS}-\ref{Fig:SadS}.  A discussion of the general procedure for the
construction of Penrose diagrams in spherically symmetric spacetimes
can be found in \cite{CFEBook}, Chapter 6.  

\begin{figure}[t]
\centering
\includegraphics[width=1\textwidth]{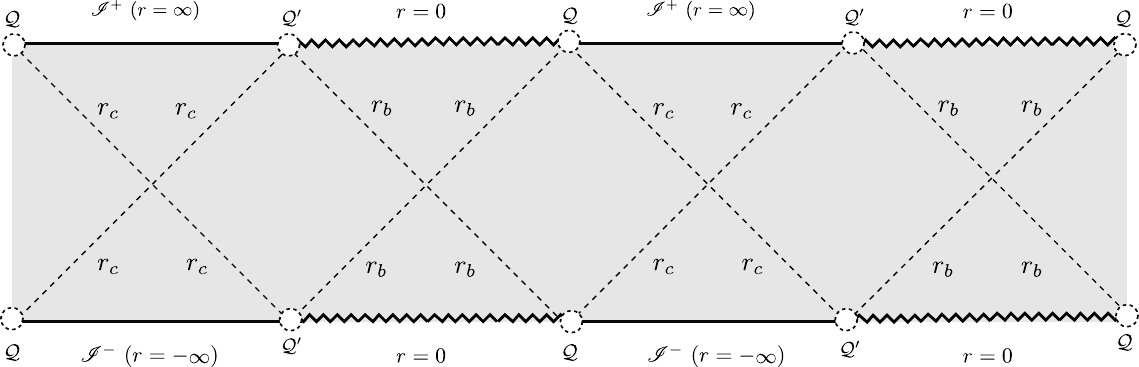}
\caption{Penrose diagram for the subextremal Schwarzschild-de Sitter
  spacetime. The serrated line denotes the location of the
  singularity; the continuous black line denotes the conformal
  boundary; the dashed line shows the location of the black hole and
  cosmological horizons which are located
at $r=r_{b}$ and $r=r_{c}$ respectively.  The excluded points $\mathcal{Q}$
  and $\mathcal{Q'}$ where the singularity seems to meet the
  conformal boundary correspond to asymptotic regions of the
  spacetime that does not belong to the singularity nor the
  conformal boundary.}
\label{Fig:SubSdS}
\end{figure}

\begin{figure}[t]
\includegraphics[width=1\textwidth]{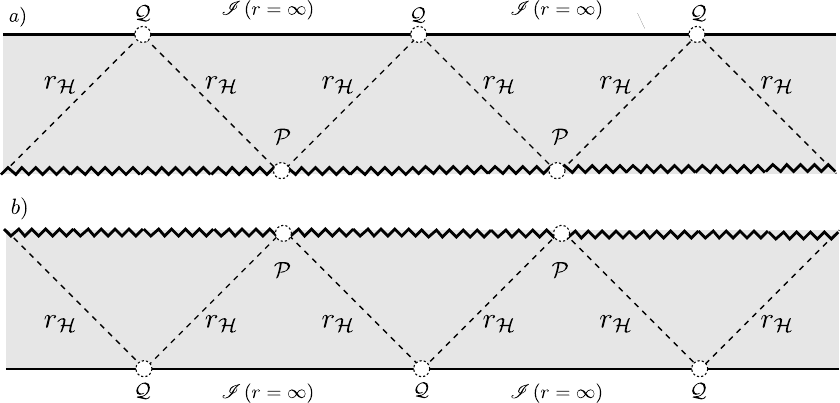}
\caption{Penrose diagrams for the extremal Schwarzschild-de Sitter
  spacetime. Figure (a) corresponds to a \emph{white hole} which evolves
  towards a de Sitter final state while Figure (b) is a model of a black
  hole with a future singularity. The Killing horizon 
  is located at $r=r_{\mathcal{H}}$
 as described in the main text. Similar to the subextremal case,
  the excluded points denoted by $\mathcal{P}$, $\mathcal{Q}$
  represent asymptotic regions of the spacetime that do not belong to
  the singularity nor the conformal boundary.}
\label{Fig:eSdS}
\end{figure}

\begin{figure}[t]
\centering
\includegraphics[scale=1]{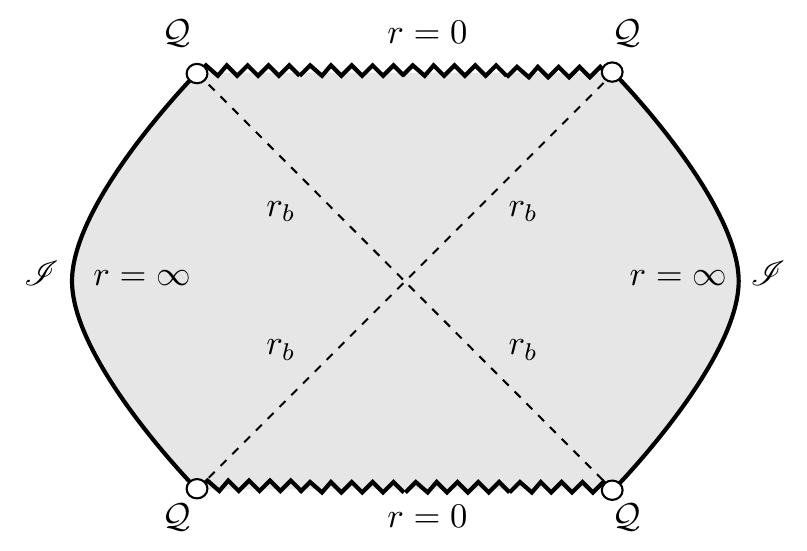}
\caption{Penrose diagram of the Schwarzschild-anti de Sitter
  spacetime.  Similar notation as that used for the Penrose diagrams
  of the Schwarzschild-de Sitter spacetime has been used.  The black
  hole horizon is located at at $r=r_{b}$ and the asymptotic points are denoted 
  by $\mathcal{Q}$.
  Notice that the singularity is of spacelike nature and, in contrast
  with the de-Sitter like case, the conformal boundary is
  timelike. Following the discussion in\cite{KloStr96,FidHubKleShe04},
generically, if the singularity is represented as a straight
horizontal line, the conformal boundary must be represented as
curved. A more accurate representation of the singularity can be found in figures 
\ref{fig:cgSchAdS} and \ref{fig:cgSchAdS2}.}
\label{Fig:SadS}
\end{figure}

\section{Conformal geodesics in vacuum spacetimes with Cosmological constant}
\label{Section:ConformalCurves}

In this Section  some basic results concerning conformal
geodesics in vacuum spacetimes are briefly reviewed. Full details can be found in 
\cite{Fri95,Fri03c,LueVal13b} ---see also \cite{CFEBook}. In what follows,
 let $(\tilde{\mathcal{M}},\tilde{\bm g})$ denote a
spacetime satisfying the vacuum Einstein field equations with
Cosmological constant \eqref{VacuumEFE}. 

\subsection{Basic definitions}
\label{Section:ConformalCurvesBasics}

Given an interval $I\subseteq\mathbb{R}$, let ${x}(\tau)$, 
$\tau \in I$ denote a curve in $(\mathcal{M},\tilde{\bm g})$ and let 
${\bmbeta}(\tau)$ denote a 1-form along ${\bm x}(\tau)$. Furthermore, let
$\dot{x}\equiv \mbox{d}{\bm x}/\mbox{d}\tau$ denote the tangent
vector field of the curve ${x}(\tau)$. The \emph{conformal
  geodesic equations} are then given by:
\begin{subequations}
\begin{eqnarray}
&& \tilde{\nabla}_{\dot{\bm x}} \dot{\bmx} = -2 \langle {\bmbeta},
\dot{\bmx} \rangle
\dot{\bmx} + \tilde{\bmg}(\dot{\bmx},\dot{\bmx}) {\bmbeta}^\sharp, \label{ConformalCurve1} \\
&& \tilde{\nabla}_{\dot{\bm x}} {\bmbeta} = \langle {\bmbeta}, \dot{\bm x}
\rangle {\bmbeta} - \tfrac{1}{2} \tilde{\bmg}^\sharp ({\bmbeta},{\bmbeta})
\dot{\bmx}^\flat + \tilde{\bm L}(\dot{\bmx}, \cdot), \label{ConformalCurve2}
\end{eqnarray}
\end{subequations}
where $\tilde{\bm\nabla}$ denotes  
the Levi-Civita connection of the physical metric $\tilde{\bmg}$ and
 $\tilde{\nabla}_{\dot{\bmx}}$ denotes a derivative in the direction of $\dot{\bmx}$. Notice that in the last expression the indices of the vectors and covectors are raised or lowered  using $\tilde{\bmg}$ 
---unless otherwise stated, we follow this convention in the 
rest of this article. The symbol 
$\tilde{\bmL}$ denotes  the \emph{Schouten tensor} of $\tilde{\bmg}$ defined by:
\[
 \tilde{L}_{ab} \equiv\frac{1}{2}\bigg(
\tilde{R}_{ab} - \frac{1}{6} \tilde{R}\, \tilde{g}_{ab}\bigg).
\] 
In the case of a spacetime satisfying the vacuum equations with
Cosmological constant \eqref{VacuumEFE} one has that:
\begin{equation}
\tilde{L}_{ab} = -\frac{1}{6}\lambda\tilde{g}_{ab}.
\label{SchoutenTensor:Vacuum}
\end{equation}
In the remainder of this article \emph{it is assumed that the spacetime
$(\tilde{\mathcal{M}},\tilde{\bmg})$ satisfies condition \eqref{SchoutenTensor:Vacuum}}. 

\subsection{Conformal factors along conformal geodesics}
Conformal geodesics allow to single out a \emph{canonical
  representative} of the conformal class  
of $\tilde{\bmg}$.
 Let
$(\mathcal{M},{\bmg})$ denote a conformal extension of
$(\tilde{\mathcal{M}},\tilde{\bm g})$. Hence, there exists a scalar
$\Theta$ such that the metrics $\tilde{\bmg}$ and
${\bmg}$ are related via
\begin{equation}
{\bmg} = \Theta^2 \tilde{\bmg}.
\label{ConformalTransformation:metric}
\end{equation}
The conformal factor $\Theta$ can be fixed by requiring ${\bmx}(\tau)$ to be
timelike and imposing the normalisation condition
\begin{equation}
{\bmg}(\dot{\bmx},\dot{\bmx}) =1.
\label{Normalisation:Unphysical}
\end{equation}

Using equation \eqref{SchoutenTensor:Vacuum}, repeated differentiation of
condition \eqref{Normalisation:Unphysical} together with the conformal
geodesic equations and the Einstein field equations expressed as in equation  \eqref{SchoutenTensor:Vacuum}  one obtains the relations
\[
\dot{\Theta} = \langle {\bmbeta}, {\bmx}\rangle \Theta, \qquad
\ddot{\Theta} = \Theta^{-1}\left(\frac{1}{2}
\tilde{\bmg}^\sharp({\bmbeta},{\bmbeta}) -\frac{1}{6}\lambda\right),
\qquad \dddot{\Theta}=0,
\]
where $\dot{\Theta} \equiv \tilde{\nabla}_{\dot{x}} \Theta$,
etc. Integrating the last of these equations \emph{along a given
  conformal geodesic} one finds
\begin{equation}
\Theta = \Theta_\star + \dot{\Theta}_\star (\tau-\tau_\star) +
\frac{1}{2}\ddot{\Theta}_\star(\tau-\tau_\star)^2, \label{ConformalFactor}
\end{equation}
where $\Theta_\star$, $\dot{\Theta}_\star$ and $\ddot{\Theta}_\star$
are prescribed at a fiduciary value $\tau_\star$ of the parameter
along the conformal geodesic $x(\tau)$. The coefficients $\dot{\Theta}_\star$ and
$\ddot{\Theta}_\star$ satisfy the constraints
\begin{equation}
\dot{\Theta}_\star = \langle {\bmbeta}_\star, \dot{\bmx}_\star \rangle
\Theta_\star, \qquad \Theta_\star \ddot{\Theta}_\star =
\frac{1}{2}\tilde{\bmg}({\bmbeta}_\star,{\bmbeta}_\star)
-\frac{1}{6}\lambda,
\label{Constraints}
\end{equation}
where ${\bmbeta}_\star$ and $\dot{\bmx}_\star$ denote, respectively,
the value of ${\bmbeta}$ and $\dot{\bmx}$ at $\tau=\tau_\star$. A
further computation exploiting the above expressions shows that
\[
\tilde{\nabla}_a\Theta \tilde{\nabla}^a\Theta = \tfrac{1}{3}\lambda \quad \text{at} \quad \mathscr{I}.
\]
Thus, as it is well known for vacuum spacetimes, the causal character
of the conformal boundary is determined by the sign of $\lambda$
---spacelike if $\lambda>0$ and timelike if $\lambda<0$.

\begin{remark}
{\em Notice that by virtue of the normalisation condition
\eqref{Normalisation:Unphysical}, $\tau$ is the ${\bm g}$-proper time
of the conformal geodesic.}
\end{remark}

\subsection{The $\tilde{\bmg}$-adapted equations}
\label{Section:PhysicalMetricAdaptedEquations}

In specific computations, it is more convenient to consider a
parametrisation of the conformal geodesics in terms of the \emph{physical}
proper time. To this end, consider the parameter transformation
$\tilde{\tau}=\tilde{\tau}(\tau)$ given by
\begin{equation}
\tilde{\tau} = \tilde{\tau}_\star + \int_{\tau_\star}^{\tau}
\frac{\mbox{d} s}{\Theta(s)},
\label{Reparametrisation}
\end{equation}
with inverse $\tau=\tau(\tilde{\tau})$. In what follows, let
$\tilde{\bmx}(\tilde{\tau})\equiv {\bmx}(\tau(\tilde{\tau}))$. Using this notation, it can
then be verified that
\begin{equation}
\tilde{\bm x}' \equiv \frac{\mbox{d} \tilde{x}}{\mbox{d}\tilde{\tau}}=
\Theta \dot{\bmx},
\label{eq:geodesic-rescaling}
\end{equation}
and that
\[
\tilde{\bmg}(\tilde{\bmx}',\tilde{\bmx}')=1,
\]
so that $\tilde{\tau}$ is the $\tilde{\bmg}$-proper time of the curve
$\tilde{\bmx}$.  In order to write the equation for the curve
$\tilde{x} (\bar{\tau})$ in a convenient way, we consider the split
\begin{equation}
{\bmbeta}=\tilde{\bmbeta} + \varpi \dot{\bmx}^\flat,
\label{eq:define-bhat}
\end{equation}
where the 1-form $\tilde{\bmbeta}$ satisfies
\[
\langle \tilde{\bmbeta}, \dot{\bmx}\rangle =0, \qquad \varpi \equiv
\frac{\langle {\bmbeta},
  \dot{\bmx}\rangle}{\tilde{\bmg}(\dot{\bmx},\dot{\bmx})}, \qquad
     {\bmg}^\sharp({\bmbeta},{\bmbeta}) =\langle
     {\bmbeta},\dot{\bmx}\rangle^2 +
     {\bmg}^\sharp(\tilde{\bmbeta},\tilde{\bmbeta}).
\] 
Moreover one can verify that
\begin{equation}\label{AdaptedParameters-ConformalFactor}
 \tilde{\bmg}=\Theta^{-2}{\bmg}, \qquad \langle \bmbeta, \dot{x} \rangle=
 \Theta^{-1}\dot{\Theta}, \qquad \varpi = \Theta^{-1}\dot{\Theta}.
\end{equation}
In terms of these objects the $\tilde{\bmg}$-\emph{adapted equations
  for the conformal curves} are given by 
\begin{subequations}
\begin{eqnarray}
&& \tilde{\nabla}_{\tilde{\bmx}'} \tilde{\bmx}' =
  \tilde{\bmbeta}^\sharp,
 \label{PhysicalMetricAdaptedCC1}\\
&& \tilde{\nabla}_{\tilde{\bmx}'} \tilde{\bmbeta} = \beta^2
 \tilde{\bmx}'{}^\flat,
\label{PhysicalMetricAdaptedCC2}
\end{eqnarray}
\end{subequations}
where
\begin{equation}
\beta^2 \equiv -\tilde{\bm g}^\sharp(\tilde{\bmbeta},\tilde{\bmbeta})
\label{eq:define-beta}
\end{equation}
is constant along a given conformal geodesic.

\subsection{The deviation equations}
\label{Section:ConformalDeviationEquations}

When working with  congruences of conformal geodesics it is important
to analyse whether they develop conjugate points. To this
end, let ${x}(\tau,\sigma)$ and ${\bmbeta}(\tau,\sigma)$ denote a
family of conformal curves depending smoothly on a parameter
$\sigma$. Following \cite{Fri03c}, let
\[
{\bmz}\equiv\partial_\sigma {x}, \qquad {\bmomega} \equiv
\tilde{\nabla}_{\bmz} {\bmbeta}.
\]
The fields ${\bmz}$ and ${\bmomega}$ denote, respectively, the
\emph{deviation vector field} and the \emph{deviation 1-form}. The
\emph{conformal Jacobi equation} and the \emph{1-form deviation
  equation} are given by
\begin{subequations}
\begin{eqnarray}
&& \tilde{\nabla}_{\dot{\bmx}} \tilde{\nabla}_{\dot{\bmx}} {\bmz} = \tilde{\bmR}(\dot{\bmx},{\bmz}) 
\dot{\bmx} - {\bmS}({\bmomega};\dot{\bmx},\dot{\bmx}) - 2
{\bmS}({\bmbeta}; \dot{\bmx}, \tilde{\nabla}_{\dot{\bmx}} {\bmz}),\label{eq:dev-1} \\
&& \tilde{\nabla}_{\dot{\bmx}} {\bmomega} = -{\bmbeta}\cdot\tilde{\bmR}(\dot{\bmx},{\bmz})
 + \frac{1}{2}\left( {\bmomega}\cdot
{\bmS}({\bmbeta};\dot{\bmx},\,\cdot\,) + 
{\bmbeta}\cdot {\bmS}({\bmomega};\dot{\bmx},\,\cdot\,) + {\bmbeta}\cdot {\bmS}
 ({\bmbeta};\tilde{\nabla}_{\dot{\bmx}}{\bmz},\,\cdot\,)\right),\label{eq:dev-2}
\end{eqnarray}
\end{subequations}
where $\tilde{\bmR}(\,\cdot\,,\,\cdot\,)$ denotes the Riemann tensor of the
metric $\tilde{\bmg}$, and 
\begin{eqnarray*}
&& {\bmS}({\bmbeta}; \dot{\bmx},{\bmy}) \equiv \langle {\bmbeta}, \dot{\bmx}
\rangle {\bmy} + \langle {\bmbeta},{\bmy}\rangle \dot{\bmx} -\tilde{\bmg}
(\dot{\bmx},{\bmy}) {\bmbeta}^\sharp, \\
&& {\bmomega}\cdot {\bmS}({\bmbeta};\dot{\bmx},\,\cdot\,) \equiv \langle
{\bmomega},\dot{\bmx}\rangle {\bmbeta} +\langle {\bmbeta},\dot{\bmx} \rangle
 {\bmomega} - \tilde{\bmg}^\sharp({\bmomega},{\bmbeta}) \dot{\bmx}^\flat.
\end{eqnarray*}

\medskip

A $\tilde{\bmg}$-adapted version of the conformal geodesics can be
readily computed. Accordingly, let $\tilde{x}\equiv
x(\tilde{\tau},\sigma)$ be a reparametrisation of ${x}(\tau,\sigma)$
in terms of the physical proper time $\tilde{\tau}$. Moreover, let 
\[
\tilde{\bmz} \equiv \partial_\sigma \tilde{x},
\qquad \tilde{\bmomega} \equiv \tilde{\nabla}_{\tilde{\bmz}} \tilde{\bmbeta}.
\]
In terms of these new variables one has that equations \eqref{eq:dev-1}-\eqref{eq:dev-2}
take the form
\begin{subequations}
\begin{eqnarray}
&& \tilde{\nabla}_{\tilde{\bmx}'} \tilde{\nabla}_{\tilde{\bmx}'} {\bmz} =
\tilde{\bmR}(\tilde{\bmx}',{\bmz})\tilde{\bmx}' +
\tilde{\bmomega}^\sharp, \label{PhysicalMetricAdaptedDevEqns1} \\
&& \tilde{\nabla}_{\tilde{\bmx}'} \tilde{\bmomega} = -\tilde{\bmbeta}\cdot 
\tilde{\bmR}(\tilde{\bmx}',\tilde{\bmz})+
\tilde{\bmx}^{\prime\flat}\tilde{\nabla}_{\tilde{\bmz}}\beta^2  +
\beta^2\tilde{\nabla}_{\tilde{\bmx}'} \tilde{\bmz}^\flat. \label{PhysicalMetricAdaptedDevEqns2}
\end{eqnarray}
\end{subequations}
A computation shows that
\[
\tilde{\nabla}_{\tilde{x}'} \tilde{\nabla}_{\tilde{z}} \beta^2=0.
\]
Therefore, $\tilde{\nabla}_{\tilde{z}} \beta^2$ is constant along a given
conformal geodesic. 

\subsection{Formulae in warped product spaces}
\label{sec:FormulaeWarpedProductSpaces}

The line element \eqref{StandardSadS} is in the form of a warped
product. This structure can be exploited to simplify the analysis of
the $\tilde{\bm g}$-adapted conformal geodesic equations
\eqref{PhysicalMetricAdaptedCC1}-\eqref{PhysicalMetricAdaptedCC2}.  To
have a self-contained discussion, the main formulae for warped product
spacetimes, originally derived in \cite{Fri03c}, are given in this
Section.
 
\medskip
In what follows, the discussion will be particularised to
 spacetimes whose metric can be
written in the  warped product form:
\begin{equation}
\tilde{\bmg}  = \tilde{l}_{AB}
\mathbf{d}x^A \otimes \mathbf{d} x^B + f^2 \tilde{k}_{ij} \mathbf{d}x^i\otimes 
 \mathbf{d}x^j,
\label{WarpedProductMetric}
\end{equation}
with 
\[
\tilde{l}_{AB} = l_{AB} (x^C), \quad \tilde{k}_{ij} = k_{ij}(x^k), \quad f=f(x^A)>0,
\]
and $A,\, B, \,C=0,\,1$ and $i,\, j,\, k=2,\,3$. In addition, it is
assumed that the 2-dimensional metric given by 
$\tilde{\bml}\equiv \tilde{l}_{AB} \mathbf{d}x^A \otimes \mathbf{d} x^B$ is
Lorentzian, while  $\tilde{\bmk} =\tilde{k}_{ij}
\mathbf{d}x^i\otimes \mathbf{d}x^j$ is a negative-definite 2-dimensional
 Riemannian metric. \emph{In view of this structure it is natural to consider
solutions to the conformal curve equations satisfying $\dot{x}^i=0$
and ${b}_j=0$.} In the case of the metric given by the line element
\eqref{StandardSadS} this Ansatz leads to solutions to the conformal
curves which have no evolution on the angular coordinates, and one has
only to consider evolution equations for the coordinates $(t,r)$
---or alternatively $(u,r)$ or $(v,r)$. A direct
computation shows that for this type of conformal geodesics the
$\tilde{\bm g}$-adapted equations for the conformal geodesics imply
\begin{subequations}
\begin{eqnarray}
&& \not{\!\!D}_{\tilde{\bmx}'} \tilde{\bmx}' = \tilde{\bmbeta}^\sharp,
 \label{WarpedProductConformalCurveEquations1}\\
&& \tilde{\bmbeta} = \pm \beta \tilde{\bmepsilon}_{\tilde{\bml}}(\tilde{\bmx}',\,\cdot\,),
 \label{WarpedProductConformalCurveEquations2}
\end{eqnarray}
\end{subequations}
with 
\[
{\bmepsilon}_{\tilde{\bml}} \equiv \sqrt{|\Delta|} \mathbf{d}x^0 \wedge
\mathbf{d}x^1, \quad
\Delta\equiv \det \tilde{l}_{AB}
\]
denoting the volume form of $\tilde{\bml}$
and where $\not{\!\!\!D}$ denotes the Levi-Civita covariant derivative of $\tilde{\bm
l}$. A further computation shows that under the present Ansatz, the
$\tilde{\bmg}$-adapted deviation equations
\eqref{PhysicalMetricAdaptedDevEqns1}-\eqref{PhysicalMetricAdaptedDevEqns2}
are equivalent to each other and to the equation
\begin{equation}
\not{\!\!D}_{\tilde{\bmx}'} \not{\!\!D}_{\tilde{\bmx}'} \tilde{\bmz} = \frac{1}{2} R[\tilde{\bml}]\,
 {\bmepsilon}_{\tilde{\bml}}(\tilde{\bmx}',\tilde{\bmz}) {\bmepsilon}_{\tilde{\bml}}(\tilde{\bmx}',
  \,\cdot\,) ^\sharp \pm \left( \not{\!\!D}_{\tilde{\bmz}} \beta\, {\bmepsilon}_{\tilde{\bml}}
  (\tilde{\bmx}',\, \cdot\,)^\sharp + \beta {\bmepsilon}_{\tilde{\bml}}
  (\not{\!\!D}_{\tilde{\bmx}'} \tilde{\bm
  z},\,\cdot \,)\right),
\label{WarpProductDeviationEquation}
\end{equation}
where $R[\tilde{\bml}]$ denotes the Ricci scalar of $\tilde{\bml}$. 

\medskip
For conformal curves satisfying $\dot{x}^a=0$ and ${b}_c=0$, the
question of whether the deviation vector field $\tilde{\bmz}$ is
non-vanishing can be rephrased in terms of a similar question for the
scalar 
\begin{equation}
\tilde{\omega} \equiv {\bmepsilon}_{\tilde{\bml}}(\tilde{\bmx}',\tilde{\bmz}).
\label{eq:define-omega}
\end{equation}
Notice that as long as $\tilde{\omega}\neq 0$, $\tilde{\bmx}'$ and $\tilde{\bmz}$
are linearly independent. A computation using the deviation equation
\eqref{WarpProductDeviationEquation} yields
\begin{equation}
\not{\!\!D}_{\tilde{\bmx}'} \not{\!\!D}_{\tilde{\bmx}'}\tilde{\omega}
  = \left( \beta^2 + \frac{1}{2}R[\tilde{\bml}]\right)\tilde{\omega} +
 \not{\!\!D}_{\tilde{\bmz}} \beta.
\label{ReducedWarpProductDeviationEquation}
\end{equation}
If we regard the congruence of conformal geodesics as a congruence in
a conformal extension $(\mathcal{M},\bmg)$ with $\bmg=\Theta^2
\tilde{\bmg}$ then the scalar which we need to analyse is
\begin{equation}\label{DefinitionOmegaCaustics}
{\omega}\equiv {\bmepsilon}_{\bm l}(\dot{\bmx},{\bmz}) 
\end{equation}
where ${\bml}\equiv\Theta^2 \tilde{\bml}$. The scalar ${\omega}$ has
a similar geometric meaning as $\tilde{\omega}$ for the congruence
of conformal geodesics regarded as curves in the unphysical
spacetime. From equation \eqref{eq:geodesic-rescaling} we deduce that
$\tilde{\bmx}'=\Theta \dot{\bmx}$ and in addition ${\bm
\epsilon}_{{\bml}}=\Theta^2{\bmepsilon}_{\tilde{\bml}}$. Hence
\begin{equation}
{\omega}=\Theta \tilde{\omega}.
\label{eq:omega-rescaled}
\end{equation}

\subsection{Explicit expressions for the reduced conformal geodesic equations}
\label{Section:ExplicitExpressions}

 The corresponding 2-dimensional metric ${\bm l}$, as determined by  the warped product form \eqref{WarpedProductMetric}, for the line element of equation (\ref{eq:rescaledStandardSadS}), is given by 
\begin{equation}
\tilde{\bml} = D(r) \mathbf{d}t \otimes \mathbf{d}t -D^{-1}(r)
\mathbf{d}r \otimes \mathbf{d}r.
\label{eq:2-l}
\end{equation}
It follows that the reduced conformal geodesic equations
\eqref{WarpedProductConformalCurveEquations1}-\eqref{WarpedProductConformalCurveEquations2}
reduce to
\begin{subequations}
\begin{eqnarray}
&& t^{\prime \prime} +
  \frac{\partial_{r}D(r)}{D(r)}r^{\prime}t^{\prime}
  = \frac{1}{D(r)}\beta\ r^{\prime}, \label{StandardCGEqn1}
  \\ && r^{\prime \prime} -
  \frac{1}{2}\frac{\partial_{r}D(r)}{D(r)}r^{\prime 2} + \frac{1}{2} D(r)
    \partial_{r}D(r)t^{\prime 2} =
  D(r)\beta\ t^{\prime}, \label{StandardCGEqn2}
\end{eqnarray}
\end{subequations}
where consistent with the notation of Section
\ref{Section:PhysicalMetricAdaptedEquations} we have set
$r \equiv r(\tilde{\tau})$ and $t\equiv t(\tilde{\tau})$.
Initial data for these equations can be prescribed following the
discussion of Sections
\ref{Section:ConformalCurvesBasics}-\ref{Section:PhysicalMetricAdaptedEquations}. Observe
that equations \eqref{StandardCGEqn1}-\eqref{StandardCGEqn2} can be
decoupled by making use of the $\tilde{\bm g}$-normalisation condition
\begin{equation}
D(r)\, t^{\prime 2} - \frac{1}{D(r)}
r^{\prime 2} =1.
\label{PhysicalNormalisation}
\end{equation}
Solving the latter for $ t'\geq0$ and substituting into equation 
\eqref{StandardCGEqn2}, one obtains that
\begin{equation}
r'' + \tfrac{1}{2} \partial_{r} D(r) - \beta
\sqrt{D(r) + r^{\prime 2}}=0.
\label{rprimeprime}
\end{equation}
This equation can be integrated once to yield
\[
\sqrt{D(r) + r^{\prime 2}} -\beta r = \gamma,
\]
where $\gamma$ is a constant given in terms of the initial value of
the radial coordinate $r_\star$ by
\begin{equation}\label{RelationGammaWithInitialData}
\gamma \equiv -\beta r_\star+\sqrt{D_\star+ r'_\star}\;,\qquad
D_\star\equiv D(r_\star)>0\;,\qquad r'_\star\equiv
r'(\tilde\tau_\star).
\end{equation}
It follows that
\begin{equation}
\label{ReducedEquation}
r' = \pm \sqrt{(\gamma+\beta r)^2 - D(r)},
\end{equation}
with the sign depending on the value of $r_\star$.  
Substituting the latter expression into equation
 \eqref{PhysicalNormalisation} we get
\begin{equation}\label{Equation-t'}
t' = \frac{|\gamma + \beta r|}{ |D(r)|}.
\end{equation}

\noindent One can formally integrate equations \eqref{ReducedEquation}
and \eqref{Equation-t'} as
\begin{subequations}
\begin{eqnarray}
&&  \tilde{\tau}(r)-\tilde{\tau}_{\star}=
 \int_{r_{\star}}^{r} \frac{1}{\sqrt{(\gamma + \beta \bar{r})^2
     -D(\bar{r})}}\mbox{d}\bar{r}, \label{GeneralFormulaForTauTilde}\\
&& t(r)-t_{\star}=\int_{r_{\star}}^{r}\frac{|\gamma
  + \beta \bar{r}|}{\sqrt{(\gamma^2+\beta
   \bar{r})^2-D(\bar{r})}}\mbox{d}\bar{r}. \label{GeneralFormulaFort}
\end{eqnarray}
\end{subequations}

\begin{remark}
{\em As a notational remark we stress that $\bar{r}$ is
only used as a variable for integration. Notice, in
contrast, that $\tau$ and $\tilde{\tau}$ represent the $\bmg$ and
$\tilde{\bmg}$ proper time.}
\end{remark}

The solutions of equations \eqref{ReducedEquation} and
 \eqref{Equation-t'}, or equivalently expressions
 \eqref{GeneralFormulaForTauTilde} and \eqref{GeneralFormulaFort},
 completely determine $ r= r(\tilde\tau,r_\star)$ and
 $ t= t(\tilde\tau,r_\star)$ which in turn imply 
 a congruence of conformal geodesics. Observe that,
the coordinate $t$ is not well defined at the cosmological
and black hole horizons, located at  $r=r_c$ and  $r=r_b$, respectively.
Therefore, to describe a conformal geodesic crossing the horizons
 it is necessary to replace $t$ by either
$u$ or $v$; the retarded and advanced null coordinates defined 
in equation  \eqref{eq:ed-fink}.  
The $\tilde{\bmg}$-normalisation condition in the $(u,r)$ and
$(v,r)$ coordinates is written as
\begin{equation}\label{NormalisationNullCoords}
D(r)u'^2 -2u'r' = 1, \qquad \qquad D(r)v'^2
+2v'r' = 1
\end{equation}
respectively. The normalisation condition
\eqref{NormalisationNullCoords} renders 

\begin{equation}\label{NullEquations}
 u'= \frac{r' \pm \sqrt{D(r)+ r'^2}}{D(r)} ,
 \qquad \qquad v'= \frac{-r' \pm \sqrt{D(r)+
     r'^2}}{D(r)}.
\end{equation}

\noindent Using the chain rule and equations \eqref{NullEquations} and
\eqref{ReducedEquation} one gets
\begin{subequations}
\begin{eqnarray}
&&\frac{\mbox{d}u}{\mbox{d}r} = \frac{1}{D(r)} \left( 1 \pm \frac{
  |\gamma + \beta r| } {\sqrt{(\gamma + \beta r)^2 -
    D(r)}} \right), \label{EqNullu}\\
&& \frac{\mbox{d}v}{\mbox{d}r} = \frac{1}{D(r)} \left(- 1 \pm \frac{
  |\gamma + \beta r| } {\sqrt{(\gamma + \beta r)^2 -
    D(r)}} \right). \label{EqNullv}
\end{eqnarray}
\end{subequations}

\textbf{Remark.} For the subsequent analysis observe that  analysing expression
\eqref{EqNullu} for $u$ choosing the second sign is equivalent to analyse $-v$
choosing the first sign in \eqref{EqNullv} and vice versa.

\bigskip
In the following Sections we carry out a case-by-case discussion of the solutions of the reduced conformal
geodesics of Section \ref{Section:ExplicitExpressions} for the various
spherically symmetric vacuum spacetimes with Cosmological constant
introduced in Section \ref{Section:SdS-SadS}. 

\subsection{Explicit expressions for the reduced conformal deviation equations}
\label{Basic-setup-InitialDataDeviation}

A direct calculation shows that if the Lorentzian metric
$\tilde{\bml}$ has the form of equation \eqref{eq:2-l} then
\[
R[\tilde{\bml}]=-\partial^2_r D(r).
\]
 Therefore,  equation  \eqref{ReducedWarpProductDeviationEquation}
takes  the form
\begin{equation}
\not{\!\!D}_{\tilde{\bmx}'} \not{\!\!D}_{\tilde{\bmx}'}\tilde{\omega}
  = \left( \beta^2-\frac{1}{2}\partial^2_{r} 
D(r)\right)\tilde{\omega} + \not{\!\!D}_{\tilde{\bmz}} \beta.
\label{eq:gdeviationD}
\end{equation}

In order to derive initial data for $\tilde{\omega}$ it is necessary
to specify the initial data for the deviation vector $\bmz$.  To do
so, first observe that we can parametrise the congruence as
$r=r(\tilde{\tau},r_{\star})$ where $r_{\star}$ determines the
intersection of the curve with the initial hypersurface $\mathcal{S}$.
Alternatively, considering an approach similar to the one used in
\cite{LueVal13b} we can make use of the standard isotropic coordinate
defined by the relations in \eqref{eq:isotropic} and use
$\rho_{\star}$ to parametrise the congruence. This approach leads to
the choice $\bmz_{\star}={\bmpartial}_{\rho_{\star}}$ for the initial
data for the deviation vector. Accordingly, we consider equation
\eqref{eq:gdeviationD} with the choice
$\tilde{\bmz}_{\star}={\bmpartial}_{\rho_\star}$. Consistent with this
choice equation \eqref{eq:gdeviationD} reads
\begin{equation}
 \frac{\mbox{d}^2\tilde{\omega}}{\mbox{d}\tilde\tau^2}
=\left(\beta^2-\frac{1}{2}\partial^2_{r}D(r)\right)
\tilde{\omega}+\frac{\partial\beta}{\partial\rho_{\star}}\;,
 \qquad \qquad \frac{\partial\beta}{\partial
   \rho_\star}=\frac{r_\star\sqrt{D_\star}}{\rho_\star}
\frac{\partial\beta}{\partial   r_\star},
 \label{eq:gdeviationDiso}
\end{equation}
where $r=r(\tilde\tau,r_\star)$. The solution of this equation
will be determined once the initial conditions
$\tilde{\omega}_\star\equiv \tilde{\omega}(0,r_\star)$ and
$\tilde{\omega}'_\star\equiv \tilde\omega'(0,r_\star)$ have been prescribed
and the corresponding value of $\beta$ replaced.

\noindent 
To compute the initial data for $\tilde{\omega}$ we use equation
\eqref{eq:define-omega} with 
\[
{\bmepsilon}_{\tilde{\bml}}=\mathbf{d}t\wedge \mathbf{d}r\;,
\qquad \tilde{\bmx}'_\star={t}'_\star\bmpartial_t+{ r}'_\star\bmpartial_r\;,\qquad
\tilde{\bmz}_\star=\bmpartial_{\rho_\star}. 
\]
From the chain rule one finds that
\[
\bmpartial_{\rho_\star}=
\frac{r_\star}{\rho_\star}\sqrt{D_\star}\bmpartial_{r_\star}.
\]
Hence
\begin{eqnarray}
 && \tilde{\omega}_\star=(\mathbf{d}t\wedge \mathbf{d}r)
\left({ t}'{\bmpartial_t}+{ r}'{\bmpartial_r},\frac{r_\star}{\rho_\star}\sqrt{D_\star}
 \left( \frac{\partial r}{\partial r_\star}{\bmpartial_r}+
\frac{\partial t}{\partial r_\star}{\bmpartial_t}\right)
 \right) \nonumber \\
&& \phantom{\omega_\star}=\frac{r_\star}{\rho_\star}\sqrt{D_\star}\left({ t}'\frac{\partial r}{\partial r_\star}+
{ r}'\frac{\partial t}{\partial r_\star}\right)\;.
\label{eq:omega-value}
\end{eqnarray}
Combining expression \eqref{eq:omega-value} with the initial data
 \eqref{CGInitialData}
we get the initial conditions
\begin{equation}
 \tilde{\omega}_\star =\frac{r_\star}{\rho_\star}\;,\qquad
 \tilde{\omega}'_\star=0.
 \label{eq:initial-omega}
\end{equation}

\section{Analysis of the conformal geodesics in the subextremal Schwarzschild-de Sitter case}
\label{subsec:non-extremal}

In accordance with the discussion of Section \ref{SubSection:SdSSpecific},
for the subextremal Schwarzschild-de Sitter spacetime 
it is assumed that $\lambda >0$  and 
$0<M<2/(3\sqrt{3})$. In this case one expects to be able to construct
a congruence of conformal geodesics which combines characteristics of
analogous congruences in the de Sitter and Schwarzschild spacetime.

\subsection{Basic setup}
\label{InitialDataSdS}
The de Sitter spacetime can be covered by a congruence of conformal
geodesics which have no initial acceleration ---see
\cite{LueVal09,CFEBook}. Based on this result we consider as initial
hypersurface the time symmetric slice $\tilde{\mathcal{S}}=\{t=0\}$
and require the conformal geodesics to be orthogonal to
$\tilde{\mathcal{S}}$ ---see Figure \ref{fig:cgSchdS}. Exploiting the
periodicity of the spacetime, it is sufficient to restrict the
analysis to initial data in the range
\[
r_b \leq r_\star \leq r_c .
\]
If $r_\star\neq r_b,r_c$ then
we consider initial data for the congruence of conformal geodesics of
the form
\begin{equation}
t_\star=0, \qquad  t'_\star = \frac{1}{\sqrt{D_\star}},
\qquad r'_\star=0, \qquad \tilde{\beta}_{t\star}=0, \qquad
\tilde{\beta}_{r\star} =0,
\label{CGInitialData}
\end{equation}
so that the tangent vector $\tilde{\bmx}'$ coincides with the future
unit normal to $\tilde{\mathcal{S}}$.
\emph{It follows that in this
case $\beta^2=0$ so that the resulting congruence of conformal
geodesics is, after reparametrisation, a congruence of metric
geodesics.}
 Metric geodesics in the Schwarzschild-de Sitter spacetime
have been studied in the literature ---see \cite{HacLam08}. For
completeness, we here we carry out an independent analysis adapted to
the present setting.
Observe that equation \eqref{ReducedEquation} together with the condition 
$\beta^2=0$ readily yields 
\begin{equation}
\gamma=\sqrt{D_\star}.
\label{eq:gammabeta0}
\end{equation}
Using the latter one can write 
\begin{equation}
\gamma^2 -D(r)=
\frac{{ r}-r_\star}{ r}\left({ r}^2+{ r}
 r_\star-\frac{M}{r_\star}\right).
\label{eq:drm1}
\end{equation}
\bigskip
For convenience of the subsequent discussion we introduce the polynomial
\begin{eqnarray*}
&& Q(r)\equiv { r}^2+{ r} r_\star-\frac{M}{r_\star}
, \\
&& \phantom{Q(r)} =(r-\alpha_-)(r-\alpha_+),
\end{eqnarray*}
with
\[
\alpha_{\pm}=r_\star\left(-\frac{1}{2}\pm\sqrt{\frac{1}{4}+\frac{M}{r^3_\star}}\right).
\]
If $r_\star=r_c,\, r_b$ then the last three conditions in equation
\eqref{CGInitialData} can be still regarded a valid initial data but
not the conditions involving ${\tilde t}_\star$. 
The cases $r_{\star}=r_{b}$ and $r_{\star}=r_{c}$  and will be 
discussed in Sections
\ref{CoverConformalBoundarySubextremal} and \ref{CoverConformalBoundarySubextremalBifurcationBlackhole}. It is
convenient to define 
\[
 r_\circledast\equiv \left(\frac{M}{2}\right)^{1/3}.
\]

\noindent The role of $r_{\circledast}$ will be clarified Section
\ref{subsec:circledast}. An immediate observation
 regarding the location of $r_{\circledast}$ is the content 
of the following lemma.
\begin{lemma}
For $0< M < 2/3\sqrt{3}$ one has that
\[
r_b < r_\circledast < r_c.
\]
\label{lemma:rbrc}
\end{lemma}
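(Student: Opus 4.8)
The plan is to recall that $r_b$ and $r_c$ are the two positive roots of the cubic $-r D(r) = -(\epsilon r^3 + r - M) = r^3 - r - M$ (using $\epsilon = -1$ in the de Sitter case), and that $D(r) > 0$ precisely on the open interval $(r_b, r_c)$, with $D$ vanishing at the endpoints. Since $r_\circledast = (M/2)^{1/3}$ lies in $(0,\infty)$, it suffices to show $D(r_\circledast) > 0$; because $D$ is positive only between its two positive roots, this immediately forces $r_b < r_\circledast < r_c$. So the whole lemma reduces to a single sign computation.

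First I would evaluate $D$ at $r_\circledast$. With $\epsilon = -1$ we have $D(r) = 1 - M/r - r^2$, so
\[
D(r_\circledast) = 1 - \frac{M}{(M/2)^{1/3}} - (M/2)^{2/3} = 1 - M^{2/3} 2^{1/3} - M^{2/3} 2^{-2/3}.
\]
Writing $x \equiv M^{2/3} > 0$ and collecting the powers of $2$, this is $D(r_\circledast) = 1 - x\,(2^{1/3} + 2^{-2/3}) = 1 - \tfrac{3}{2^{2/3}}\,x$, since $2^{1/3} + 2^{-2/3} = 2^{-2/3}(2 + 1) = 3 \cdot 2^{-2/3}$. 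Hence $D(r_\circledast) > 0$ if and only if $x < 2^{2/3}/3$, i.e. $M^{2/3} < 2^{2/3}/3$, i.e. $M < 2/(3\sqrt{3})$ after cubing (note $3^3 = 27 = (3\sqrt 3)^2$, so $(2^{2/3}/3)^{3/2} = 2/3^{3/2} = 2/(3\sqrt 3)$). This is exactly the subextremality hypothesis, so the inequality $D(r_\circledast) > 0$ holds.

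Then I would conclude: since $r_\circledast > 0$ and $D(r_\circledast) > 0$, and since on $(0,\infty)$ the function $D$ is positive only on $(r_b, r_c)$ (as recorded after equation \eqref{eq:RelationRoots}), it follows that $r_b < r_\circledast < r_c$, which is the claim.

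There is essentially no obstacle here; the only thing to be careful about is the bookkeeping of fractional powers of $2$ and $3$ in passing from $M^{2/3} < 2^{2/3}/3$ to $M < 2/(3\sqrt 3)$, and making sure the sign conventions ($\epsilon = -1$ for Schwarzschild-de Sitter, so that $D$ is positive between the horizons) are applied consistently. One could equivalently phrase the argument via the cubic $p(r) \equiv r^3 - r - M$, noting $p(r_b) = p(r_c) = 0$ with $p < 0$ on $(r_b, r_c)$, and checking $p(r_\circledast) = (M/2) - (M/2)^{1/3} - M < 0$ under the same hypothesis; but the direct evaluation of $D(r_\circledast)$ is the cleanest route.
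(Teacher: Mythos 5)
Your main argument is correct, and it takes a genuinely different route from the paper. The paper proves the lemma by invoking the Cardano/trigonometric parametrisation $M=\tfrac{2\cos\phi}{3\sqrt{3}}$, writing $r_\circledast=(\cos\phi/(3\sqrt 3))^{1/3}$, and asserting two trigonometric inequalities comparing this quantity with the explicit expressions \eqref{LocationHorizon1}--\eqref{LocationHorizon2} for $r_b$ and $r_c$ (inequalities which are stated but not verified in detail). You instead bypass the explicit root formulas entirely: since $D>0$ on $(0,\infty)$ precisely on $(r_b,r_c)$ (as recorded after \eqref{eq:RelationRoots}), it suffices to check the sign of $D(r_\circledast)=1-3\cdot 2^{-2/3}M^{2/3}$, which is positive exactly when $M<2/(3\sqrt 3)$. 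This is a cleaner and fully self-contained argument, and it makes transparent why the subextremality bound is exactly the threshold; the paper's route has the advantage only of staying within the $\phi$-parametrisation it uses elsewhere.

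One small correction to your side remarks: with $\epsilon=-1$ one has $-rD(r)=r^3-r+M$, not $r^3-r-M$. The sign of the constant term matters for your alternative phrasing: with your cubic, $p(r_\circledast)=M/2-(M/2)^{1/3}-M$ is negative for every $M>0$, which would "prove" the lemma without the hypothesis $M<2/(3\sqrt 3)$ and is therefore a red flag. With the correct cubic $p(r)=r^3-r+M$ one gets $p(r_\circledast)=\tfrac{3M}{2}-(M/2)^{1/3}$, which is negative iff $M<2/(3\sqrt 3)$, matching your main computation. The main line of your proof, which works directly with $D(r)=1-M/r-r^2$, is unaffected.
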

\begin{proof}
In the parametrisation introduced in equation (\ref{eq:define-phi}) we have
\[
 r_\circledast=\left(\frac{\cos\phi}{3\sqrt{3}}\right)^{1/3}.
\]
In addition, if $\phi\in (0,\pi/2)$ one has the inequalities
\begin{eqnarray*}
&&\left(\frac{\cos\phi}{3\sqrt{3}}\right)^{1/3}>\frac{1}{\sqrt{3}}
\cos\frac{\phi}{3}-\sin\frac{\phi}{3}
\qquad \mbox{so that} \qquad  r_\circledast > r_b\;,\\
&&\left(\frac{\cos\phi}{3\sqrt{3}}\right)^{1/3}<\frac{1}
{\sqrt{3}}\cos\frac{\phi}{3}+\sin\frac{\phi}{3} 
 \qquad \mbox{so that} \qquad  r_\circledast< r_c.
\end{eqnarray*}
\end{proof}

\begin{remark}
{\em Observe that in the extremal case one has
$M=2/3\sqrt{3}$ consequently
$r_{\circledast}= r_{\mathcal{H}}$ where
$r_{\mathcal{H}}$ denotes the location of the Killing horizon.}
\end{remark}

 Given that ${\bmx}'_\star$ has been chosen to be orthogonal to
$\tilde{\mathcal{S}}$ and the range of $r_\star$ is bounded, we can set
\[
\tilde{\bmbeta}_\star={\bmbeta}_\star=0 \;,\qquad 
\Theta_\star=1, \qquad \dot{\Theta}_\star=0.
\]
 To simplify the subsequent expressions recall from Section
\ref{Section:SdS-SadS} that one can set $\lambda =-3\epsilon$ with 
$\epsilon=-1$ in the case of
a de-Sitter like Cosmological constant.
Consistent with this choice, equations 
\eqref{ConformalFactor}-\eqref{Constraints}  render
\begin{equation}
\Theta = 1- \frac{\tau^2}{4}.
\label{eq:value-of-theta}
\end{equation}
From this expression one can determine the relation between $\tilde{\tau}$
and $\tau$ using equation \eqref{Reparametrisation}. A calculation
gives
\begin{equation}
\tilde{\tau} = 2\mbox{arctanh}\left( \frac{\tau}{2} \right).
\label{eq:value-of-bartau}
\end{equation}
Observe that \[\lim_{\tau \rightarrow 2}\tilde{\tau}(\tau) = \infty,\]
so that the conformal boundary is reached
in a finite value of the unphysical proper time $\tau$.

\begin{remark}
{\em The coordinate $t$ is not well-defined
at the points where $D(r)$ vanishes. In the subextremal case, this
corresponds to the Cosmological and black hole horizons while in the
extremal case this corresponds to the Killing horizon. }
\end{remark}

 To analyse the
geodesics crossing the horizon one has to parametrise the curve in
terms of the coordinates $(u,r)$ or $(v,r)$, where $u$ and $v$ are
advanced and retarded null coordinates.  Setting $\beta=0$ equations
\eqref{NullEquations} and \eqref{ReducedEquation} render

\begin{equation}\label{choiceOfSignNullCoords}
  u'= \frac{1}{D(r)}(\sqrt{\gamma^2-D(r)} \pm
  |\gamma|), \qquad
  v'=\frac{1}{D(r)}(-\sqrt{\gamma^2-D(r)} \pm
  |\gamma|).
\end{equation}

\noindent Choosing the second sign in equation
\eqref{choiceOfSignNullCoords} and using L'H\^opital rule one finds
\begin{equation}
 \lim_{{r} \rightarrow {r}_{c}}u'
 =-\frac{1}{2|\gamma|},
\label{RetardedCoordinateCosomologicalHorizon}
\end{equation}
\noindent Similarly, choosing the first sign in equation
\eqref{choiceOfSignNullCoords} one gets
\begin{equation}
 \lim_{r \rightarrow r_b}v' =
 \frac{1}{2|\gamma|}. 
\label{AdvancedCoordinateBlackHoleHorizon}
\end{equation}

\noindent Therefore, one can use the coordinates $(u,r)$ to describe
the curves crossing the Cosmological horizon at $r=r_{c}$. Similarly, the
coordinates $(v,r)$ can be used to analyse curves crossing the black
hole horizon at $r=r_b$. 

\subsection{Qualitative analysis of the behaviour of the curves}
\label{Section:QualitativeSdS}
In this Section we analyse the different qualitative behaviours of the
conformal geodesics defined by the initial conditions introduced in
the previous Section. As it will be seen, in broad terms, there are
three types of conformal geodesics: those that escape to the
asymptotic points, those that escape to the conformal boundary and
those falling into the singularity.
 
\subsubsection{Conformal geodesics with constant $r$ (critical curve).}
\label{subsec:circledast}
In this Section we discuss whether it is possible to have conformal
geodesics for which $r$ is constant. Substituting the conditions
$r''=r'=0$ in equation \eqref{rprimeprime} one readily
obtains the condition
\[
\partial_{r} D(r)= \frac{M}{r^2}-2r=0. 
\]
The latter implies
\begin{equation}\label{CriticalCurveSubextremal}
r = r_\circledast \equiv \left(\frac{M}{2}\right)^{1/3}.
\end{equation}
The curve $r(\tau)=r_{\star}=r_{\circledast}$ will be called the
 \emph{critical curve}.

\begin{figure}[t]
\centering
\includegraphics[width=0.56\textwidth]{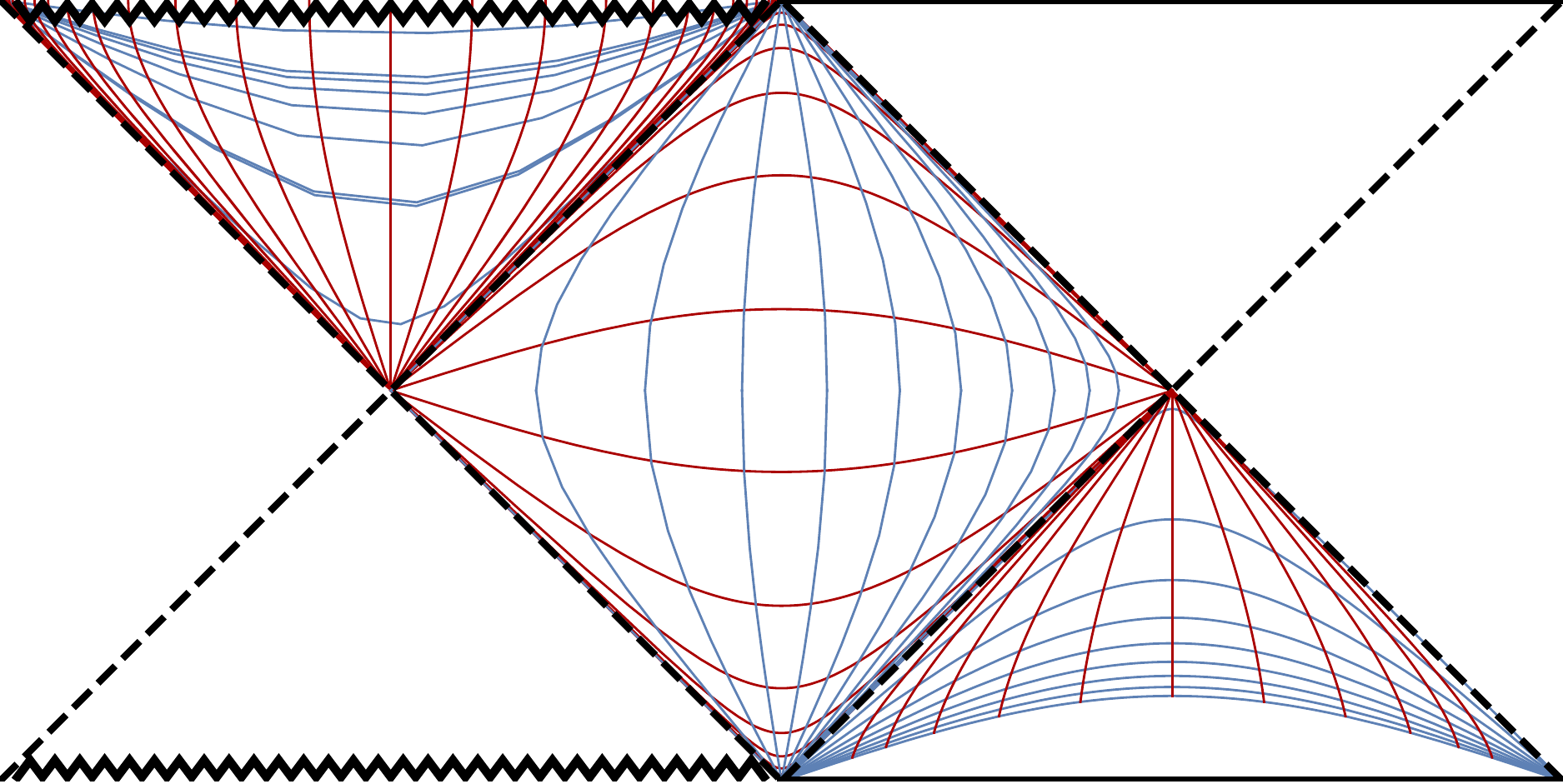}
\put(-240,120){\footnotesize{$\mathcal{Q}$}}
\put(-120,120){\footnotesize{$\mathcal{Q'}$}}
\put(-240,-10){\footnotesize{$\mathcal{Q}$}}
\put(-120,-10){\footnotesize{$\mathcal{Q'}$}} \put(0,
120){\footnotesize{$\mathcal{Q}$}} \put(0,
-10){\footnotesize{$\mathcal{Q}$}}
\put(-190,55){\footnotesize{$\mathcal{B}$}}
\put(-50,55){\footnotesize{$\mathcal{B'}$}}
\caption{ Curves with constant $t$ and $r$ (red and blue respectively)
  are plotted on the Penrose diagram of the subextremal
  Schwarzschild-de Sitter spacetime. Curves of constant $t$ intersect
  the bifurcation spheres $\mathcal{B}$, $\mathcal{B'}$ while the
  curves of constant $r$ approach the asymptotic points
  $\mathcal{Q}$ and $\mathcal{Q'}$. }
\label{fig:SdSConstantCurves}
\end{figure}

\begin{remark}
{\em Notice that the curve $r=r_{\star}=r_{\circledast}$ never crosses the horizon since
by virtue of Lemma \ref{lemma:rbrc} one has $r_{b}<r<r_{c}$. For this curves
$r$ is always finite  and
they approach one of the asymptotic points $\mathcal{Q}$ or $\mathcal{Q'}$.}
\end{remark}

In addition, one has the following result:

\begin{lemma}
\label{Lemma:RootsNonExtremalSdS} 
Assume $\lambda>0$ and $0<M<2/(3\sqrt{3})$, then one has the chain of
inequalities
\begin{eqnarray*} 
&& \alpha_-<0 <\alpha_+ < r_\circledast < r_\star\qquad \mbox{for}
  \qquad r_\star>r_\circledast ,\\ && \alpha_-<0 < r_\star <
  r_\circledast < \alpha_+ \qquad \mbox{for} \qquad
  r_\star<r_\circledast, \\ && \alpha_-<0 < r_\star= r_\circledast =
  \alpha_+ \qquad \mbox{for} \qquad r_\star=r_\circledast.
\end{eqnarray*}
\end{lemma}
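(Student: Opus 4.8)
The plan is to recall the explicit factorisations of the two relevant cubics and then compare roots directly. We have $D(r) = -\frac{1}{r}(r-r_b)(r-r_c)(r-r_-)$ with $r_- < 0 < r_b < r_c$ and $r_b + r_c + r_- = 0$, while $Q(r) = r^2 + r r_\star - M/r_\star = (r-\alpha_-)(r-\alpha_+)$. Since $\alpha_- \alpha_+ = -M/r_\star < 0$ and $\alpha_- + \alpha_+ = -r_\star < 0$, we immediately get $\alpha_- < 0 < \alpha_+$ and $\alpha_+ < r_\star$, which already disposes of the sign claims and the inequality $\alpha_+ < r_\star$ whenever it is asserted. So the real content is the position of $\alpha_+$ relative to $r_\circledast = (M/2)^{1/3}$, and of $r_\circledast$ relative to $r_\star$.

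For the comparison with $r_\circledast$, I would evaluate $Q$ at $r_\circledast$. Since $\alpha_+$ is the unique positive root of the upward-opening parabola $Q$, we have $\alpha_+ \lessgtr r_\circledast$ according as $Q(r_\circledast) \gtrless 0$. Compute
\[
Q(r_\circledast) = r_\circledast^2 + r_\circledast r_\star - \frac{M}{r_\star} = r_\circledast^2 + r_\circledast r_\star - \frac{2 r_\circledast^3}{r_\star} = \frac{r_\circledast}{r_\star}\left( r_\circledast r_\star + r_\star^2 - 2 r_\circledast^2\right) = \frac{r_\circledast}{r_\star}(r_\star - r_\circledast)(r_\star + 2 r_\circledast),
\]
using $M = 2 r_\circledast^3$. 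Since $r_\circledast, r_\star > 0$, the last factor $r_\star + 2r_\circledast$ is positive, so $Q(r_\circledast)$ has the same sign as $r_\star - r_\circledast$. Hence $r_\star > r_\circledast \Rightarrow Q(r_\circledast) > 0 \Rightarrow \alpha_+ < r_\circledast$; $r_\star < r_\circledast \Rightarrow \alpha_+ > r_\circledast$; and $r_\star = r_\circledast \Rightarrow Q(r_\circledast) = 0 \Rightarrow \alpha_+ = r_\circledast$ (the positive root). Combining with $\alpha_+ < r_\star$ in the first case and $r_\star < \alpha_+$ in the second (both from $\alpha_- + \alpha_+ = -r_\star$ together with $\alpha_- < 0$, or equivalently directly from the sign of $Q(r_\star) = r_\star^2 > 0$ placing $r_\star$ outside $[\alpha_-,\alpha_+]$ hence above $\alpha_+$ — wait, that would contradict the second case, so one must be careful here), I will instead deduce the ordering of $r_\star$ and $\alpha_+$ purely from the sign of $r_\star - r_\circledast$ via the chain just established together with Lemma \ref{lemma:rbrc}, which guarantees $r_b < r_\circledast < r_c$.

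The one point needing care — and the only genuine obstacle — is reconciling the apparent tension between ``$\alpha_+ < r_\star$ always'' (which the trace constraint $\alpha_- + \alpha_+ = -r_\star < 0$ seems to give) and the middle line of the lemma, which asserts $r_\star < \alpha_+$. The resolution is that $\alpha_- + \alpha_+ = -r_\star$ gives $\alpha_+ = -r_\star - \alpha_- = |\alpha_-| - r_\star$, which is not automatically less than $r_\star$; one has $\alpha_+ \lessgtr r_\star$ according as $|\alpha_-| \lessgtr 2r_\star$, i.e. according as $M/(r_\star \alpha_+) \lessgtr 2 r_\star$. So I will organise the proof around evaluating $Q$ at the two test points $r_\circledast$ and $r_\star$: $Q(r_\star) = r_\star^2 > 0$ shows $r_\star \notin (\alpha_-,\alpha_+)$, hence $r_\star > \alpha_+$ is wrong in general — rather, $Q(r_\star)>0$ only says $r_\star$ lies outside the root interval, and whether it is to the left or right of $\alpha_+$ is exactly what $Q(r_\circledast)$ decides once we know where $r_\circledast$ sits. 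The clean logical route, which I will follow, is: (i) signs of $\alpha_\pm$ from Vieta; (ii) $Q(r_\circledast)$ computation above fixing $\alpha_+$ vs. $r_\circledast$ and the equality case; (iii) in the regime $r_\star > r_\circledast$, note $\alpha_+ < r_\circledast < r_\star$ is then immediate from (ii); in the regime $r_\star < r_\circledast$, note $r_\star < r_\circledast < \alpha_+$ is immediate from (ii); and the middle equality case is $\alpha_+ = r_\circledast = r_\star$. Thus $Q(r_\star)$ is never actually needed, and the spurious tension evaporates. I will write it up in that order, with the Vieta observations and the single displayed factorisation of $Q(r_\circledast)$ as the only computations.
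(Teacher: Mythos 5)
Your final route is correct and genuinely different from the paper's. The paper proves the lemma by rewriting $\alpha_\pm=\tfrac{r_\star}{2}\bigl(-1\pm\sqrt{1+(2r_\circledast/r_\star)^3}\,\bigr)$ and then comparing $\sqrt{1+z^3}$ with the constants $3$ and with $1+z$ for $z=2r_\circledast/r_\star$, using the algebraic equivalence $z(z^2-z-2)\gtrless 0 \iff \sqrt{1+z^3}\gtrless 1+z$; this handles the two regimes $z\gtrless 2$ separately. Your argument replaces all of that by Vieta (product of roots $=-M/r_\star<0$, so $\alpha_-<0<\alpha_+$) plus a single sign evaluation of the upward-opening parabola $Q$ at the test point $r_\circledast$, via the factorisation $Q(r_\circledast)=\tfrac{r_\circledast}{r_\star}(r_\star-r_\circledast)(r_\star+2r_\circledast)$, after which every ordering in the lemma (including the equality case) follows by transitivity. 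This is shorter, avoids nested radicals entirely, and makes the equality case transparent; the invocation of Lemma \ref{lemma:rbrc} is not actually needed and can be dropped.

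Two things must be excised before you write this up, since both are false as stated even though you ultimately discard them: (i) the opening claim that $\alpha_-+\alpha_+=-r_\star<0$ "already" gives $\alpha_+<r_\star$ --- it does not (and indeed $r_\star<\alpha_+$ holds in the second regime), as you yourself notice later; (ii) the assertion $Q(r_\star)=r_\star^2$ --- in fact $Q(r_\star)=2r_\star^2-M/r_\star=\tfrac{2}{r_\star}(r_\star^3-r_\circledast^3)$, whose sign is again that of $r_\star-r_\circledast$, so $r_\star$ lies inside or outside $(\alpha_-,\alpha_+)$ depending on the regime. Your final plan (Vieta, then $Q(r_\circledast)$, then transitivity) is self-contained and does not rely on either statement, so simply write that and omit the exploratory detour.
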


\begin{proof}
To prove this result it is convenient to write $\alpha_\pm$ in terms
of $r_\circledast$. One finds that

\begin{equation}
\alpha_\pm=\frac{r_\star}{2}\left(-1\pm\sqrt{1+\left(\frac{2r_\circledast}{r_\star}\right)^3}\right).
\label{eq:alphacircledast}
\end{equation}

\noindent From the previous relation it is clear that $\alpha_-<0$ and that
$\alpha_-<\alpha_+$. If $r_\star<r_\circledast$ then

\begin{equation}
\sqrt{1+\left(\frac{2r_\circledast}{r_\star}\right)^3}>3\;,
\label{eq:alphapm}
\end{equation}

\noindent which entails $\alpha_+>r_\star$. Also for any $z>2$  we have

\begin{equation}
 z(z^2-z-2)>0 \qquad \mbox{if and only if} \qquad \sqrt{1+z^3}>1+z.
\end{equation}

\noindent If we set $z=2r_\circledast/r_\star$ in the last inequality and use the
inequality \eqref{eq:alphapm} we deduce $\alpha_+>r_\circledast$. This
proves the second chain of inequalities. Assume now that
$r_\star>r_\circledast$. Then

\begin{equation}
 \sqrt{1+\left(\frac{2r_\circledast}{r_\star}\right)^3}<3\;,
\label{eq:alphamp}
\end{equation}
which entails $\alpha_+<r_\star$. If $z<2$ then 
\begin{equation}
 z(z^2-z-2)<0 \qquad \mbox{if and only if} \qquad \sqrt{1+z^3}<1+z.
\end{equation}

\noindent Hence, setting $z=2r_\circledast/r_\star$ in the last
inequality and using, again, inequality \eqref{eq:alphapm} yields
$\alpha_+<r_\circledast$ thus proving the first chain of inequalities.
Finally, the last condition follows after setting $r_\star
=r_\circledast$ in expression \eqref{eq:alphacircledast}.
\end{proof}

\noindent Using equation \eqref{Equation-t'} and the initial data 
one readily finds that along the critical curve
\[
t=t(\tilde{\tau})= \frac{\tilde{\tau}}{\sqrt{D_\circledast}}.
\]
Hence, $t\rightarrow \infty$ as $\tilde{\tau}\rightarrow
\infty$. Alternatively, using formula \eqref{eq:value-of-bartau}, one
can express $t$ in terms of the unphysical proper time. One obtains
\begin{equation}\label{CoordinateTimeToUnphysicalProperTime}
t=\frac{2}{\sqrt{D_\circledast}}\mbox{arctanh}
\left( \frac{\tau}{2}\right). 
\end{equation}
To complete the discussion
observe that that along the critical curve one has
$r=r_{\star}=r_{\circledast}$, so that along the critical curve
$\newrbar[-1pt][-4pt]_{\circledast}=
\newrbar[-1pt][-4pt](r_{\circledast})$ is a constant.  Using equations
\eqref{eq:ed-fink} and \eqref{CoordinateTimeToUnphysicalProperTime}
one concludes that along the critical curve
\begin{equation*}
u(\tau)= \frac{2}{\sqrt{D_{\circledast}}}\arctan
\bigg(\frac{\tau}{2}\bigg)-\newrbar[-1pt][-4pt]_{\circledast}, \qquad
v(\tau)=\frac{2}{\sqrt{D_{\circledast}}}\arctan
\bigg(\frac{\tau}{2}\bigg) + \newrbar[-1pt][-4pt]_{\circledast},
\end{equation*}
thus at $\tau=2$ where $\Theta(\tau)$ vanishes, one has $u=\infty$ and
$v=\infty$.

\medskip
\noindent
\textbf{Asymptotic behaviour of the critical curve.} In the rest of this Section we will analyse
more closely the behaviour of the critical curve. 
Exploiting the  above notation and using the initial data \eqref{CGInitialData}, the integral 
\eqref{GeneralFormulaForTauTilde} can be then
rewritten as
\begin{equation}\label{rGeodesicIntegralCriticalCurve}
\tilde{\tau}=
\int^{r}_{r_{\star}}\sqrt{\frac{\bar{r}}{(\bar{r}-r_{\star})(\bar{r}-\alpha_{-}
(r_{\star}))(\bar{r}-\alpha_{+}(r_{\star}))}}\mbox{d}\bar{r}.
\end{equation}

To study the behaviour close to the critical curve consider
$r_{\star}=(1+ \epsilon)r_{\circledast}$. For small $\epsilon>0$ and
 $\bar{r}>r_{\star}$ one can expand the right hand side of
equation \eqref{rGeodesicIntegralCriticalCurve} in Taylor series as
\begin{equation}
\tilde{\tau}=
\int_{r_{\star}}^{r}\sqrt{\frac{\bar{r}}{\bar{r}+2r_{\circledast}}}
\left(\frac{1}{\bar{r}-r_{\circledast}}-\frac{3r_{\circledast}^2\bar{r}\epsilon^2
}{2(\bar{r}-r_{\circledast})^3}\right)\mbox{d}\bar{r} + \mathcal{O}(\epsilon^3).
\end{equation}
Integrating we obtain
\begin{eqnarray*}
\tilde{\tau}=
-\frac{2}{\sqrt{3}}\text{arctanh}\left( \sqrt{3}\sqrt{\frac{1+
    \epsilon}{3+\epsilon}}\right) + 2\ln\left(
\sqrt{r_{\circledast}(1+\epsilon)}+\sqrt{r_{\circledast}(3+\epsilon)}
\right) -\frac{2}{\sqrt{3}}\text{arctanh} \left(
\frac{3r}{r+2r_{\circledast}} \right) \\ + 2\ln 
\left(\sqrt{\tilde{r}} +
\sqrt{r + 2r_{\circledast}} \right)
\ -\frac{3}{4}r_{\circledast}\sqrt{1+2r_{\circledast}}(1+ 2\epsilon)
-\frac{3}{4}r_{\circledast}{}^2\sqrt{1+2r_{\circledast}}\frac{(2r
-r_{\circledast})\epsilon^2}{(r_{\circledast}-r)^2}
+ \mathcal{O}(\epsilon^3).
\end{eqnarray*}
As expected,  the last expression
diverges as $\epsilon \rightarrow 0$.
 The divergent term can be expanded for small $\epsilon>0$ 
as
\[
\text{arctanh} \left(\sqrt{3}\sqrt{\frac{1+\epsilon}{3+\epsilon}}
\right) = \frac{1}{2}\ln \left(  \left\lvert-\frac{6}{\epsilon} + 4 +
\frac{\epsilon}{6}+ \mathcal{O}(\epsilon^2) \right\rvert \right)
\]
and the second term can be expanded as
\[
 \ln\left(
 \sqrt{r_{\circledast}(1+\epsilon)}+\sqrt{r_{\circledast}(3+\epsilon)}
 \right) = \ln \left( (1 + \sqrt{3})\sqrt{r_{\circledast}}\right) +
 \frac{\epsilon}{2\sqrt{3}} - \frac{\epsilon^2}{6\sqrt{3}} +
 \mathcal{O}(\epsilon^3).
\]
Hence, to leading order one has
\[
\tilde{\tau}(r) =\frac{1}{\sqrt{3}}\ln \epsilon + f(r) +
\mathcal{O}(\epsilon)
\]
where
\[
f(r)= 2\ln \left( (1 +
\sqrt{3})\sqrt{r_{\circledast}}\right) -\frac{2}{\sqrt{3}}
\text{arctanh} \left( \frac{3r}{r+2r_{\circledast}} \right) 
\\ + 2\ln
\left(\sqrt{r
} + \sqrt{r + 2r_{\circledast}} \right)
\ -\frac{3}{4}r_{\circledast}\sqrt{1+2r_{\circledast}}.
\]

\noindent Rewriting the last expression in terms of the 
unphysical proper time using equation \eqref{eq:value-of-bartau} one has,
to leading order, that
\begin{equation}
\tau(\epsilon) =\frac{2(\epsilon^p-1)}{\epsilon^{p}+ 1}
\end{equation}
\noindent where $p=1/\sqrt{3}$. Differentiating we get
\[
\frac{\mbox{d}\tau}{\mbox{d}\epsilon}= \frac{4p\epsilon^{p-1}}{(\epsilon^p +1)^2}.
\]

\noindent Since $p<1$ then one has that
$\mbox{d}\tau/\mbox{d}\epsilon$ diverges as $\epsilon \rightarrow 0$.
Consequently, the critical curve becomes tangent to the horizon
as it
approaches the asymptotic points $\mathcal{Q}$ and $\mathcal{Q'}$.  In
other words, the critical curve becomes null asymptotically.  This
behaviour is similar to that observed in the Schwarzschild spacetime
where $\mathcal{Q}$ plays the role of timelike infinity $i^{+}$ as
discussed in \cite{Fri03a} and the subextremal Reissner-Nordstr\"om
spacetime in \cite{LueVal13b}.

\begin{remark}
{\em The change in the the causal behaviour of the critical curve
  discussed in the previous Section is evidence of the strong singular
  behaviour of the conformal structure at the asymptotic points. }
\end{remark}

\subsubsection{Conformal geodesics with $r_\circledast<r_\star\leq r_c$ }
\label{subsec:r>r*}
Direct evaluation of equation \eqref{rprimeprime} at $\tilde{\tau}=0$
for $r_\star >r_\circledast$ shows that $r''_\star >0$. As
$r'_\star=0$ one has that $r_\star$ is a local minimum
of $r$. Hence, the curves
with $r_\circledast<r_\star\leq r_c$ have $r$ initially
increasing ---i.e $r'>0$. Accordingly, making use of the relations see equations
\eqref{eq:gammabeta0} and\eqref{eq:drm1} one is led to analyse the
ordinary differential equation  
\begin{equation}
r' =\sqrt{D_\star -D(r)} = \sqrt{\frac{ r-r_\star}{r_\star } Q( r)}.
\label{SdS:Increasing}
\end{equation}

Making use of Lemma \ref{Lemma:RootsNonExtremalSdS} one sees that the
turning points of the equation are located at a value of $r$
which is smaller than $r_\star$. As $r$ is initially
increasing it follows then that $r'\neq 0$ for
$\tilde{\tau}\in (0,\infty)$. Moreover, as $r>r_\circledast$
it follows that $r''\neq 0$ for $\tilde{\tau} \in
(0,\infty)$. Finally, from equation \eqref{SdS:Increasing} it follows
that for the initial data under consideration $r'\rightarrow
\infty$ if and only if $r\rightarrow \infty$. It only remains
to be checked that $r'$ (or $r$) do not blow up in a
finite amount of physical proper time $\tilde{\tau}$. From
equation \eqref{SdS:Increasing} the physical proper time is given by
\begin{equation}
\tilde\tau=\int_{r_\star}^{ r}\sqrt{\frac{r_\star \bar{r}}{(\bar{r}-r_\star)\big(r_\star
    \bar{r}(\bar{r}+r_\star)-M\big)}} \mbox{d} \bar{r}. 
\label{eq:ds-propertime1}
\end{equation}

According to Lemma \ref{Lemma:RootsNonExtremalSdS} if
$r_\star>r_\circledast$ one has that $r_\circledast>\alpha_+$. 
Then, for any $r>r_\circledast$ it holds that $r>\alpha_+$.
Given that $r_\star>0$ we have
\[
r_\star r(r+r_\star)-M>0. 
\]
Therefore the integral (\ref{eq:ds-propertime1}) is real for any
$\tilde r>r_\star$. In addition,  one has that 
\[
\frac{r_\star r}{(r-r_\star)\big(r_\star r(r+r_\star)-M\big)}\geq
\frac{r_\star r}{r_\star r (r-r_\star)(r+r_\star)}=
\frac{1}{r^2-r_\star^2}\geq
\frac{1}{r^2}.
\]
Thus
\[
\int_{r_\star}^{r}\sqrt{\frac{r_\star \bar{r}}{(\bar{r}-r_\star)\big(r_\star
    \bar{r}(\bar{r}+r_\star)-M\big)}} \mbox{d}\bar{r}
\geq \int_{r_\star}^{ r}\sqrt{\frac{1}{\bar{r}^2}} \mbox{d}\bar{r}=
\log r-\log r_\star.
\]
Hence
\[
r(\tilde{\tau}) \leq r_{\star}e^{\tilde{\tau}}, 
\]
showing that $r$ does not  blow up in finite physical proper time. 

\begin{remark}
{\em Summarising, the analysis in the previous paragraphs shows that
  if $r_\circledast<r_\star\leq r_c$, then 
  $r(\tilde{\tau})\rightarrow\infty$ as $\tilde{\tau}\rightarrow \infty$. }
\end{remark}

As discussed in a remark given in Section \ref{InitialDataSdS} the
coordinate $t$ is not well behaved at the horizon. To follow the
curves crossing the Cosmological  horizon at $r=r_c$ one needs to use
null coordinates $(u,r)$.  In the following we first show that along
the conformal geodesics the coordinate $u$ remains well defined 
for any finite value of $r$, in particular,
for $r=r_{c}$. Then, we show that $u$ is
finite at the conformal boundary. As discussed in Section
\ref{InitialDataSdS} it is enough to analyse equation \eqref{EqNullu}
choosing the second sign so that $\mbox{lim}_{r \rightarrow
  r_{b}}u' = -1/2|\gamma|$. To start the discussion observe that the coordinate $u$
 is determined by integration 
of equation \eqref{EqNullu} by 
\begin{equation}
u( r) - u_{\star} = \int ^{r} _{r_\star}
\frac{1}{D(\bar{r})} \left( 1 - \frac{|\gamma|}{\sqrt{\gamma^2 - D(\bar{r})}}
\right) \mbox{d}\bar{r}.
\label{UCrossingHorizon}
\end{equation}

\medskip
\noindent
\textbf{Analysis for the region with $r_{\star}\leq r_{c}$.} Observe
that, by continuity, one has that $|u(r) - u_{\star}|$ for $r \in
[r_{\star},r_{c}]$ would diverge if and only if the derivative
$\mbox{d}u/\mbox{d}r$ diverges for some $r_{q} \in
[r_{\star},r_{c}]$. Direct inspection of equation \eqref{EqNullu}
using expressions \eqref{eq:DrExplicitRoots}, \eqref{eq:RelationRoots}
and \eqref{eq:drm1} show that $\mbox{d}u/\mbox{d}r$ is finite for $r
\in [r_{\star},r_{c})$.  Thus it only remains to compute the
limit
\[
\lim_{r \rightarrow r_{c}}\frac{\mbox{d}u}{\mbox{d}r}.
\]
Choosing the second sign in equation \eqref{EqNullu} and rearranging one has
\begin{equation}
\frac{\mbox{d}u}{\mbox{d}r} = 
\frac{1}{D({r})} \left( \frac{\sqrt{\gamma^2 - D({r})} - |\gamma|}{\sqrt{\gamma^2 - D({r})}}\right).
\end{equation}
A direct application of the L'H\^opital rule shows that
\[
\lim_{r \rightarrow r_{c}}\frac{\mbox{d}u}{\mbox{d}r}=-\frac{1}{2\gamma^2}.
\]
Consequently $u(r)$ does not diverge for  $r \in [r_{\star},r_{c}]$. In other words
\begin{equation}\label{eq:EstimateSubextremal1}
| u(r_{c})-u_{\star}| \leq k^2 
\end{equation}
for some constant $k$.

\medskip
\noindent
\textbf{Analysis for the region with $r>r_{c}$.} First let us found
an appropriate estimate for the integral \eqref{UCrossingHorizon}  for
$r>r_{c}$ and show that the coordinate $u$ acquires a finite value at
$\mathscr{I}$.
\noindent Using that $D(r)\leq0$ for $r>r_{c}$, we observe that
\begin{equation}
\sqrt{\gamma^2-D(r)} \geq |\gamma|.
\label{obs1Sub}
\end{equation}

\noindent Recall that according to the initial data 
given in Section \ref{InitialDataSdS} 
 one has $\gamma= \sqrt{D}_{\star} \neq 0$.
 Exploiting the last observation we find an estimate for
$|u|$ as follows:

\begin{eqnarray}
 |u(r) - u(r_{c})| = \left|\int ^{r}
 _{r_{c}} \frac{1}{D(\bar{r})} \left( 1 -
 \frac{|\gamma|}{\sqrt{\gamma^2 - D(\bar{r})}} \right)
 \mbox{d}\bar{r}\right| \nonumber
 \\ \leq \int ^{r} _{r_{c}} \frac{1}{|D(\bar{r})|} \left|1
 - \frac{|\gamma|}{\sqrt{\gamma^2 - D(\bar{r})}} \right|
 \mbox{d}\bar{r}. \label{estimatingIntegral1Sub}
\end{eqnarray}

\noindent To estimate the last integral of expression
\eqref{estimatingIntegral1Sub} observe that inequality
 \eqref{obs1Sub} can be written
as $ 1 \geq{|\gamma|}/{\sqrt{\gamma^2-D(r)}},$ equivalently \[1
-\frac{|\gamma|}{\sqrt{\gamma^2-D(r)}} \geq 0, \] 
thus
\[
\frac{1}{|D(r)|} \left| 1- \frac{|\gamma|}{\sqrt{\gamma^2-D(r)}}\right|
= \frac{1}{|D(r)|} \left( 1-
\frac{|\gamma|}{\sqrt{\gamma^2-D(r)}}\right).
\]

\noindent Using again  inequality \eqref{obs1Sub}, written as $ 1 \leq
          {\sqrt{\gamma^2-D(r)}}/{|\gamma|} $ and taking into account
          that $D(r)\leq 0$, we get the following

\begin{eqnarray*}
&& \frac{1}{|D(r)|} \left( 1-
\frac{|\gamma|}{\sqrt{\gamma^2-D(r)}}\right) \leq \frac{1}{|D(r)|}
\left( \frac{\sqrt{\gamma^2 - D(r)}}{|\gamma|}-
\frac{|\gamma|}{\sqrt{\gamma^2-D(r)}}\right) = \frac{1}{|D(r)|}
\frac{\gamma^2 - D(r) -\gamma^2}{|\gamma|\sqrt{\gamma^2 -D(r)}} \\
&& \hspace{5cm}=\frac{1}{|D(r)|} \frac{|D(r)|}{|\gamma|\sqrt{\gamma^2-D(r)}} =
\frac{1}{|\gamma|\sqrt{\gamma^2 -D(r)}}.
\end{eqnarray*}

\noindent Finally, using again inequality \eqref{obs1Sub}, we have that
\[
\frac{1}{\sqrt{\gamma^2-D(r)}} \leq \frac{1}{|\gamma|},
\]
\noindent thus,

\[
\frac{1}{|\gamma|\sqrt{\gamma^2 -D(r)}} \leq \frac{1}{\gamma^2}.
\]

\noindent Following the last chain of inequalities one can estimate
the integral \eqref{estimatingIntegral1Sub} as follows

\begin{equation}\label{FirstEstimateSub}
|u(r) - u(r_{c})| =\left|
\int^{r}_{r_{\star}} \frac{1}{D(\bar{r})} \left( 1 -
\frac{|\gamma|}{\sqrt{\gamma^2 - D(\bar{r})}} \right)\mbox{d}\bar{r} \right|
 \leq  \int
^{r} _{r_\star} \frac{1}{\gamma^2} \mbox{d}\bar{r}  =
\frac{r-r_{\star}}{\gamma^2}.
\end{equation}

\noindent  Using equation \eqref{LocationHorizon2} and the
triangle inequality one can verify that
 the location of the cosmological horizon is
constrained by  \[1-\frac{1}{\sqrt{3}}\leq r_{c}\leq 1+
\frac{1}{\sqrt{3}}.\]
Now, consider
\[
r_{\bullet}>\max\{1,r_{c}\},
\]
and let  $u_{\infty} \equiv
\lim_{r\rightarrow \infty}u$. Using this notation we can
estimate $|u_{\infty}-u(r_{\bullet})|$ as follows

\[
|u_{\infty} - u(r_{\bullet})| =\left| \int^{\infty}_{r_{\bullet}}
 \frac{1}{D(\bar{r})}
\left( 1 - \frac{|\gamma|}{\sqrt{\gamma^2 - D(\bar{r})}} \right)\mbox{d}\bar{r}
 \right| \leq  
\int^{\infty}_{r_{\bullet}} \left| \frac{1}{D(\bar{r})} 
\left( 1 - \frac{|\gamma|}{\sqrt{\gamma^2 - D(\bar{r})}} \right)\right|\mbox{d}\bar{r} \nonumber 
\]
\begin{equation}\label{estimateIntegral2Sub}
  \leq \int ^{\infty}_{r_\bullet} \frac{1}{|D(\bar{r})|} +
  \frac{1}{|D(\bar{r})|}\left|
  \frac{|\gamma|}{\sqrt{\gamma^2-D(\bar{r})}} \right| \mbox{d}\bar{r}
  \leq \int ^{\infty} _{r_\bullet} \frac{2}{|D(\bar{r})|}
  \mbox{d}\bar{r}.
\end{equation}

\noindent In the last inequality we have used equation \eqref{obs1Sub}, in the
form ${|\gamma|}/{\sqrt{\gamma^2 - D(r)}} \leq 1$.  Using the
functional form for $D(r)$ in the subextremal case, one has that, for
$r>r_{\bullet}$

\[
|D(r)|=-D(r) = \frac{1}{r}(r^3-r+m)>r^2-1.
\]
Consequently,
\[
\frac{1}{|D(r)|}< \frac{1}{r^2-1}.
\]
Integrating the last inequality we get

\[\int_{r_{\bullet}}^{r}\frac{2}{|D(\bar{r})|} \mbox{d}\bar{r} < \ln 
\bigg|\frac{1-r}{1+r}\bigg|^{r}_{r_{\bullet}}.
\]
Thus,
\begin{equation}\label{SecondEstimateSub}
|u_{\infty} - u(r_{\bullet})| <  \ln \bigg|
\frac{1+r_{\bullet}}{1-r_{\bullet}}\bigg| .
\end{equation}

\noindent   Consequently, using expressions
\eqref{FirstEstimateSub}, \eqref{SecondEstimateSub} and the triangle
inequality one obtains

\begin{equation}\label{eq:EstimateSubextremal2}
 |u_{\infty} - u(r_{c})| \leq |u_{\infty}-u(r_{\bullet})|
 +|u(r_{\bullet})-u(r_{c})|< \ln \bigg|
\frac{1+r_{\bullet}}{1-r_{\bullet}}\bigg|
+  \frac{r_{\bullet}-r_{c}}{\gamma^2} < \infty. 
\end{equation}
Finally from expressions \eqref{eq:EstimateSubextremal1} and \eqref{eq:EstimateSubextremal2} one concludes that

\begin{equation}\label{eq:EstimateSubextremal3}
 |u_{\infty} - u_{\star}| \leq |u_{\infty}-u(r_{c})|
 +|u(r_{c})-u(r_{\star})|< \ln \bigg|
\frac{1+r_{\bullet}}{1-r_{\bullet}}\bigg|
+  \frac{r_{\bullet}-r_{c}}{\gamma^2} + k^2 < \infty. 
\end{equation}

\begin{remark}
{\em Summarising, the analysis in the previous paragraphs shows that
  the conformal geodesics with $r_\circledast<r_\star\leq r_c$ cross the 
Cosmological horizon at
  $r=r_c$ and eventually escape to the conformal boundary. In
  particular, the curves do not touch the asymptotic points
  $\mathcal{Q}$ and $\mathcal{Q'}$ for which $u=\infty$ and $u=-\infty$.}
\end{remark}

\subsubsection{Conformal geodesics with $r_b\leq r_*< r_\circledast$}
\label{subsec:r<r*}
Direct evaluation of equation \eqref{rprimeprime} at $\tilde{\tau}=0$
for $r_b\leq r_\star<r_\circledast$ shows that $r''_\star<0$
so that $r_\star$ is a local maximum of $r$ ---that is,
$r'_\star<0$. Hence, the curves with $r_b\leq
r_\star<r_\circledast$ have $r$ initially decreasing and one
has
\[
r' =-\sqrt{\gamma^2-D( r)}.
\]
From the above relation we deduce that the proper time $\tilde\tau$ of
the curves is given by
\begin{equation}
 \tilde\tau=\int_{ r}^{r_\star}\sqrt{\frac{r_\star
     \bar{r}}{(r_\star-\bar{r})\big(M-r_\star \bar{r}(\bar{r}+r_\star)\big)}}
 \mbox{d}\bar{r}.
\label{eq:ds-propertime2}
\end{equation}
Lemma \ref{Lemma:RootsNonExtremalSdS} tells us that if
$r_\star<r_\circledast$ one has that
$\alpha_-<0<r_\circledast<\alpha_+$ and therefore for any
$0<r<r_\circledast$ it holds that $\alpha_-<r<\alpha_+$. Hence
$Q(r)<0$ implies that
\[
M-r_\star r(r+r_\star)>0\;,\qquad 0<r<r_\circledast.
\]
This means that the integrand in equation \eqref{eq:ds-propertime2} is
real for any $r$ in $[0,r_\circledast]$. In addition, one has
that
\[
\frac{r_\star r}{(r_\star-r)\big(M-r_\star
  r(r+r_\star)\big)}\leq\frac{r_\star^2}{(r_\star-r)}\;,\qquad
0<r<r_\circledast.
\]
Consequently, one can conclude that
\[
\tilde\tau=\int_{ r}^{r_\star}\sqrt{\frac{r_\star
    \bar{r}}{(r_\star-\bar{r})\big(M-r_\star \bar{r}(\bar{r}+r_\star)\big)}}
 \mbox{d}\bar{r}\leq
r^2_\star \int_{ r}^{r_\star}\frac{\mbox{d}\bar{r}}{\sqrt{r_\star
    -\bar{r}}}=2r^2_\star\sqrt{r_\star- r}.
\]
This means that $\tilde\tau$ remains finite for any value of $r$
 in the interval $[0,r_\star ]$. In particular the geodesics reach
the singularity $ r=0$ in a finite value of the proper time
$\tilde\tau$.

\medskip
\noindent
\textbf{Behaviour across the black hole horizon at $r=r_b$.} To
complete the discussion of this class of conformal geodesics we now
show that the advanced time coordinate $v$ can be used to follow the
curves across the horizon at $r=r_b$ and that the value of $v$ remains
finite as $r\rightarrow 0$. Given that $v'$ is finite at $r=r_b$ ---cf. the limit in equation
\eqref{AdvancedCoordinateBlackHoleHorizon}--- it follows that the
coordinate $v$ has a finite value at $r=r_b$. Moreover, it can be
readily verified that
\[
\lim_{r\rightarrow 0} v' =0.
\]
Thus, the coordinate $v$ acquires a finite limit $v_\lightning$ as
$r\rightarrow 0$. 

\begin{remark}
{\em Summarising, the conformal geodesics with $r_b\leq r_*< r_\circledast$
reach the black hole horizon at $r=r_b$ in a finite amount of proper
time and fall into the singularity at $r=0$ also in a finite proper
time. Further, the curves remain away from the asymptotic points
$\mathcal{Q}$ and $\mathcal{Q}'$ for which $v=\infty$ and $v=-\infty$.  }
\end{remark}

\subsubsection{Conformal geodesics through the bifurcation sphere of
  the cosmological horizon}
\label{CoverConformalBoundarySubextremal}

As shown in previous Sections the conformal geodesics with initial
data $r_{\star}>r_{\circledast}$ arrive at the conformal boundary at a
finite value of the unphysical proper time $\tau$. However it is
possible that the conformal geodesics accumulate in certain
region of $\mathscr{I}$. In order to see that this is not the case we
analyse the behaviour of the conformal geodesics starting at the
bifurcation sphere of the horizon at $r=r_c$.  

In order to proceed with the analysis, it is convenient to
introduce Kruskal type coordinates. First, recall that the
Schwarzschild-de Sitter metric can be written as
\[
\tilde{\bmg}=\tilde{\bml}-r^2\bm\sigma,
\]
where the induced metric $\tilde{\bml}$ on the quotient manifold
$\tilde{\mathcal{M}}/SO(3)$ as defined in equation
\eqref{eq:2-l}. Introducing \emph{Kruskal} coordinates via

\[
U =\frac{1}{2} \exp(bu), \qquad V= \frac{1}{2}\exp(bv),
\]
where $u$ and $v$ are the Eddington-Finkelstein null 
coordinates as defined in equation \eqref{eq:ed-fink} and 
$b$ is a constant which  can be freely chosen. 
In this coordinate system the metric
$\tilde{\bml}$ reads 
\[
\tilde{\bml}=G(r)(\mathbf{d}U\otimes \mathbf{d}V +
\mathbf{d}V\otimes \mathbf{d}U),
\]
with
\[
G(r)\equiv \frac{D(r)}{b^2} \exp(-2b\newrbar[-1pt][-4pt](r)),
\]
where $\newrbar[-1pt][-3.7pt](r)$ is the tortoise coordinate as
defined in equation \eqref{tortoise}. A straightforward computation
using \eqref{eq:DrExplicitRoots} taking into account that 
$r_{-}<0<r_{b}<r_{c}$ renders
\begin{multline}\label{TortoiseExplicitFunction}
\newrbar[-1pt][-3.7pt](r)=
 \frac{|r_{-}|}{(r_{c}+|r_{-}|)(r_{b}+ |r_{-}|)} \ln(r+ |r_{-}|)
+ \frac{r_{b}}{(r_{b}+|r_{-}|)(r_{c}-r_{b})}\ln |r-r_{b}|  
\\ -\frac{r_{c}}{(r_{c}+ |r_{-}|)(r_{c}-r_{b})}\ln|r-r_{c}|.
\end{multline}
One can verify that $\lim_{r \to \infty} \newrbar[-1pt][-3.7pt](r)=0$
and that 
\begin{equation}
G(r)=\frac{1}{b^2r}(r +
|r_{-}|)^{A_{n}}(r-r_{b})^{A_{b}}(r-r_{c})^{A_{c}},
\label{KruskalG}
\end{equation}
where $A_{n}, A_{b},A_{c}$ are constants defined via
\begin{equation*}
A_{n}\equiv 1 -\frac{2b|r_{-}|}{(r_{c}+|r_{-}|)(r_{b}+|r_{-}|)},\qquad
A_{b}\equiv 1 -\frac{2b r_{b}}{(r_{c} -r_{b})(r_{b}+|r_{-}|)}, \qquad
A_{c}\equiv 1 +\frac{2b r_{c}}{(r_{c} -r_{b})(r_{c}+|r_{-}|)}.
\end{equation*}
As discussed in \cite{BazFer85} one cannot construct coordinates that are
regular in the neighbourhood of the Cosmological and black hole horizons
simultaneously.
 Since in this Section we
are only interested in analysing the conformal geodesic starting at
$r_{\star}=r_{c}$ we observe that setting
\[
b \equiv -\frac{1}{2r_{c}}(r_{c}-r_{b})(r_{c} + |r_{-}|),
\]
so that $A_{c}=0$, one gets
\[
\tilde{\bml}=\frac{1}{b^2r}(r+
|r_{-}|)^{A_{n}}(r-r_{b})^{A_{b}}(\mathbf{d}U\otimes \mathbf{d}V +
\mathbf{d}V\otimes \mathbf{d}U).
\]
Introducing a further change of coordinates $ T= U + V$ and $\Psi =
V-U$ one obtains
\begin{equation}\label{QuotientMetricKruskal}
\tilde{\bml}=\frac{1}{2b^2r}(r +
|r_{-}|)^{A_{n}}(r-r_{b})^{A_{b}}(\mathbf{d}T\otimes\mathbf{d}T -\mathbf{d}\Psi\otimes\mathbf{d}\Psi),
\end{equation}
where the coordinates $T$ and $\Psi$ are related to $r$ and $t$ via
\begin{eqnarray}
T(r,t)=\cosh t \exp(b \hspace{1mm}\newrbar[-1pt][-3.7pt](r) ), \qquad
\Psi(r,t) =\sinh t \exp (b \hspace{1mm} \newrbar[-1pt][-3.7pt](r)).
\end{eqnarray}
A straightforward computation renders the additional relation
\begin{equation}\label{eg:ImplicitFunction-r-in-terms-of-T-Psi}
T^2 - \Psi^2 = (r + |r_{-}|)^{k_{n}}(r-r_{b})^{k_{b}}(r-r_{c}),
\end{equation}
with $k_{n} \equiv 1 - A_{n}<0$ and $k_{b}\equiv 1-A_{b}<0$.  Observe that
the metric \eqref{QuotientMetricKruskal} is regular at $r_{c}$.
Moreover, notice that, since
\[
\lim_{r \to r_{c}}\newrbar[-1pt][-3.7pt](r)= \infty
\]
and $b<0$ then  \[\lim_{r \to r_{c}}T=0, \qquad \lim_{r \to r_{c}}\Psi=0.\]
Consequently, the conformal geodesic starting at the bifurcation sphere
in these coordinates correspond to the geodesic with initial position 
$(T_{\star},\Psi_{\star}) = (0,0)$. Consistent with the
discussion of Sections 
\ref{sec:FormulaeWarpedProductSpaces}-\ref{InitialDataSdS} 
we consider curves with no evolution
in the angular coordinates and  $\beta=0$, 
so in these coordinates, the
geodesic equations  read

\begin{eqnarray}\label{eq:GeoEqKruskal}
&& T'' + \frac{\partial \ln G}{\partial \Psi} T'\Psi' +
  \frac{1}{2}\frac{\partial \ln G}{\partial T} (T'^2 + \Psi'^2)=0,\\ &&
  \Psi'' + \frac{\partial \ln G}{\partial T} T'\Psi' +
  \frac{3}{2}\frac{\partial \ln G}{\partial \Psi} \Psi'^2
  -\frac{1}{2}\frac{\partial \ln G}{\partial \Psi} T'^2 =0,
\end{eqnarray}
where as before $'$ denotes a derivative respect to the physical
proper time $\tilde{\tau}$.  Now, consider the curve with initial conditions
\begin{equation}\label{eg:DataGeoKruskal}
 T_{\star}=0, \qquad \Psi_{\star}=0, \qquad T'_{\star} \neq 0, \qquad \Psi'_{\star}=0,
\end{equation}
and recall that in the asymptotic region the curves with constant $t$
approach the bifurcation sphere ---see Figure
\ref{fig:SdSConstantCurves}.  Moreover, observe that
 in the $(T,\Psi)$-coordinate
system the curve with $t=0$ 
is described by $T=\exp(b \hspace{1mm}
\newrbar[-1pt][-4pt](r))$ and $\Psi=0$. 
In the subsequent discussion
 we show that
this curve is the required geodesic with the initial data given in
equation \eqref{eg:DataGeoKruskal}.  Assuming $\Psi=0$
equations
\eqref{eq:GeoEqKruskal} render the conditions
\begin{eqnarray}\label{eq:ConsistentGeoKruskal}
&& T'' + \frac{1}{2}\frac{\partial \ln G}{\partial T} T'^2
  =0,\label{eq:ConsistentGeoKruskal1} \\ && \frac{\partial \ln
    G}{\partial \Psi}\bigg|_{\Psi=0}
  =0.\label{eq:ConsistentGeoKruskal2}
\end{eqnarray}
Observe that equation \eqref{eq:ConsistentGeoKruskal1} can be
rewritten as an equation for $r$ while equation
\eqref{eq:ConsistentGeoKruskal2} has to be verified with the implicit
expression for $G(T,\Psi)$. Using the chain rule we have
\[
\frac{\partial \ln G}{\partial \Psi}\bigg|_{\Psi=0}=
\bigg(\frac{1}{G}\frac{\partial G}{\partial
  r}\bigg)\bigg|_{r=r_{c}}\frac{\partial r}{ \partial
  \Psi}\bigg|_{\Psi=0}.
\]
Using the implicit function theorem and equation
\eqref{eg:ImplicitFunction-r-in-terms-of-T-Psi} one can compute
$\partial r/\partial \Psi$ to obtain
\[
\frac{\partial r}{ \partial \Psi}= \frac{-2\Psi}{(r + |r_{-}|
  )^{k_{n}} (r-r_{b})^{k_{b}} \bigg( 1 + \displaystyle\frac{k_{c}(r-r_c)}{r-r_{b}}
  +\displaystyle\frac{k_{c}(r-r_{c})}{r+|r_{-}|} \bigg)}.
\]
Notice that $G(r_{c})\neq 0$ and ${\partial G}/{\partial
  r}\big|_{r=r_{c}} \neq 0$ while ${\partial r}/{ \partial \Psi}$
vanishes for $\Psi=0$ as long as $r \neq r_{b}$. 
Since $r(\tilde{\tau})\geq r_{c}>r_b$  one concludes
that equation \eqref{eq:ConsistentGeoKruskal2} is satisfied.
To verify that this curve corresponds also the geodesic with initial
data \eqref{eg:DataGeoKruskal} observe that using the chain rule and the
definition of the tortoise coordinate one has
\begin{equation}\label{Chainrule-T-data-Bifurcation-Sphere}
\frac{\mbox{d}T}{\mbox{d}\tilde{\tau}}=b T(r) \frac{\mbox{d}
  \newrbar[-1pt][-3.7pt]}{\mbox{d}r}\frac{\mbox{d}r}{\mbox{d}\tilde{\tau}}=
\frac{b T(r) }{D(r)}\frac{\mbox{d}r}{\mbox{d}\tilde{\tau}}.
\end{equation}
Since at $\tilde{\tau}=0$ one has $r=r_{\star}=r_{c}$ then
\[
T'_{\star}= \lim_{r\to r_{c}} \frac{bT(r)}{D(r)}\frac{\mbox{d}r}{\mbox{d}\tilde{\tau}}.
\]
To analyse this limit observe that
\[
T(r)= \exp (b \newrbar[-1pt][-3.7pt] )= \sqrt{ (r+
|r_{-}|)^{k_{n}}(r-r_{b})^{k_{b}}(r-r_{c})}.
\]
Using equations \eqref{ReducedEquation} and \eqref{eq:drm1} with
$\beta=0$ and $r_{\star}=r_{c}$ one has
\[
\frac{\mbox{d}r}{\mbox{d}\tilde{\tau}}=
\sqrt{\frac{(r-r_c)(r-\alpha_{-}(r_{c}))(r-\alpha_{+}(r_{c}))}{r}}
\]
where
\[
\alpha_{\pm}(r_{c}) = \frac{r_{c}}{2} \Bigg(-1 \pm \sqrt{1 + \bigg(
  \frac{2 r_{\circledast}}{r_{c}}\bigg)^3} \hspace{1mm} \Bigg),
\]
substituting equation \eqref{eq:DrExplicitRoots} one obtains
\[
\frac{T(r)}{D(r)}\frac{\mbox{d}r}{\mbox{d}\tilde{\tau}}=-\sqrt{r
  (r-\alpha_{-}(r_{c}))(r-\alpha_{+}(r_{c}))(r+
  |r_{-}|)^{k_{n}-2}(r-r_{b})^{k_{b}-2}}.
\]
Evaluating the last expression at $r=r_{c}$ and using equation
\eqref{Chainrule-T-data-Bifurcation-Sphere} one concludes that
 $T'_{\star} \neq 0$. Moreover one can verify that $T'_{\star}$
 is positive and finite. Finally,
notice that
\[
\lim_{r \to \infty}T(r)=1.
\]

\begin{remark}
{\em The  analysis in the previous paragraphs shows that conformal
  geodesics with $r_\star=r_c$ intersect the conformal boundary at
$(T,\Psi)=(1,0)$. In terms of the retarded Eddington-Finkelstein 
null coordinate
$u$ this corresponds to the condition $u=0$. Accordingly, these
conformal geodesics bisect the Cosmological region of the spacetime,
and in particular also the conformal boundary.}
\end{remark}

\subsubsection{Conformal geodesic through the bifurcation sphere of
  the black hole horizon}
\label{CoverConformalBoundarySubextremalBifurcationBlackhole}
The analysis of the previous Section can be adapted, \emph{mutatis
  mutandi}, to the case of conformal geodesics starting at $r_\star
=r_b$. In this case the function $b$ in the function $G(r)$ as given
by equation \eqref{KruskalG} is set to
\[
b \equiv \frac{1}{2r_{b}}(r_{c}-r_{b})(r_{c} + |r_{-}|),
\]
so that 
\[
\tilde{\bml}=\frac{1}{2b^2r}(r+
|r_{-}|)^{A_{n}}(r-r_{c})^{A_{c}}(\mathbf{d}T\otimes \mathbf{d}T-
\mathbf{d}\Psi\otimes \mathbf{d}\Psi).
\]
As $b>0$ and given that $\lim_{r\rightarrow
  r_b}\newrbar[-1pt][-3.7pt]=-\infty$, it follows that
\[
\lim_{r\rightarrow r_b} T =\lim_{r\rightarrow r_b} \Psi =0. 
\]
Using the same methods employed in the analysis of the geodesics with
$r_\star=r_c$, it can be readily shown that the curves with initial conditions given by
\[
T_\star =0, \quad \Psi_\star=0,\quad T'_\star \neq 0, \quad
\Psi'_\star =0,
\]
and such that 
\[
\Psi(\tilde{\tau}) =0, \qquad \tilde{\tau}\geq 0,
\]
are conformal geodesics. Moreover, it can be verified that
\[
\lim_{r\rightarrow 0} T = T_{\lightning}\equiv \sqrt{|r_-|^{k_n} r_b r_c^{k_c}} \neq 0.
\]

In what follows, we will denote by $(u_{\lightning},v_{\lightning})$
the value of the null coordinates $(u,v)$ associated to
$(T,\Psi)=(T_\lightning,0)$. Notice that, in particular $u_\lightning
= v_\lightning$.

\begin{remark}
{\em The analysis of the previous paragraphs shows that the black hole
region is bisected by timelike conformal geodesics emanating from the
bifurcation sphere at $r=r_b$.} 
\end{remark}

\subsection{Explicit expressions in terms of elliptic functions}
 The integral of equation \eqref{GeneralFormulaForTauTilde} can be
written in terms of special functions.  In particular, using
formulae 258.13 and 339.01 given in \cite{ByrFri13}, with parameters \[a=r_{\star}, \quad b=\alpha_{+}, \quad c=0, \quad d=\alpha_{-},\]
 one can
rewrite the integral \eqref{rGeodesicIntegralCriticalCurve} in terms of elliptic
functions. Using these formulae one concludes that the substitution
\[
\mbox{sn}^{2} \bar{w}= \bigg(\frac{\alpha_{+}-\alpha_{-}}{r_{\star}-\alpha_{-}}\bigg)\bigg(\frac{\bar{r}-r_{\star}}{\bar{r}-\alpha_{+}}\bigg),
\]
leads to
\begin{equation}\label{EllipticFunctionSubextremal}
\tilde{\tau}=\frac{2r_{\star}}{\alpha^2\sqrt{r_{\star}(\alpha_{+}-\alpha_{-})}}
\bigg( \kappa^2 w + (\alpha^2-\kappa^2)\Pi[\phi,\alpha^2,\kappa]\bigg),
\end{equation}
where $\mbox{sn}$ denotes the Jacobian elliptic function,
$\Pi[\phi,\alpha^2,\kappa]$ is the incomplete elliptic integral of the
third kind and
\begin{eqnarray*}
 \mbox{sn}^{2} w \equiv
 \bigg(\frac{\alpha_{+}-\alpha_{-}}{r_{\star}-\alpha_{-}}\bigg)
\bigg(\frac{r-r_{\star}}{r-\alpha_{+}}\bigg),
 \qquad \qquad \alpha^2 \equiv
 \frac{r_{\star}-\alpha_{-}}{\alpha_{+}-\alpha_{-}}, \\ \kappa^2
 \equiv
 \frac{\alpha_{+}(r_{\star}-\alpha_{-})}{r_{\star}(\alpha_{+}-\alpha_{-})},
 \qquad \qquad \phi \equiv \arcsin (\mbox{sn} w).
\end{eqnarray*}
In particular, it follows from the previous expressions and the
general theory of elliptic functions that  
 $\tilde{\tau}(r,r_{\star})$ as defined by the right hand
side of equation \eqref{EllipticFunctionSubextremal}, is an \emph{analytic
function of its arguments} ---see e.g. \cite{Law89}.  Notice that in
 Sections \ref{InitialDataSdS}-\ref{subsec:r<r*} we
have already analysed the behaviour of $\tilde{\tau}(r,r_{\star})$ 
 by constructing suitable 
estimates. Using equation \eqref{eq:value-of-bartau} one can write $\tau=2 \tanh
(\tilde{\tau}(r,r_{\star})/ 2)$ and integrate
\eqref{EqNullu} and \eqref{EqNullv}. From this
discussion it follows that $u(\tau,r_{\star})$ and $v(\tau,r_{\star})$
are \emph{analytic functions of their arguments}.  

\begin{remark}
\emph{
 Observe that due to the
 periodicity in the spatial direction of the maximal extension of the spacetime ---see  Figure \ref{Fig:SubSdS}, to analyse the behaviour of a congruence of conformal geodesics  considering  initial data with $r_b \leq r_\star \leq r_c$ is sufficient.
}
\end{remark}


\subsection{Analysis of the behaviour of the conformal deviation equations}

\label{AnalysisConfDeviationEquationsSubextremal}
 In Section \ref{Section:ConformalDeviationEquations} and
 \ref{sec:FormulaeWarpedProductSpaces}
 it was show how the requirement that the tangent vectors  of 
the congruence $\dot{\bmx}$ and the deviation vector for the congruence
$\bmz$ remain linearly independent can be studied
 by analysing a single scalar
quantity $\tilde{\omega}$. 
The purpose of this Section is to discuss the evolution of 
$\tilde{\omega}$ to analyse  the potential formation of
conjugate points in the congruences of conformal geodesics 
constructed in the previous Sections. 

\medskip
To start the discussion observe that 
in the subextremal Schwarzschild-de Sitter case one has $\beta=0$, consequently, equation \eqref{eq:gdeviationDiso} takes the simpler form
\begin{equation}
 \frac{\mbox{d}^2\tilde{\omega}}{\mbox{d}\tilde\tau^2}=\left(1+
 \frac{M}{ r^3}\right)\tilde{\omega}\;,\qquad
  r\equiv  r(\tilde\tau,r_\star).
 \label{eq:eq-ds}
 \end{equation}
We distinguish two possibilities according to whether
$r_\star=r_\circledast$ or $r_\star\neq r_\circledast$. 

\subsubsection{Conformal geodesics with $r_\star=r_\circledast$}
If $r_\star=r_\circledast$ then $ r=r_\circledast$ as shown in
Section \ref{subsec:circledast}. Hence, given that $r$ is a
constant, the differential equation \eqref{eq:eq-ds} with initial
conditions given by equation \eqref{eq:initial-omega} yields
\[
\tilde{\omega}(\tilde{\tau})=\frac{r_\circledast}{\rho_\circledast}\cosh
\left(\tilde\tau\sqrt{1+\frac{M}{r_\circledast^3}}\right)\;,\qquad r_\circledast=r(\rho_\circledast).
\]
 Using equations \eqref{eq:value-of-theta}-\eqref{eq:value-of-bartau}
and \eqref{eq:omega-rescaled} we obtain
\begin{equation}\label{OmegaCausticsCriticalCurve}
\tilde{\omega}(\tau)=r_\circledast\rho_\circledast\left(1-\frac{\tau^2}{4}\right)\cosh
\left(\sqrt{1+\frac{M}{r_\circledast^3}}\mbox{arctanh}\left(\frac{\tau}{2}\right)\right).
\end{equation}
This function is clearly non-zero for any value of
$\tau\in(-2,2)$ and thus no caustics develop along the curve defined
by $ r=r_\circledast$. We also deduce that
\[
\lim_{\tau\rightarrow\pm 2}\tilde{\omega}=\infty.
\]

\subsubsection{Conformal geodesics with $r_\star\neq r_\circledast$}
\label{sec:deviationEquationEstimates}
Observe that if $r_\star\neq
r_\circledast$, then either $r_{\star}<r_{\circledast}$ and the
conformal geodesics end at the singularity or
$r_{\star}>r_{\circledast}$ and the conformal geodesics reach the
conformal boundary $\mathscr{I}$. In any case one has that
\[
1+ \frac{M}{r}>1.
\]
The latter, together with equation \eqref{eq:gdeviationDiso} 
entails the differential inequality
\[
 \frac{\mbox{d}^2\omega}{\mbox{d}\tilde\tau^2}>\omega.
\]
It follows then that the scalars $\tilde{\omega}$ and $\omega\equiv
\Theta \tilde{\omega}$ satisfy
the inequalities
\[
 \omega\geq \varpi, \qquad \hat{\omega}\geq \Theta\varpi\;,
\]
where $\varpi$ is the solution of 
\[
 \frac{\mbox{d}^2\varpi}{\mbox{d}\tilde\tau^2}=\varpi\;,\qquad
\varpi(0,\rho_\star)=\frac{r_\star}{\rho_\star}\;,\qquad {\varpi}'(0,\rho_\star)=0.
\]
The solution to this last differential equation is given by
$\varpi=(r_\star/\rho_\star)\cosh\tilde\tau$. Using equations
\eqref{eq:value-of-theta}-\eqref{eq:value-of-bartau} we get the
inequality
\[
\omega\geq\left(1-\frac{\tau^2}{4}\right)\frac{r_\star}{\rho_\star}
\cosh\left(2\mbox{arctanh}\left(\frac{\tau}{2}\right)\right)=
\frac{r_\star}{\rho_\star}\left(1+\frac{\tau^2}{4}\right)>0.
\]
Consequently, one gets the limit
\[
\lim_{\tau\rightarrow\pm 2}{\omega}\geq\frac{2 r_\star}{\rho_\star}>0. 
\]
Hence, we conclude that the geodesics with $r_{\star}>r_{\circledast}$
which go to the conformal boundary $\mathscr{I}$ located at $\tau=2$
do not develop any caustic.  Now, for the case
$r_{\star}<r_{\circledast}$ in which the conformal geodesics fall into
the singularity it is sufficient to observe that

\[\tilde{\omega} \geq  \frac{r_{\star}}{\rho_{\star}}\mbox{cosh}(\tilde{\tau}).\]

Since the geodesics reach the singularity in a finite value of the
physical proper time $\tilde{\tau}$ and $\text{cosh}(\tilde{\tau})\geq2$
one conclude that the conformal geodesics falling into the singularity does not
form caustics as well.

\begin{figure}[t]
 \centering
\includegraphics[scale=0.8]{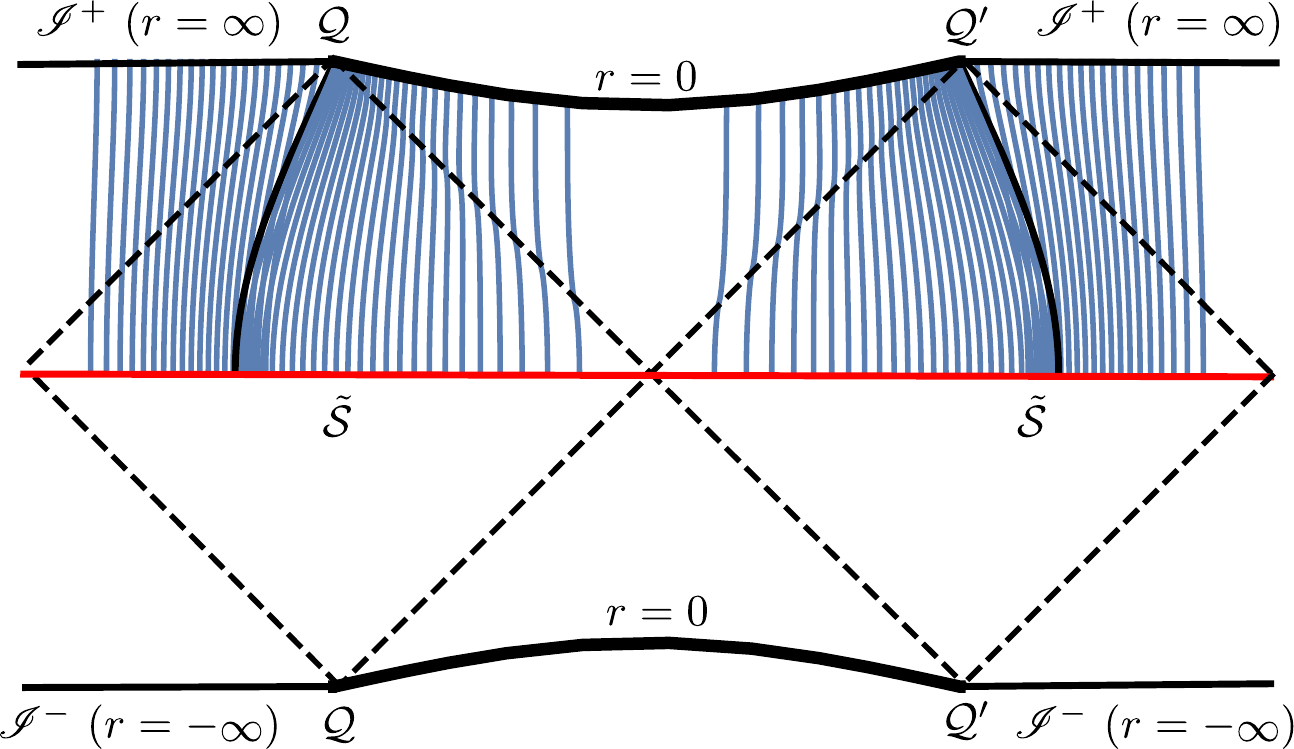}
 \caption{Numerical computation of the congruence
 of conformal geodesics given by Theorem \ref{Theorem:CGSdS} in the
 Penrose diagram of the Schwarzschild-de Sitter spacetime for $M=1/3$. The curves
 have been computed numerically using {\tt Mathematica}. The critical
 curves described by the condition $r=r_\circledast$ are depicted as
 solid black curves reaching the asymptotic points $\mathcal{Q}$ and $\mathcal{Q}'$. 
 Notice that the congruence is initially
 orthogonal to the initial hypersurface $\tilde{\mathcal{S}}$.}
\label{fig:cgSchdS}
 \end{figure}

\subsection{Conformal Gaussian coordinates in the subextremal
  Schwarzschild-de Sitter spacetime}
\label{Section:ConformalGaussianCoordinatesSdS}
In this Section we combine the results of the previous Sections to
show how the congruence of conformal geodesics defined by the initial
data in Section \ref{InitialDataSdS} covers the the whole maximal
extension of the subextremal Schwarzschild-de Sitter
spacetime. Accordingly, one can use this congruence to construct
global \emph{conformal Gaussian coordinates}. Because of the
periodicity in the spatial direction of the maximal extension of the
spacetime, it is only necessary to restrict the discussion to the
range $r_b\leq r_\star \leq r_c$.

\begin{figure}[t]
\centering
\includegraphics[scale=0.5]{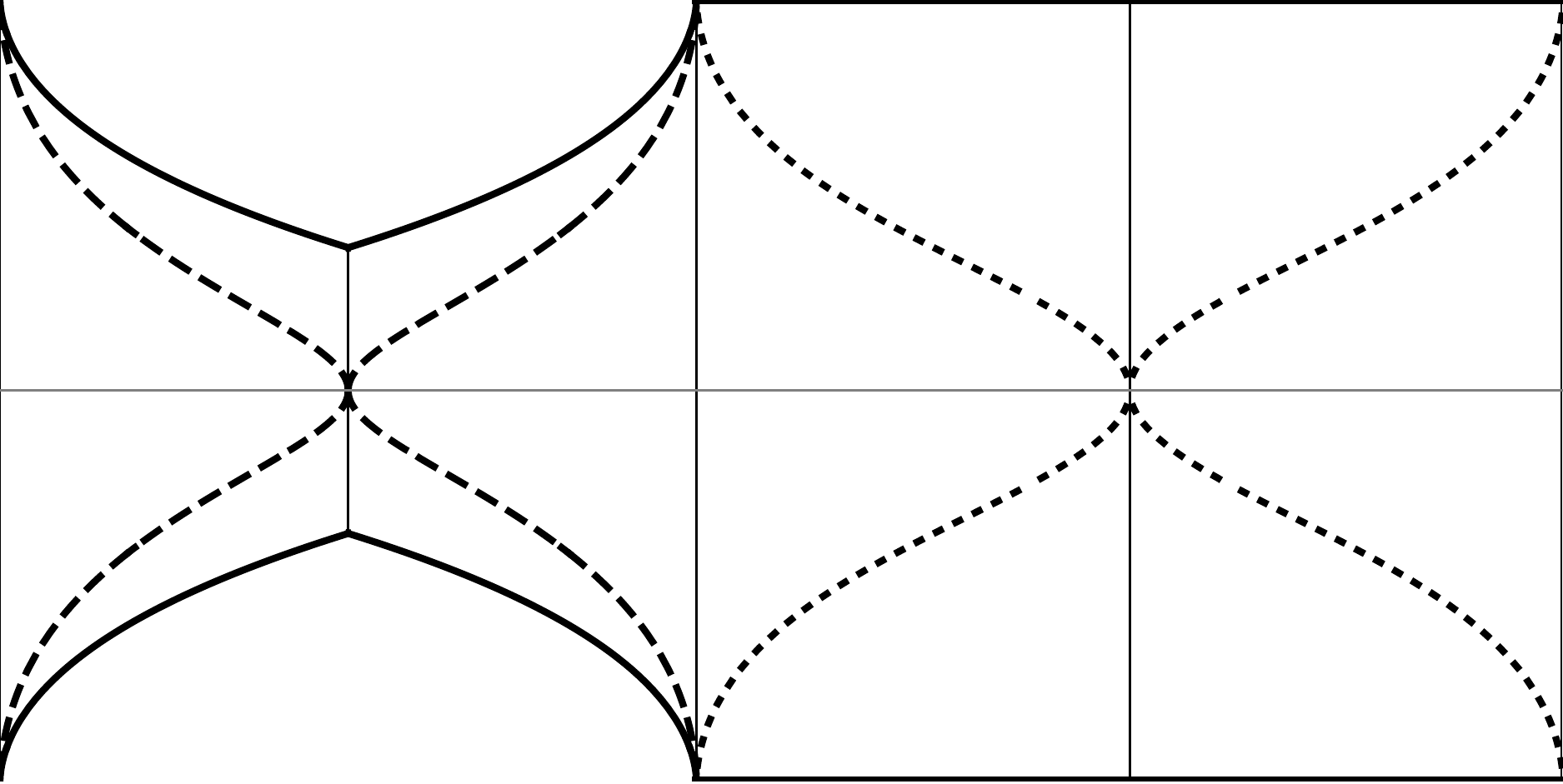}
\put(-5, -10){$\mathcal{Q'}$}
\put(-155, -10){$\mathcal{Q}$}
\put(-275, -10){$\mathcal{Q'}$}
\put(-5, 140){$\mathcal{Q'}$}
\put(-155, 140){$\mathcal{Q}$}
\put(-275, 140){$\mathcal{Q'}$}
\put(-162, 72){$r_{\circledast}$}
\put(-200, 72){$r_{b}$}
\put(-90, 72){$r_{c}$}
\put(-105, -12){$\mathscr{I}^{-}(\tau=2)$}
\put(-105, 140){$\mathscr{I}^{+}(\tau=2)$}
\put(4, 65){$\tau=0$}
\put(-220, 100){$r=0$}
\put(-220, 30){$r=0$}
\label{fig:GaussianSdS}
\caption{The Schwarzschild-de Sitter spacetime in conformal Gaussian
coordinates $(r_{\star},\tau)$ for $M=1/3$.  The horizontal line
represents the initial hypersurface $\tilde{\mathcal{S}}$ which is
parametrised by $r_{\star}$.  The vertical axis corresponds to the
unphysical proper time $\tau$. In particular, the location of the
black hole horizon in conformal Gaussian coordinates,
$(r_{b},\tau(r_{b},r_{\star}))$ with $r_{b}\leq r_{\star}\leq
r_{\circledast}$, corresponds to the dashed curve.  The thick
continuous line denotes the location of the singularity
$(0,\tau(0,r_{\star}))$ with $0\leq r_{\star}\leq
r_{\circledast}$. Similarly, the location of the cosmological horizon
in conformal Gaussian coordinates, $(r_{c},\tau(r_{c},r_{\star}))$
with $r_{\circledast}\leq r_{\star} \leq r_{c}$, is depicted in the
dotted curve.  The vertical lines correspond to the conformal
geodesics starting at the bifurcation spheres, $r_{\star}=r_{b}$ and
$r_{\star}=r_{c}$ respectively, and the critical conformal geodesic
$r_{\star}=r_{\circledast}$. The future and past conformal boundary
$\mathscr{I}^{\pm}$ are located at $\tau=\pm 2$ which are reached by
those conformal geodesics with initial datum $r_{\circledast}
<r_{\star}<r_{c}$.}
\end{figure}

\medskip
In what follows let denote by $\mbox{SdS}_I$ the region in the
conformal representation of the subextremal Schwarzschild-de Sitter
spacetime defined by the conformal factor $\Theta$ associated to the
congruence of conformal geodesics which is bounded by  the critical conformal
geodesic and the conformal geodesic passing by the bifurcation sphere
of the Cosmological horizon ---boundaries included. Similarly, we
define $\mbox{SdS}_{II}$ as the region bounded by the critical
geodesic and the geodesic passing through the bifurcation sphere of
the black hole horizon ---we exclude the former part of the boundary
and include the latter. 

\medskip
\noindent
\textbf{The region $\mbox{SdS}_I$.} For $r\in (0,\infty)$ and $u\in
(-\infty,\infty)$, let $z\equiv 1/r$ and $w\equiv \tanh u$. In terms
of these coordinates one has that
\[
\mbox{SdS}_I =\big\{ z\in [0,z_\circledast], \; w\in [-1,1]  \big\}.
\]
The analysis of the previous Sections shows that the map 
\[
(w,z): [0,2]\times [r_\circledast,r_c] \longrightarrow
[-1, 1]\times [0,z_\circledast]
\]
with
\[
w=\tanh u(\tau,r_\star), \qquad z=1/r(\tau,r_\star)
\]
as defined by the solutions to the conformal geodesics depends
analytically on its parameters. The analysis of the conformal geodesic
deviation equation implies that the Jacobian of the transformation is
non-zero for the given value of the parameters ---it can be readily
verified that the function
$\Theta\tilde{\omega}$ coincides with the Jacobian. Thus, it follows that
the inverse map
\[
(\tau,r_\star): [-1,1] \times
[0,z_\circledast]\longrightarrow [0,2]\times [r_\circledast,r_c]
\]
with
\[
\tau =\tau( \mbox{arctanh}\; w, 1/z), \qquad r_\star =r_\star(\mbox{arctanh}\; w, 1/z), 
\]
and $u\in\mathbb{R}$ is well defined and also an analytic function of its parameters. Thus,
ignoring angular coordinates, this inverse map defines a \emph{conformal Gaussian
  system of coordinates} in $\mbox{SdS}_I$. In particular, given any point in $\mbox{SdS}_I$,
there is a unique conformal geodesic passing through it. Thus, \emph{the
congruence of conformal geodesics covers the whole of $\mbox{SdS}_I$}. 

\medskip
\noindent
\textbf{The region $\mbox{SdS}_{II}$.} Let $y=\tanh v$. In terms of the coordinates
$(y,r)$ the region $\mbox{SdS}_{II}$ is described as
\[
\mbox{SdS}_{II} =\big\{ r\in(0,r_\circledast), \; y\in [-1,1] \big\}.
\]
As with the region $\mbox{SdS}_I$, the analysis from the previous
Sections shows that the map
\[
(y,r):[0,\tau_\lightning)\times [r_b,r_\circledast] \longrightarrow
[-1,1]\times (0,r_\circledast]
\]
is well defined and an analytic function of its parameters. The analysis of the conformal geodesic
deviation equation implies that the Jacobian of the transformation,
given by the function $\tilde{\omega}$, is
non-zero for the given value of the parameters. Thus, it follows that
the inverse map
\[
(\tau,r_\star): [-1,1]\times (0,r_\circledast]\longrightarrow  [0,\tau_\lightning)\times [r_b,r_\circledast]
\]
is well defined and also an analytic function of its parameters. Thus,
ignoring angular coordinates, this inverse map defines a \emph{conformal Gaussian
  system of coordinates} in $\mbox{SdS}_{II}$. In particular, \emph{the
congruence of conformal geodesics covers the whole of $\mbox{SdS}_{II}$}. 

\subsection{Summary of the analysis}\label{SummarySubExtremal}

The analysis of Section \ref{subsec:non-extremal} can be summarised in
the following proposition.

\begin{theorem}[\textbf{\em Conformal geodesics in the subextremal
 Schwarzschild-de Sitter spacetime}]
\label{Theorem:CGSdS}
The maximal extension of the subextremal Schwarzschild-de Sitter
spacetime can be covered by a non-singular congruence of conformal
geodesics. The existence of this congruence implies the existence of a
global system of conformal Gaussian coordinates in the spacetime.
\end{theorem}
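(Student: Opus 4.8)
The plan is to assemble into one coherent argument the pieces established in Sections~\ref{InitialDataSdS}--\ref{Section:ConformalGaussianCoordinatesSdS}. First I would fix once and for all the congruence under study: the conformal geodesics orthogonal to the time-symmetric slice $\tilde{\mathcal{S}}=\{t=0\}$ with the initial data \eqref{CGInitialData}, for which $\beta^2=0$, $\Theta_\star=1$, $\dot\Theta_\star=0$, so that $\Theta=1-\tau^2/4$ as in \eqref{eq:value-of-theta} and $\mathscr{I}$ is reached where $\Theta$ vanishes, at $\tau=\pm2$. Exploiting the spatial periodicity of the maximal extension it suffices to follow the curves with $r_b\le r_\star\le r_c$, and to distinguish the generic labels $r_\star\in(r_b,r_c)$ from the two exceptional ones $r_\star=r_b$ and $r_\star=r_c$, the latter being treated through the Kruskal-type coordinates of Sections~\ref{CoverConformalBoundarySubextremal}--\ref{CoverConformalBoundarySubextremalBifurcationBlackhole}.

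Second, I would invoke the qualitative classification of Section~\ref{Section:QualitativeSdS}. Curves with $r_\circledast<r_\star\le r_c$ have $r$ monotonically increasing, cross the cosmological horizon $r=r_c$ with $u$ finite there, and reach $\mathscr{I}$ at $\tau=2$ with $u$ finite by the estimates \eqref{eq:EstimateSubextremal1}--\eqref{eq:EstimateSubextremal3}; curves with $r_b\le r_\star<r_\circledast$ have $r$ monotonically decreasing, cross the black hole horizon $r=r_b$ with $v$ finite, and fall into the singularity $r=0$ in finite physical proper time with $v$ finite; the critical curve $r_\star=r_\circledast$ stays at $r\equiv r_\circledast$ and runs to the asymptotic points $\mathcal Q$, $\mathcal Q'$; and the two exceptional curves bisect the cosmological, respectively black-hole, regions. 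Taken together with their periodic copies, the images of these curves sweep out the whole maximal extension, and no portion of $\mathscr{I}$ is missed: the curve through the cosmological bifurcation sphere meets $\mathscr{I}$ at $u=0$, the generic ones at finite $u$ of either sign, and the critical curve at $u=\pm\infty$, so there is no accumulation.

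Third, I would prove the congruence is non-singular, i.e. free of conjugate points. The key reduction is that under the warped-product Ansatz the deviation is governed by the single scalar $\tilde\omega={\bmepsilon}_{\tilde{\bml}}(\tilde{\bmx}',\tilde{\bmz})$ of \eqref{eq:define-omega}, which for $\beta=0$ obeys \eqref{eq:eq-ds}. From the initial data \eqref{eq:initial-omega} and the elementary bound $1+M/r^3>1$ one gets $\tilde\omega''>\tilde\omega$ with $\tilde\omega_\star=r_\star/\rho_\star>0$ and $\tilde\omega'_\star=0$, hence $\tilde\omega\ge(r_\star/\rho_\star)\cosh\tilde\tau>0$ throughout the (finite or infinite) parameter range of each curve; for $r_\star=r_\circledast$ one has instead the explicit non-vanishing expression \eqref{OmegaCausticsCriticalCurve}, and $\omega=\Theta\tilde\omega$ stays positive up to $\tau=\pm2$. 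Since $\tilde\omega\neq0$ says precisely that $\tilde{\bmx}'$ and $\tilde{\bmz}$ remain linearly independent, no caustics form.

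Finally I would translate this into the coordinate statement. Using the elliptic-function representation \eqref{EllipticFunctionSubextremal} of $\tilde\tau(r,r_\star)$, hence of $u(\tau,r_\star)$ and $v(\tau,r_\star)$, the map $(\tau,r_\star)\mapsto(\tanh u,1/r)$ on $\mbox{SdS}_I$ (and $(\tau,r_\star)\mapsto(\tanh v,r)$ on $\mbox{SdS}_{II}$) is real-analytic, and its Jacobian equals $\Theta\tilde\omega$ (respectively $\tilde\omega$), which we have just shown never vanishes; by the inverse function theorem these are analytic diffeomorphisms onto their images, so every point of each region lies on a unique curve of the congruence, yielding the global conformal Gaussian system. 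Reinstating the suppressed angular coordinates and the periodic copies of $\mbox{SdS}_I$ and $\mbox{SdS}_{II}$ gives the statement for the full maximal extension. I expect the main obstacle to be not any single estimate but the bookkeeping of the second step: verifying that the three families together with the two bifurcation-sphere curves actually tile the Penrose diagram with no overlaps and no gaps, in particular ruling out accumulation of the congruence on $\mathscr{I}$ — which is exactly why the separate Kruskal-coordinate treatment of the $r_\star=r_b$ and $r_\star=r_c$ curves is indispensable.
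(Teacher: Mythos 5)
Your proposal is correct and follows essentially the same route as the paper: the theorem is proved there precisely by assembling the initial-data setup of Section~\ref{InitialDataSdS}, the case-by-case qualitative classification of Section~\ref{Section:QualitativeSdS} (including the separate Kruskal-coordinate treatment of the bifurcation-sphere curves), the caustic-freeness argument via the scalar $\tilde\omega$ satisfying $\tilde\omega''>\tilde\omega$ in Section~\ref{AnalysisConfDeviationEquationsSubextremal}, and the analyticity-plus-nonvanishing-Jacobian construction of the conformal Gaussian charts on $\mbox{SdS}_I$ and $\mbox{SdS}_{II}$ in Section~\ref{Section:ConformalGaussianCoordinatesSdS}. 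Your identification of the tiling/no-accumulation bookkeeping as the delicate point matches where the paper invests its effort.
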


A qualitatively accurate depiction of the congruence of conformal geodesics in the Penrose diagram of the Schwarzschild-de Sitter spacetime can be seen in Figure \ref{fig:cgSchdS}.

\section{Analysis of the conformal geodesics in the extremal \\
Schwarzschild-de Sitter case}
\label{eSdS}
In this Section, for concreteness we restrict our discussion to the
representation of the extremal Schwarzschild-de Sitter spacetime as a
whitehole so that there the singularity lies in the past and the
conformal boundary in the future ---see Figure \ref{Fig:eSdS} (a).

\medskip
Throughout this Section it is assumed that $M=2/(3\sqrt{3})$. In
this case, $D(r)$ vanishes at $r_{\mathcal{H}} \equiv
1/\sqrt{3}$ while $D(r)<0$ for $r_{\mathcal{H}} \in
\mathbb{R}^{+}/\{1/\sqrt{3}\}$. Hence for
$r\in \mathbb{R}^{+}/\{1/\sqrt{3}\}$, the coordinate
$r$ is a time coordinate while $t$ is a spatial one.
This means that the hypersurfaces $r=r_{\star}$ with
$r_{\star}\neq r_{\mathcal{H}}$ are spacelike while
those of constant $t$ are timelike ---see Figure \ref{fig:SdSConstantCurves}.
 Consequently, one purports
to construct a congruence of conformal geodesics by prescribing
initial data on hypersurfaces of constant
$r=r_{\star}$ with $r_{\star} \neq 0$ and
$r_{\star} \neq r_{\mathcal{H}}$. In addition, observe that
the critical curve located at $r_{\circledast}$ as defined in Section
\ref{subsec:circledast} coincides in this case
with the location of the Killing horizon $r_{\mathcal{H}}$.

\subsection{Basic setup} 
\label{InitialConditionsExt}
Similar to the subextremal case we set 
\begin{equation}\label{dotTheta-1}
 \Theta_\star=1.
\end{equation}
Hence \(\tilde{\bmbeta}_{*}= \Theta^{-1}_{\star}\mathbf{d}\Theta_{\star}
= 0 \) and we get $\beta_\star=0$ . This means that $\beta=0$ since
$\beta$ is a constant along a given conformal geodesic.  As in the
case of Section \ref{subsec:non-extremal}, this choice simplifies
the equations for the curve as these reduce to the equations of metric
geodesics. In the remainder of this Section let
\[
\tilde{\mathcal{S}}\equiv \{ r = r_\star\}\;,
\]
with $r_\star$ a constant $r_{\star}\neq 0$ and
$r_{\star} \neq 1/\sqrt{3}$ ---see Figure
\ref{Fig:SdSConstantCurves}.  As already pointed out, the above
hypersurface of the extremal Schwarzschild-de Sitter spacetime 
is spacelike. Particularising
equations \eqref{ReducedEquation} and \eqref{PhysicalNormalisation}
for the case $\beta=0$ we get
\begin{equation} 
r'_\star = \sqrt{\gamma^2 - D_\star}\;,\qquad t'_\star
= \frac{|\gamma|}{|D_\star|}.
\label{eqr-initial}
\end{equation} 

\begin{remark} {\em Notice that in the extremal case $D_{\star}<0$ and
in contrast with the initial data chosen for the subextremal case the
congruence is not necessarily orthogonal to the initial hypersurface
$\tilde{\mathcal{S}}$. To see this more clearly, observe that the unit
normal to $\tilde{\mathcal{S}}$ is given by
$\bmn=(1/\sqrt{|D_{\star}|})\mathbf{d}r$ while
$\tilde{\bmx}'_{\star}=r'_{\star}\bm\partial_{r}+t'_{\star}\bm\partial_{t}$.
In particular, notice that $\bmn^{\sharp}$ and $\tilde{\bmx}'_{\star}$
are not parallel and that $ \langle \bmn, \tilde{\bmx}'_{\star}
\rangle =\sqrt{\gamma^2/|D_{\star}|+1}$.  Compare this with the
discussion for the subextremal Schwarzschild-de Sitter spacetime given
in Section \ref{subsec:non-extremal} where $D_{\star}>0$ and the
symmetry of the spacetime suggested to consider the initial
hypersurface determined by the condition $t=0$ which has unit normal
$\bmn=\sqrt{D_{\star}}\mathbf{d}t$. In this case, setting
$\gamma=\sqrt{D_{\star}}$ leads to the initial data
\eqref{CGInitialData} for which $\tilde{\bmx}'_{\star} $ and
$\bmn^{\sharp}$ coincide ---see Figure \ref{fig:SdSConstantCurves}.
In the extremal case however there is not a apriori preferred choice
for $r_{\star} \neq r_{\mathcal{H}}$ determining the initial
hypersurface ---see Figure \ref{Fig:SdSConstantCurves}. Consequently,
instead of prescribing data corresponding to a congruence starting
orthogonal to $\tilde{\mathcal{S}}$ we will consider $\gamma$ as a
free parameter so that, in general, the congruence can start oblique
to the initial hypersurface.}
\end{remark}

\begin{remark}
{\em Observe that for $r_\star<r_{\mathcal{H}}$ the hypersurface
  $\tilde{\mathcal{S}}$ is below the horizon while if $r_\star>r_{\mathcal{H}}$
  then it lies above it. Our analysis will consider both situations simultaneously.}
\end{remark}

\subsubsection{Conformal factor}
\label{ConformalFactor-Extremal}
\medskip
Since $\beta=0$ one has that the 1-form $\tilde{\bmbeta}$ vanishes.
Using equation \eqref{eq:geodesic-rescaling} and
\eqref{AdaptedParameters-ConformalFactor} one observes that
$\bmbeta=\dot{\Theta}\dot{\bmx}^{\flat}$. Taking into account equation
\eqref{dotTheta-1} along with the constraints \eqref{Constraints} one
gets

\[\ddot{\Theta}_{\star}= \frac{1}{2}\dot{\Theta}^2_{\star}-\frac{1}{6}\lambda \]

\noindent For conciseness and consistency with the notation of Section
\ref{Section:SdS-SadS} one sets $\lambda =-3\epsilon$. For the case of
a de-Sitter like cosmological constant one sets $\epsilon=-1$.
Moreover, without loss of generality let us set $\tau_{\star}=0$, then
the conformal factor associated to the congruence defined by the
initial data discussed in Section \ref{InitialConditionsExt} is found
by using the general expression \eqref{ConformalFactor}

\[
\Theta(\tau)=1 + \dot{\Theta}_\star \tau +
\frac{1}{4}\left(\dot{\Theta}_\star ^2-1 \right )\tau^2,
\]
where $\dot{\Theta}_\star =\langle {\bm\beta}_\star ,\dot{\bmx}_\star
\rangle$. The roots of $\Theta(\tau)$ are given by
\begin{equation}\label{rootsCFExtreme}
\tau_{+}= \frac{2}{1-\dot{\Theta}_{\star}} \qquad
\tau_{-}=-\frac{2}{1+\dot{\Theta}_{\star}} .
\end{equation}
Although one has formally two roots of the conformal factor $\Theta$,
an inspection of the conformal diagram of Figure \ref{Fig:SdSConstantCurves}
 reveals that only one
component of the conformal boundary is actually realised. Moreover, in
order to have $ \tau_{\pm}$ finite, we restrict the possible values of
$\dot{\Theta}_\star$ to those that satisfy $|\dot{\Theta}_\star| \neq
1 $. This corresponds to restrict our analysis to compact conformal 
 representations in which $\mathscr{I}$ is located at 
a finite value of the unphysical proper
time $\tau$. 

\medskip
The relation between the unphysical proper time $\tau$ and
the physical proper time $\tilde{\tau}$ is readily obtained form
equation \eqref{Reparametrisation}:
\[
\tilde{\tau} = 2 \mbox{arctanh} \Big(
\frac{1}{2}(1-\dot{\Theta}^2_{\star})\tau-\dot{\Theta}_{\star}\Big)
  + 2 \mbox{arctanh} {\dot{\Theta}_{\star}}.
\]
Therefore,
\begin{equation}\label{TauAsAFunctionOfTauTilde}
\tau=\frac{2}{1-\dot{\Theta}^2_{\star}}\tanh \left( 
\tilde{\tau}/2 \right) +
\frac{2\dot{\Theta}_{\star}}{1-\dot{\Theta}_{\star}^2}.
\end{equation}
From these expressions we deduce that
$\lim_{\tilde{\tau}\rightarrow\pm\infty}\tau=\tau_{\pm}$. For
conciseness let us take $0<\dot{\Theta}_{\star}<1$ so that $\tau_{+}>0$
and $\tau_{-}<\tau_{+}$. Using equation
\eqref{TauAsAFunctionOfTauTilde} we see that the conformal factor can
be rewritten in terms of the $\tilde{\bmg}$-proper time $\tilde{\tau}$ as

\begin{equation}\label{ConformalFactor-InTermsOfTauTilde}
\Theta(\tilde{\tau})=\frac{1}{1 -\dot{\Theta}^2_{\star}} \mbox{sech} ^2  
( \tilde{\tau}/2).
\end{equation}

\begin{figure}[t]
\centering
\includegraphics[width=0.5\textwidth]{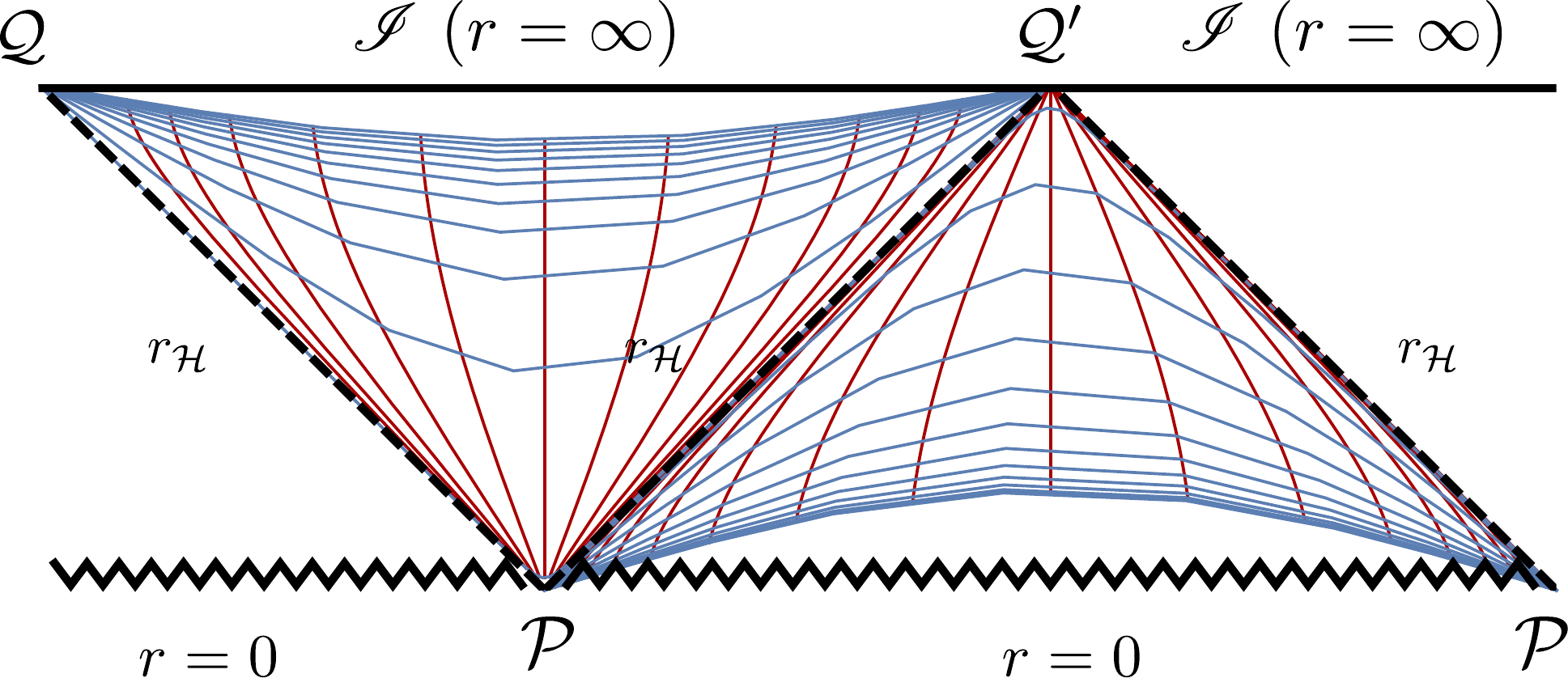}
\caption{ Curves with constant $t$
 and $r$ (red and blue respectively)
  are plotted on the Penrose diagram of the extremal Schwarzschild-de
  Sitter spacetime. In contrast with the subextremal case curves with
  constant $t$ in starting from some
  $r_{\star}<r_{\mathcal{H}}$ accumulate at the
  asymptotic points $\mathcal{Q}$ and $\mathcal{Q'}$ while those
  starting from $r_{\star}>r_{\mathcal{H}}$ accumulate
  at $\mathcal{P}$.}
\label{Fig:SdSConstantCurves}
\end{figure}

\subsubsection{Initial conditions for the congruence}

Unlike in the other cases studied in this article, we do not
require the congruence to be everywhere orthogonal to the initial
hypersurface $\tilde{\mathcal{S}}$. However, we are interested in
showing that the curves arrive orthogonally to the conformal boundary.
To do so we first compute
\begin{equation}
\label{PerpToScri}
\mathbf{d}\Theta = \left(\dot{\Theta}_{\star}+
\frac{1}{2}\left(\dot{\Theta}^2_{\star}-1\right)\tau\right)\mathbf{d}\tau.
\end{equation}
 Observe that $\mathbf{d}\Theta = \bm0$ at $ \tau=
2\dot{\Theta}_{\star}/(1-\dot{\Theta}^2_{\star})$. Hence, 
$\mathbf{d}\Theta \neq \bm0$ at $\mathscr{I}$.  Now from
\eqref{Reparametrisation} on has that
 \begin{equation}
\label{DiferentialTauAndTauTilde}
 \mathbf{d}\tau=\Theta \mathbf{d}\tilde{\tau}.
\end{equation}
Using equation \eqref{ReducedEquation} one can consider
$r=r(\tilde{\tau})$, equivalently
$\tilde{\tau}=\tilde{\tau}(r)$, therefore
$\mathbf{d}\tilde{\tau}=(1/r')\mathbf{d}r$. Using the last
observation and equations \eqref{PerpToScri} and
\eqref{DiferentialTauAndTauTilde} and one gets
\begin{equation}
\label{NormalToScri} 
\mathbf{d}\Theta = \Theta \left(\dot{\Theta}_{\star}+ \frac{1}{2}
\left(\dot{\Theta}^2_{\star}-1\right)\tau\right)\frac{1}{r'}\mathbf{d}r.
\end{equation}
Recall now that, as described in Section
\ref{sec:FormulaeWarpedProductSpaces}, we are considering geodesic
equations with no evolution in the angular coordinates, thus
\[
 \tilde{\bmx}'=t' \bm\partial_{t} + r'{\bm\partial}_{r}, 
\]
since one is effectively analysing the curves on a
2-dimensional manifold $\mathcal{\tilde{M}}/SO(3)$ with metric
$\tilde{\bml}$ as given in equation \eqref{eq:2-l}. Therefore, one can
define a unique orthogonal vector to $\tilde{\bmx}'$ by
$\tilde{\bmx}'^{\perp \flat}\equiv
\tilde{\bmepsilon}_{\tilde{\bml}}(\tilde{\bmx}',\,\cdot\,)$.
 A direct computation shows that 
\[ 
\tilde{\bmx}'^{\perp \flat}=-r' \mathbf{d}t +
 t'\mathbf{d}r,
\]
so that,
\[
 \tilde{\bmx}'^{\perp} = -\frac{r'}{D(r)}
 \bm\partial_{t} -t'D(r)
 \bm\partial_{r}.
\]
Hence, using \eqref{eq:geodesic-rescaling} one has that
\begin{equation} 
\label{NormalToCurve} 
\dot{\bmx}^{\perp} =
 -\frac{1}{\Theta}\left(\frac{r'}{D(r)} \bm\partial_{t} +
 t'D(r) \bm\partial_r \right) .
\end{equation}
Therefore using equations \eqref{NormalToScri},
\eqref{NormalToCurve}, \eqref{ReducedEquation} and \eqref{Equation-t'}
we get
\[ 
\langle \mathbf{d}\Theta, \dot{\bmx}^{\perp}\rangle 
= \frac{|\gamma|}{\sqrt{\gamma^2-D(r)}} 
\left( \dot{\Theta}_{\star}+
\frac{1}{2}\left(\dot{\Theta}^2_{\star}-1\right)\tau \right).
\]
 Taking into account that $ \lim_{\tilde{r}\rightarrow
  \infty} D(r)=-\infty$ and that
 \[
\dot{\Theta}_{\star}+ \frac{1}{2}
\left(\dot{\Theta}^2_{\star}-1\right)\tau_{\pm} =\mp 1,
\] 
and then assuming $\gamma \neq 0$ one concludes 
\[ 
\langle \mathbf{d}\Theta, \dot{\bmx}^{\perp}\rangle =0
 \quad\text{at}\quad \mathscr{I}. 
\]
This means that the curves arrive orthogonally to the
conformal boundary.
\begin{remark}
{\em A quick inspection of the last argument shows that
one can perform a similar calculation for the congruence
of conformal geodesics in the subextremal Schwarzschild-de Sitter spacetime
of Section \ref{subsec:non-extremal}
and obtain the same conclusion.
}
\end{remark}
\begin{remark}
{\em  This is in agreement 
with Proposition 3.1 given in \cite{FriSch87} where it is shown
that in an Einstein spacetime, any conformal geodesic leaving $\mathscr{I}$ orthogonally is
up to reparametrisation a geodesic of $\tilde{\bmg}$.}
\end{remark}

\subsection{Qualitative analysis of the behaviour of the curves}

In this Section we carry out a
qualitative study of the family of curves in the extremal
Schwarzschild-de Sitter analogous to that of Section
\ref{Section:QualitativeSdS} for the subextremal case.  
We distinguish three basic types of curves: those
escaping to the asymptotic points and those emanating from the
singularity in the past and escaping to the conformal boundary in the future.

\subsubsection{Conformal geodesics with constant $t$}
\label{subsec:circledastExtremal}

The condition determining the location of the curves with constant $r$
is given by
\[
\gamma^2-D(r)=0.
\]
Since $D(r)\leq 0$ and given that $D(r)$ only vanishes for positive $r$
at $r=r_{\mathcal{H}}$ then, the critical curve is
characterised by the conditions $\gamma=0$ and
$r=r_{\mathcal{H}}$.  This is consistent with the
analysis of Section \ref{subsec:circledast} since
$r_{\circledast}$ as given in equation
\eqref{CriticalCurveSubextremal} reduces to
$r_{\mathcal{H}}=1/\sqrt{3}$ for $M=2/3\sqrt{3}$. 
Moreover,
notice that by virtue of equation \eqref{Equation-t'} and the initial
data given in Section \ref{InitialConditionsExt} conformal geodesics
with $\gamma=0$ and $r\neq r_{\mathcal{H}}$ coincide
with curves of constant $t$.  These curves accumulate at the
asymptotic points $\mathcal{Q}$ and $\mathcal{Q'}$ ---see Figure
\ref{Fig:SdSConstantCurves}. Observe that for $\gamma=0$ and $\beta=0$ the expression
\eqref{GeneralFormulaForTauTilde} can be explicitly integrated to
yield
\begin{equation}
\label{eSdSgeodesicNotCrossingHorizon}
  \tilde{\tau}={r}_{\mathcal{H}} \ln
  \left(H(r)/H(r_{\star})\right)
\end{equation}
where 
\[
H(r)\equiv \left( \frac{\sqrt{3r}+
  \sqrt{r+2r_{\mathcal{H}}}}{(\sqrt{3r}-
  \sqrt{r+2r_{\mathcal{H}}})(\sqrt{r}+
  \sqrt{r+2r_\mathcal{H}})^{2\sqrt{3}}}\right).
\]
Observe that equation \eqref{eSdSgeodesicNotCrossingHorizon}, as
pointed out in \cite{Pod99}, implies that the geodesics with
$\gamma=0$ never cross the horizon since $\tilde{\tau}\rightarrow
\infty$ as $r\rightarrow r_{\mathcal{H}}$.
Using equation \eqref{TauAsAFunctionOfTauTilde} and setting
$\dot{\Theta}_{\star}=0$ for simplicity, we obtain
\begin{equation}
\label{ExplicitIntegrationUnphysicalProperTime}
\tau(r)=2\frac{W(r)-W_{\star}}{W(r)+W_{\star}},
\end{equation}
where $W(r)=H(r)^{r_{\mathcal{H}}}$.

\medskip
\noindent
\textbf{Asymptotic behaviour of the curve.} To analyse the behaviour of these curves as they approach the
asymptotic points $\mathcal{Q}$ and $\mathcal{Q'}$ let us consider
$r=r_{\mathcal{H}}+\epsilon$. Then one obtains that
for small $\epsilon>0$ that
\[
W(r)=\left(\frac{C_{1}}{\epsilon}\right)^{r_{\mathcal{H}}}(C_{2}+C_{3}\epsilon
+ \mathcal{O}(\epsilon^2))
\]
where $C_{1},C_{2},C_{3}$ are numerical factor not relevant for the
subsequent discussion. To leading order $W(r)=C/\epsilon^{p}$ where we
have used the value of $r_{\mathcal{H}}$ and introduced
$p=1/\sqrt{3}$ and $C=C_{1}^{r_{\mathcal{H}}}C_{2}$ to
simplify the notation. Consequently, to leading order we have
\[
\frac{\mbox{d}\tau}{\mbox{d}\epsilon}=\frac{-4W_{\star}Cp\epsilon^{p-1}}{(C +
  W_{\star}\epsilon^p)^2}.
\]
Since $p<1$ then one concludes that $\mbox{d}\tau/\mbox{d}\epsilon$ diverges as
$\epsilon \rightarrow 0$. Therefore the curves with $\gamma=0$ become
tangent to the Killing horizon as the approach the asymptotic points
$\mathcal{Q}$ and $\mathcal{Q'}$.  In other words, as in the
subextremal case, these curves become asymptotically null.

\subsubsection{Conformal geodesics with non-constant $r$}
Recalling that 
\[
r' =\sqrt{\gamma^2-D(r)}
\]
and observing that $D(r)\leq 0$, it follows that if $r'_\star\neq 0$
then, in fact, $r'>0$. Moreover, one can show that $r''_\star >0$ and
that $r''\neq 0$ for $r\in[r_\star,\infty)$. Thus, the curves escape
to the conformal boundary.

\medskip
\noindent
\textbf{Behaviour towards the conformal boundary.} We now show that the congruence of conformal
geodesics reaches the conformal boundary $\mathscr{I}$ in an infinite
amount of physical proper time. In order to see this, first observe from
equation \eqref{eq:define-D-extremal} that $D(r)\leq 0$,
consequently from equation \eqref{ReducedEquation} it follows that
$r(\tilde{\tau})$ is a monotonic function.  Moreover, using equations
\eqref{eq:define-D-extremal} and \eqref{Equation-t'} we find that
 \begin{equation}
\tilde{\tau}= \int^{
    r}_{r_\star} \frac{\mbox{d}\bar{r}}{\sqrt{\gamma^2 +
    \displaystyle
    \frac{1}{\bar{r}}\left(\displaystyle\frac{2}{\sqrt{3}}+\bar{r}\right)
    \left(\bar{r}-\displaystyle \frac{1}{\sqrt{3}}\right)^2}}.
\label{eq:integral-extremal}
\end{equation}
 We are interested in analysing the convergence of
\[
\tilde{\tau}_\infty \equiv \int^{\infty} _{r_\star}
\frac{1}{\sqrt{\gamma^2 +
    \displaystyle\frac{1}{\bar{r}}
\left(\displaystyle\frac{2}{\sqrt{3}}+\bar{r}\right)
    \left(\bar{r}-\displaystyle\frac{1}{\sqrt{3}}\right)^2}}
\mbox{d}\bar{r}. 
\]
Introducing a new variable $\xi\equiv r-1/\sqrt{3}$ then we can
rewrite the integral as
\[
\tilde{\tau}_\infty \equiv \int^{\infty} _{\xi_{\star}}
\frac{1}{\sqrt{\gamma^2 +
    \displaystyle\frac{\bar{\xi}^2(\bar{\xi}+\sqrt{3})}
{\bar{\xi}+\displaystyle\frac{1}{\sqrt{3}}}}}
\mbox{d}\bar{\xi}.
\] 
Since $r\geq r_{\star}>0$ one has $r\geq \delta_{\star}$ for some
$\delta_{\star}>0$ small. Thus $\xi \geq -1/\sqrt{3}
+\delta_{\star}$. Therefore we have that
\[
\frac{\xi + \sqrt{3}}{\xi + \displaystyle{ \frac{1}{\sqrt{3}}}} \leq
\kappa^2 \qquad \text{with} \qquad \kappa^2 \equiv 1+
  \frac{2}{\sqrt{3}\delta_{\star}}.
\]
Using the latter, we have that
\[
\sqrt{\gamma^2 +
  \frac{{\xi}^2(\xi+\sqrt{3})}{{\xi}
+\displaystyle\frac{1}{\sqrt{3}}}}
\leq\sqrt{\gamma^2+\kappa^2{\xi}}.
\]
so that
\[
\tilde{\tau}_\infty \geq \int^{\infty} _{\xi_\star} \frac{1}{
  \sqrt{\gamma^2 + \kappa^2 \bar{\xi}^2}} \mbox{d}\bar{\xi}.
\]
This last integral diverges and one concludes that
$\tilde{\tau}_\infty$  diverges as well. Accordingly, the conformal
boundary is reached in an infinite amount of physical proper time. 

\begin{remark}
{\em For initial hypersurfaces with $r_\star <r_{\mathcal{H}}$
  (i.e. lying below the horizon), a direct inspection of the integral in
\[
\tilde{\tau}_{\mathcal{H}} = \int^{
    r_{\mathcal{H}}}_{r_\star} \frac{\mbox{d}s}{\sqrt{\gamma^2 +
    \displaystyle
    \frac{1}{s}\left(\displaystyle\frac{2}{\sqrt{3}}+s\right)
    \left(s-\displaystyle \frac{1}{\sqrt{3}}\right)^2}},
\]
shows that it is finite ---there are no zeros in the denominator for
$s\in [r_\star,r_{\mathcal{H}}]$. Thus the horizon is reached in a
finite amount of proper time. 
}
\end{remark}

\medskip
\noindent
\textbf{Behaviour towards the singularity.}  We now shown that both the horizon and the
singularity are reached by the curves of the congruence in a finite
amount of physical proper time $\tilde{\tau}$. This behaviour is a
consequence of the fact that the integral in equation
\eqref{eq:integral-extremal} is finite for any $r$ with $0\leq
r\leq r_\star$. To prove this we start from the inequality
\[
 \sqrt{ \gamma^2 + \frac{1}{r}\left(\displaystyle\frac{2}{\sqrt{3}}+r\right)
\left(r-\displaystyle \frac{1}{\sqrt{3}}\right)^2} \geq |\gamma|.
\]
Therefore, for $\gamma \neq 0$ this entails
\[
 \tilde{\tau} = \int^{r}_{r_\star}
 \frac{\mbox{d}\bar{r}}{\sqrt{\gamma^2 + \displaystyle
     \frac{1}{\bar{r}}\left(\displaystyle\frac{2}{\sqrt{3}}+\bar{r}\right)
     \left(\bar{r}-\displaystyle \frac{1}{\sqrt{3}}\right)^2}}\leq
 \int^{r}_{r_\star}\frac{\mbox{d}\bar{r}}{|\gamma|}=
 \frac{r-r_\star}{|\gamma|}\;.
\]
This last inequality shows the assertion. In what follows we denote by
$\tilde{\tau}_\lightning$ (respectively $\tau_\lightning$) the value
of the proper time at which the singularity is reached.

\subsubsection{Behaviour of the congruence using null coordinates}
\label{AnalysisUsingNullCoords}

Most of the analysis performed for the asymptotic region
$r>r_{c}$ in the subextremal case 
in Section \ref{subsec:r>r*} 
can also be applied for the extremal case for any 
$r>r_{\star}>0$  since in this case
one has $D(r)\leq 0$ then the expression
\[
\sqrt{\gamma^2-D(r)} \geq |\gamma|
\]
is valid for any $r$. This 
leads to formally the same estimates as in Section 
\ref{subsec:r>r*}.
Notice however that the congruence does not
start orthogonally to the initial hypersurface 
$\tilde{\mathcal{S}}$. Assuming
 $\gamma \neq 0$  and performing formally
 the same procedure leading
 to equation \eqref{FirstEstimateSub}  we get
\begin{equation}
\label{NullCoordsEstimate1Extremal}
|u(r) - u_{\star}|  \leq 
\frac{r-r_{\star}}{\gamma^2}.
\end{equation}
This estimate is valid for any $r\geq r_{\star}>0$.
 In particular observe
 that $|u(r_{\mathcal{H}})-u_{\star}|$ is finite.
Now, let us denote as before $u_{\infty} \equiv
\lim_{r\rightarrow \infty}u$ and take
$r_{\mathcal{H}}<r_{\bullet}<\infty$, 
then the same procedure
leading to \eqref{estimateIntegral2Sub} renders
\begin{eqnarray}
\label{NullCoordsEstimate2Extremal}
|u_{\infty} - u({r_{\bullet}})| 
 \leq  \int ^{\infty} _{r_\bullet}  \frac{2}{|D(\bar{r})|}  \mbox{d}\bar{r}.
\end{eqnarray}
In contrast to Section \ref{subsec:r>r*}, at this point one can 
 compute the last integral using the explicit functional form for $D(r)$
in the extremal case:
\begin{equation*}
|u_{\infty} - u(r_{\bullet})| \leq
 \frac{6}{r_{\bullet}-r_{\mathcal{H}}} 
 -\frac{4}{\sqrt{3}}\ln\left(\frac{r_{\bullet}-r_{\mathcal{H}}}
{r_{\bullet}+r_{\mathcal{H}}}\right)   \equiv u_{\circ}.
\end{equation*}
 Observe that since $r_{\mathcal{H}}<r_{\bullet}<\infty$ one has that
$0<u_{\circ}<\infty$.  Finally, using expressions
\eqref{NullCoordsEstimate1Extremal},
\eqref{NullCoordsEstimate2Extremal} and the triangle
inequality render
\begin{equation}
\label{eq:ExtremalEstimateNullcoord}
 |u_{\infty} - u_{\star}| \leq |u_{\infty}-u(r_{\bullet})|
 +|u(r_{\bullet})-u_{\star}|< u_{\circ} 
+  \frac{r_{\bullet}-r_{\star}}{\gamma^2}. 
\end{equation}

\begin{remark}
{\em It follows then that the conformal geodesics cross the horizon
  and escape the conformal
boundary with a finite value of the retarded null time $u$. Thus, they
remain away from the asymptotic points $\mathcal{Q}$ and
$\mathcal{Q}'$.}
\end{remark}

\begin{remark}
{\em An analogous analysis can be carried out with the advanced null
  coordinate $v$.} 
\end{remark}

\subsection{Explicit expressions in terms of elliptic functions}
As in the subextremal case the solutions to the conformal geodesic
equations can be written in terms of elliptic functions. We begin by
observing that, using the functional form of $D(r)$
as given in equation \eqref{eq:define-D-extremal}, one can rewrite
\[
\gamma^2-D(r)= \frac{1}{r}\bigg(r^3 + (\gamma^2-1)r + \frac{2}{3\sqrt{3}}\bigg).
\]
One can verify that the discriminant of the cubic $r^3 + (\gamma^2-1)r
+ 2/3\sqrt{3}$ is always negative provided that $\gamma \neq
0$. Therefore one can factorise the above expression as
\[
\gamma^2-D(r)= \frac{1}{r}(r-\delta_{+})(r-\delta)(r-\bar{\delta}),
\]
where $\delta_{+}>0$ while $\delta$ and $\bar{\delta}$ are complex
conjugate.  Consequently, setting $\beta=0$ the integral
\eqref{GeneralFormulaForTauTilde} can be written as
\[
 \tilde{\tau}= \int_{r_{\star}}^{r}
 \sqrt{\frac{s}{(s-\delta_{+})(s-\delta)(r-\bar{\delta})}}\mbox{d}\bar{s}.
\]

\medskip
\noindent
\textbf{Curves with $\gamma=1$.} Rather than considering the previous expression for an arbitrary
non-vanishing value of the constant of integration $\gamma$, we now
particularise to the case $\gamma=1$. This choice leads to simpler
explicit expressions and can be done irrespectively of the value of
$r_\star$. For $\gamma=1$ a direct computation yields 
\begin{equation}
\label{IntegralWithGammaOne}
\tilde{\tau}= \frac{1}{6\sqrt[3]{2}r_{\mathcal{H}}}
\bigg(2\sqrt{3}\arctan \bigg(
\frac{\sqrt[3]{4}r-r_{\mathcal{H}}}{\sqrt{3}r_{\mathcal{H}}}\bigg) - 2
\ln | \sqrt[3]{4}r + 2r_{\mathcal{H}}| + \ln | \sqrt[3]{2}r^2
-\sqrt[3]{4}r_{\mathcal{H}}r + 2r_{\mathcal{H}}^2|\bigg) + c_{\star}
\end{equation}
where $c_{\star}$ is an integration constant and one can verify
that $\sqrt[3]{2}r^2 -\sqrt[3]{4}rr_{\mathcal{H}} +
2r_{\mathcal{H}}^2$ never vanishes. Thus, $\tilde{\tau}$ as given by
the above expression is an \emph{analytic function of its arguments}. 

\medskip

For the extremal Schwarzschild-de Sitter spacetime it is not
possible to construct Kruskal-like coordinates. However, one can still
construct null coordinates $u$ and $v$ using the tortoise radial
coordinate $\newrbar[-1pt][-3.7pt]$. A straightforward computation
shows that the tortoise coordinate in this case is given by
\begin{equation}
\label{Tortoise:eSdS}
\newrbar[-1pt][-3.7pt](r)=\frac{r_{\mathcal{H}}}{(r-r_{\mathcal{H}})(r+
  |r_{-}|)} - \frac{|r_{-}|}{(r_{\mathcal{H}}+ |r_{-}|)^2} \ln
\bigg|\frac{r-r_{\mathcal{H}}}{r + |r_{-}|}\bigg|.
\end{equation}
Observe that
\[
\lim_{r \to \infty}\newrbar[-1pt][-3.7pt](r)=0.
\]
Similarly,  
one can show that, taking the limit as $r$ approaches $r_{\mathcal{H}}$
 from the left  ($r<r_{\mathcal{H}}$) one has
\begin{equation}\label{LimitFromTheLeft}
\lim_{r \to r_\mathcal{H}^{-}}\newrbar[-1pt][-3.7pt](r)=-\infty,
\end{equation}
while taking the limit from the right ($r>r_{\mathcal{H}}$)
one obtains
\begin{equation}\label{LimitFromTheRight}
\lim_{r \to r_\mathcal{H}^{+}}\newrbar[-1pt][-3.7pt](r)=\infty.
\end{equation}
Taking into account expressions \eqref{IntegralWithGammaOne} and
\eqref{Tortoise:eSdS} and proceeding as in Section  \ref{CoverConformalBoundarySubextremal}
 one concludes that $u(\tau)$ and $v(\tau)$ are also
analytic functions or their parameters.  As previously discussed, in the extremal case the critical
curve is characterised by the conditions $t=t_{\star}$ and
$r=r_{\mathcal{H}}$.  Using equation \eqref{eq:ed-fink} and the limits
\eqref{LimitFromTheLeft}-\eqref{LimitFromTheRight} one concludes that
at the asymptotic points $\mathcal{Q}$ and $\mathcal{Q}'$ one has
respectively $u= \infty$ and $u=-\infty$.

\begin{figure}[t]
\centering
\includegraphics[width=0.6\textwidth]{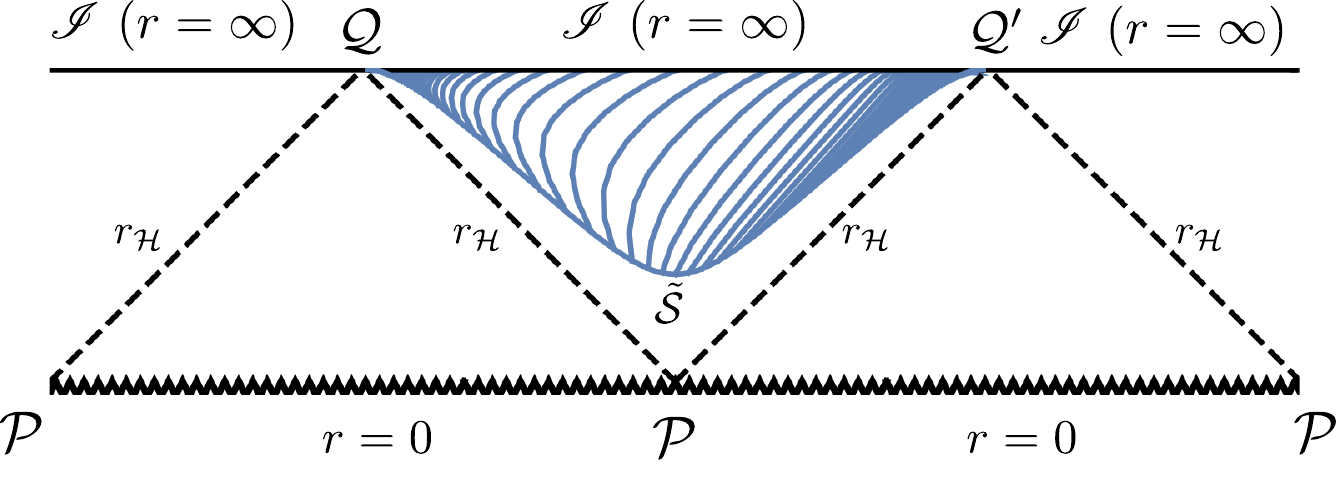}
\caption{Plot of a numerical simulation of a congruence of conformal geodesics on the Penrose diagram of the extremal Schwarzschild-de Sitter maximal extension.
For this simulation $\gamma=1/55$ and the initial hypersurface $\tilde{\mathcal{S}}$ is given by $r=0.7 > 1/\sqrt{3}$ so it is outside the black hole region and hence the curves go
to null infinity.}
\label{Fig:ExtremalSdScurves}
\end{figure}

\subsection{Analysis of the conformal geodesic deviation equation}
Given that for the initial data for the congruence we have set
$\beta=0$, it follows that the scalar $\tilde{\omega}$ describing the
deviation of the congruence satisfies equation \eqref{eq:eq-ds}. 

\subsubsection{Initial data for the deviation equation}
As previously discussed, one important difference between the extremal
and the subextremal case is that in the former case $r$ is a
time coordinate while $t$ is a spatial coordinate everywhere---thus,
$\bm\partial_{t}$ is a spatial vector. If $r'_{\star}>0$ the
conformal geodesics end at the conformal boundary, while if
$r'_{\star}<0$ the conformal geodesics end at the singularity. In the
former case a suitable choice for the initial value of the deviation
vector is $\tilde{\bmz}_{\star}=-\bm\partial_{t}$.  Proceeding
in similar way as in Section \ref{Basic-setup-InitialDataDeviation}
one gets $\tilde{\omega}_{\star}={r}'_{\star}>0$ and
$\tilde{\omega}'_{\star}=0$ if one wants to analyse the behaviour of
the congruence towards the conformal boundary. To analyse the behaviour
towards the singularity it is convenient to set
$\bmz_{\star}=\bm\partial_{t}$ so that
$\tilde{\omega}_\star=-r'_{\star}>0$. In any case we can write
$\tilde{\omega}_{\star}= |r'_{\star}|$ which allow us to discuss both
cases simultaneously.

\subsubsection{Estimating the solution to the deviation equation}
From equation \eqref{eq:eq-ds}, one readily 
obtains the estimate
\begin{equation*}
  \frac{\mbox{d}^2\tilde{\omega}}{\mbox{d}\tilde\tau^2}> \tilde{\omega}.\;
\end{equation*}
Proceeding in an analogous way as in Section
 \ref{sec:deviationEquationEstimates}
we get 
\[ 
\tilde{\omega} >|r'|_{\star}\mbox{cosh} 
(\tilde{\tau})>|r'_{\star}|. 
\]
From the last expression it follows that for the
 case $r'_{\star}<0$ in which the conformal geodesics end at the
 singularity that $\tilde{\omega}\neq 0$ so that there are no
 conjugate points in the congruence. In the case $r'_{\star}>0$ where
 the curves escape to the conformal boundary, one can proceed as follows:
 Using  equation \eqref{ConformalFactor-InTermsOfTauTilde} one obtains

\begin{equation}\label{BoundedAwayFromZero}
 \omega(\tilde{\tau})=\Theta(\tilde{\tau}) \tilde{\omega}(\tilde{\tau}) > 
\frac{r'_{\star}}{1-3\dot{\Theta}_{\star}}\mbox{sech}^2 
\left( \tilde{\tau}/2 \right)\mbox{cosh}
\left(\tilde{\tau}\right).
\end{equation}
Notice that the expression on the right hand side of
equation \eqref{BoundedAwayFromZero} is always finite, moreover
\[ 
\omega(\tilde{\tau}) > \left(\frac{|r'_{\star}|}{1-3\dot{\Theta}_{\star}}\right)
 \frac{2\mbox{cosh}(\tilde{\tau})}{1+ \mbox{cosh}(\tilde{\tau})}> \left(\frac{|r'_{\star}|}{1-3\dot{\Theta}_{\star}}\right). 
\] 
Therefore $\omega=\Theta\tilde{\omega}$
 never vanishes not even at the
conformal boundary. 

\subsection{Conformal Gaussian coordinates in the extremal
  Schwarzschild-de Sitter spacetime}
\label{Section:ConformalGaussianCoordinateseSdS}
In this Section we show how the congruence of conformal geodesics with
$\gamma=1$ can be used to construct a system of conformal Gaussian
coordinates. In view of the periodicity of the maximal extension of
the spacetime, the analysis will be restricted to the two
\emph{triangles} shown in Figure \ref{Fig:SdSConstantCurves}. 

\medskip
In what follows denote by $\mbox{eSdS}$ the region in the conformal
representation of the extremal Schwarzschild de Sitter spacetime
defined by the conformal factor $\Theta$ associated to the congruence
of conformal geodesics given by Figure \ref{Fig:SdSConstantCurves}. On
$\mbox{eSdS}$ consider a spacelike hypersurface  $\tilde{\mathcal{S}}$
defined by the condition $r=r_\star$. For definiteness let $r_\star<r_{\mathcal{H}}$ so that
the hypersurface is \emph{below} the horizon. We use the restriction
of the retarded null coordinate $u$ on $\tilde{\mathcal{S}}$ to parametrise points on
$\tilde{\mathcal{S}}$ ---in a slight abuse of notation we denote this
restriction by $u_\star$; observe that $u_\star\in(-\infty,\infty)$. Within $\mbox{eSdS}$ we
distinguish two subregions: $\mbox{eSdS}_I$ lying towards the future
of  $\tilde{\mathcal{S}}$ and $\mbox{eSdS}_{II}$ lying towards the
past. 

\medskip
\noindent
\textbf{The region $\mbox{eSdS}_I$.} For $r\in [r_\star,\infty)$ and
$u\in (-\infty,\infty)$ let $z\equiv 1/r$ and $w\equiv \tanh u$. In
terms of these coordinates one has
\[
\mbox{eSdS}_I = \big\{ z\in [0, z_\star], \; w\in[-1,1] \big\}.
\] 
The analysis in the previous Sections then shows that the map
\[
(w,z): [0,\tau_+]\times (-\infty,\infty)\longrightarrow (-1,1)\times [0,z_\star]
\]
with
\[
w =\tanh u(\tau, u_\star), \qquad z = 1/r(\tau,u_\star),
\]
as defined by the solutions to the conformal geodesic equations
depends analytically on its parameters. Further the analysis of the
conformal deviation equation shows that the Jacobian of the
transformation is non-zero for the given range of parameters. Thus, it
follows that the inverse map
\[
(\tau,u_\star): (-1,1)\times [0,z_\star] \longrightarrow [0,\tau_+]\times (-\infty,\infty)
\]
with 
\[
\tau = \tau(\mbox{arctanh}\; w, 1/z), \qquad u_\star = u_\star(\mbox{arctanh}\; w, 1/z),
\]
is well defined and also an analytic function of its parameters. Thus,
ignoring angular coordinates, this inverse map defines a conformal
Gaussian system of coordinates in $\mbox{eSdS}_I$. In particular,
given any point in $\mbox{eSdS}_I$, there is a unique conformal
geodesic passing through it. Thus, \emph{the congruence of conformal
geodesics covers the whole of $\mbox{eSdS}_I$}.

\medskip
\noindent
\textbf{The region $\mbox{eSdS}_{II}$.} In terms of the coordinates
$(w,r)$, the region $\mbox{eSdS}_{II}$ is described by
\[
\mbox{eSdS}_{II} =\big\{ w\in(-1,1), \; r\in(0,r_\star]  \big\}. 
\]
Again, the analysis carried out in the previous Sections shows that
the map
\[
(w,r):(\tau_\lightning,0]\times (-1,1) \longrightarrow (-1,1)\times (0,r_\star]
\]
is an analytic function of its parameters. Moreover, the analysis of
the conformal geodesic deviation equation shows that it is
invertible. Thus, the inverse map
\[
(\tau,u_\star): (-1,1)\times (0,r_\star] \longrightarrow (\tau_\lightning,0]\times (-1,1)
\]
is well-defined and an analytic function of its parameters. Thus,
again ignoring angular coordinates, the inverse map \emph{defines a
  conformal Gaussian system of coordinates} in $\mbox{eSdS}_{II}$. In
particular, \emph{the congruence of conformal geodesics covers the whole of} $\mbox{eSdS}_{II}$.

\begin{remark}
{\em Observe that the \emph{parallel} horizons bounding the region
  $\mbox{eSdS}$ are not covered by the congruence of timelike
  conformal geodesics.}
\end{remark}

\subsection{Summary of the analysis}
\label{SummaryExtremal}
The analysis of the previous Sections can be summarised in the
following proposition:

\begin{proposition}[\textbf{\em Conformal geodesics in the extremal
    Schwarzschild-de Sitter spacetime}]
The portion of the extremal Schwarzschild-de Sitter spacetime
corresponding to the region $\mbox{eSdS}$ can be covered by a
non-singular congruence of conformal geodesics emanating from the
singularity at $r=0$ and escaping to the conformal boundary. This
congruence can be used to construct a global system of conformal
Gaussian coordinates in the spacetime.
\end{proposition}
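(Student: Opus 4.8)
The plan is to collect the results of Sections~\ref{InitialConditionsExt}--\ref{Section:ConformalGaussianCoordinateseSdS} into a single statement, so the argument is organisational rather than computational. First I would fix, once and for all, the congruence determined by the initial data of Section~\ref{InitialConditionsExt}, prescribed on a spacelike hypersurface $\tilde{\mathcal{S}}=\{r=r_\star\}$ with $r_\star\neq 0$ and, say, $r_\star<r_{\mathcal{H}}$, taking $\Theta_\star=1$, $\beta=0$ and the integration constant $\gamma=1$; by \eqref{eqr-initial} this data is admissible at every point of $\tilde{\mathcal{S}}$, and after reparametrisation the curves are metric geodesics. The congruence is naturally labelled by the restriction $u_\star$ of the retarded null coordinate to $\tilde{\mathcal{S}}$ (and, on the other side of the horizon, by the advanced coordinate $v_\star$), which is a bona fide coordinate on $\tilde{\mathcal{S}}$.

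Next I would record the qualitative behaviour of a single curve, extended maximally in both directions. Since $D(r)\le 0$ in the extremal case, equation \eqref{ReducedEquation} gives $r'=\pm\sqrt{\gamma^2-D(r)}$ with $r'$ nowhere vanishing (as $\gamma=1\neq0$), so $r(\tilde{\tau})$ is strictly monotone along each curve. Using $\sqrt{\gamma^2-D(r)}\ge|\gamma|$ one bounds the physical proper time elapsed by $|r-r_\star|/|\gamma|$, so the singularity $r=0$ is reached at a finite value $\tilde{\tau}_\lightning$ (equivalently $\tau_\lightning$) in the past; on the other end, the analysis leading to \eqref{eq:integral-extremal} shows $\tilde{\tau}\to\infty$ as $r\to\infty$, i.e. $\mathscr{I}$ is reached at the finite unphysical proper time $\tau_+$. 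The null-coordinate estimates \eqref{NullCoordsEstimate1Extremal}--\eqref{eq:ExtremalEstimateNullcoord} show that $u$ (respectively $v$) stays finite across the Killing horizon and all the way to $\mathscr{I}$, so no curve is lost at the horizon. Thus each curve runs from $r=0$ in the past to $\mathscr{I}$ in the future, and, parametrising $\mathrm{eSdS}$ by $(u,r)$ (resp. $(v,r)$), the union of the curves is all of $\mathrm{eSdS}$.

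For the coordinate claim I would combine two facts. On the one hand, for $\gamma=1$ the integral \eqref{IntegralWithGammaOne}, the tortoise coordinate \eqref{Tortoise:eSdS} and the analytic reparametrisation \eqref{TauAsAFunctionOfTauTilde} express $u=u(\tau,u_\star)$ and $r=r(\tau,u_\star)$ as \emph{analytic} functions of their arguments. On the other hand, the deviation analysis shows that the scalar $\omega=\Theta\tilde{\omega}$ built from $\tilde{\bmz}_\star=\pm\bmpartial_t$ --- which one checks coincides with the Jacobian of the coordinate transformation --- is bounded away from zero: on the side where the curves reach $\mathscr{I}$ one has $\omega>|r'_\star|/(1-3\dot{\Theta}_\star)>0$ by \eqref{BoundedAwayFromZero}, and on the side where they reach the singularity $\tilde{\omega}>|r'_\star|\cosh\tilde{\tau}>0$. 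Hence the Jacobian never vanishes, not even at $\mathscr{I}$ --- this is precisely the statement that the congruence is \emph{non-singular}, i.e. free of conjugate points. The inverse function theorem then yields an analytic inverse map on each of the two subregions $\mathrm{eSdS}_I$, $\mathrm{eSdS}_{II}$ of Section~\ref{Section:ConformalGaussianCoordinateseSdS}; these agree on $\tilde{\mathcal{S}}$ and patch together to a global conformal Gaussian coordinate system on $\mathrm{eSdS}$, the angular coordinates being reattached trivially because the Ansatz $\dot x^i=0$, $b_j=0$ renders the congruence $SO(3)$-invariant.

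The step I expect to be the real obstacle is the passage from \emph{local} to \emph{global} invertibility of the coordinate map: nonvanishing of $\Theta\tilde{\omega}$ only excludes infinitesimal caustics, so one must separately argue that two distinct conformal geodesics of the congruence never reintersect inside $\mathrm{eSdS}$. I would close this gap using the monotonicity of $r$ in $\tilde{\tau}$ along each curve together with the fact that $u_\star$ is a coordinate transverse to the flow: these two facts exhibit the congruence as a graph over $\tilde{\mathcal{S}}$, from which global injectivity --- and hence the claimed diffeomorphism onto the two building blocks of $\mathrm{eSdS}$ --- follows. A secondary point needing care is that the non-degenerate quantity at the conformal boundary is the rescaled scalar $\omega=\Theta\tilde{\omega}$ and not $\tilde{\omega}$ itself (which blows up there), so the cancellation in \eqref{BoundedAwayFromZero} between the decaying $\mathrm{sech}^2(\tilde{\tau}/2)$ and the growing $\cosh\tilde{\tau}$ must be tracked explicitly, exactly as in \eqref{ConformalFactor-InTermsOfTauTilde}.
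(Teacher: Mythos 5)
Your proposal is correct and follows essentially the same route as the paper: the same initial data ($\Theta_\star=1$, $\beta=0$, $\gamma=1$ on a constant-$r$ slice), the same monotonicity and proper-time estimates for the curves, the same deviation-equation bound on $\omega=\Theta\tilde{\omega}$, and the same two-region analytic-inverse construction of the conformal Gaussian coordinates. Your explicit remark that nonvanishing of the Jacobian only gives local invertibility, closed by exhibiting the congruence as a graph over $\tilde{\mathcal{S}}$ via the strict monotonicity of $r(\tilde{\tau})$, is a point the paper leaves implicit and is a worthwhile clarification rather than a departure from its argument.
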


\section{The Schwarzschild-anti de Sitter spacetime}
\label{subsec:sch-ads}
Consistent with the discussion of Section \ref{SubSection:SdSSpecific}, for the
Schwarzschild-anti de Sitter spacetime $\lambda<0$. 
The latter will be assumed throughout this Section.
 In this case one
expects to be able to construct a congruence of conformal geodesics
that combines the properties of congruences in the Schwarzschild and
anti de Sitter spacetime.

\subsection{Basic setup}
\label{Section:SadSBasicSetup}
In this Section we provide the initial data for the congruence of
conformal geodesics in the Schwarzschild-anti de Sitter spacetime and
analyse some of its basic properties. 

\subsubsection{Initial data}
As in the case of the subextremal Schwarzschild-de Sitter solution, we set 
initial data for the congruence of conformal geodesics on the time
symmetric hypersurface $\tilde{\mathcal{S}}\equiv\{t=0\}$ of the
Schwarzschild-anti de Sitter spacetime. One requires the congruence to
be orthogonal to $\tilde{\mathcal{S}}$, consequently, we set
\begin{equation}\label{eq:SadSInitialData}
 t_\star=0, \qquad t'_\star = \frac{1}{\sqrt{D_\star}},
\qquad r'_\star=0\;,\qquad   r_\star > r_b
\end{equation}
where now $r_b$ is given by equation \eqref{HorizonAntiDeSitter}.
To find a suitable initial conformal factor and the value of $\beta$
we first look at the limiting case $M=0$. In this case the line
element \eqref{eq:rescaledStandardSadS} reduces to
\[
\tilde{\bmg}_{adS} = (1+r^2) \mathbf{d}t \otimes \mathbf{d}t -(1+r^2)^{-1}
\mathbf{d}r \otimes \mathbf{d}r - r^2 {\bmsigma}.
\]
Using the the coordinate transformation $r = \sinh \varrho$ the above
line element of the anti de Sitter spacetime can be brought to the
more standard form
\[
\tilde{\bmg}_{adS} = \cosh^2 \varrho \mathbf{d}t \otimes \mathbf{d}t -
\mathbf{d} \varrho \otimes \mathbf{d} \varrho -\sinh^2 \varrho
       {\bmsigma}.
\]
It is well known the de Sitter spacetime is conformal to the static
Einstein Universe $(\mathbb{R}\times\mathbb{S}^3,
\bmg_{\mathcal{E}})$. The conformal factor realising this conformal
embedding is given by
\begin{equation}\label{ConformalFactorAdS}
\Xi = \frac{1}{\cosh \varrho} = \frac{1}{\sqrt{1+r^2}}.
\end{equation}
To see this more clearly 
we introduce a coordinate $\chi$ via $\tan(\chi/2)=\tanh(\rho/2)$. Then
a computation shows that $\bmg_{\mathcal{E}}=\Xi^2 \tilde{\bmg}_{adS}$ 
is given by
\[
\bmg_{\mathcal{E}}=\mathbf{d}t\otimes \mathbf{d}t -\mathbf{d}\chi \otimes 
\mathbf{d}\chi - \sin^2 \chi \bm\sigma.
\]

\medskip
Conformal geodesics for the anti-de Sitter spacetime have been studied
in \cite{Fri95} ---see also \cite{CFEBook} for
a discussion of conformal geodesics in the Minkowski, de-Sitter and
anti de-Sitter spacetimes. Returning to the $M \neq 0$ case, we will use the
conformal factor given in equation \eqref{ConformalFactorAdS} to fix
the initial data for $\bmbeta$. A calculation readily gives that
\[
\Xi^{-1} \mathbf{d} \Xi = - \frac{r}{1+r^2}\mathbf{d} r.
\]
The above 1-form suggests setting the initial data 
\begin{equation}
\Theta_\star = \frac{1}{\sqrt{1+r_*^2}}, \qquad {\bm\beta}_\star = \tilde{\bm
  \beta}_\star = - \frac{r_*}{1+r_*^2}\mathbf{d} r_*
\label{eq:gamma-beta}
\end{equation}
so that
\begin{equation}
\beta=\frac{r_\star}{1+r_\star^2}\sqrt{1-\frac{M}{r_\star}+r_\star^2}=
\frac{\sqrt{r_*} \sqrt{r_*(r_*^2+1)-r_b\left(r_b^2+1\right)}}{r_*^2+1}.
\label{eq:value-of-beta-SadS}
\end{equation}
Notice that according to equation \eqref{RelationGammaWithInitialData}
the value of $\gamma$ is fixed by the choice of initial data and it turns out to be
\[
\gamma = \frac{\sqrt{r_*(r_*^2+1)-r_b\left(r_b^2+1\right)}}{\sqrt{r_*}\left(r_*^2+1\right)}. 
\]
Hence, one has
\[
\beta=r_*\gamma.
\]

Furthermore, using the constraints \eqref{Constraints} one can readily
compute the values of $\dot{\Theta}_\star$ and
$\ddot{\Theta}_\star$. One has that
\[
\dot{\Theta}_\star=0\;,\qquad \Theta_\star \ddot{\Theta}_\star =
\frac{1}{2}(1 + \tilde{\bmg}^\sharp ({\bm\beta}_\star,{\bm\beta}_\star)) =
\frac{1+ Mr + r^2}{2(1+r^2)^2}>0.
\]
It follows from the above that
\begin{equation}
\Theta = \Theta_\star + \frac{1}{2}\ddot{\Theta}_\star \tau^2
\label{eq:theta-ads}
\end{equation}
cannot vanish for any value of $\tau$ if $r\in
[r_b,\infty)$. Accordingly, the conformal geodesics associated to this
  conformal factor do not intersect the conformal boundary unless they
  are initially tangent to it. Using the above conformal factor one can
  compute the the explicit relation between $\tau$ and $\tilde\tau$
  using \eqref{Reparametrisation}. One finds that
\begin{equation}
 \tilde\tau=\frac{2 (r_\star{}^2+1)}{\sqrt{Mr_\star +r_\star
     {}^2+1}}\arctan\left(\frac{\tau\sqrt{r_\star
     \left(M+r_\star\right)+1}}{2\sqrt{r_\star^2+1}}\right).
 \label{eq:unphysicaltophysical-case1}
\end{equation}

\begin{remark}
\label{Remark:ProperTimeSadS}
{\em Using formula \eqref{eq:unphysicaltophysical-case1} one can
  readily verify that for finite values of $r_\star$
\[
\tilde\tau \rightarrow \frac{\pi (r_\star +1)}{\sqrt{r^2_\star +
    Mr_\star +1}} \qquad \mbox{ as} \quad \tau\rightarrow \infty.
\] 
Moreover, taking the double limit
\[
\tilde\tau \rightarrow \pi \qquad \mbox{as} \quad 
r_\star\rightarrow\infty ,\;\; \tau\rightarrow \infty.
\]
This is a manifestation of a phenomenon already observed in the anti
de Sitter spacetime in which the $g$-proper time only covers a finite portion
of the temporal extent of the Einstein cylinder ---see \cite{Fri95}. In order to continue
the description of a conformal geodesic with the $g$-proper time one
needs to perform a reparametrisation by means of a M{\"o}bius
transformation ---see e.g. \cite{CFEBook}. 
}
\end{remark}

\subsubsection{Technical observations}
With the choice of $\beta$ given by the positive square root of
equation \eqref{eq:gamma-beta}, the solution of equation
\eqref{ReducedEquation} can be written as
\begin{equation}
\tilde\tau=\left(1+r_\star{}^2\right)\int_{r}^{r_\star}
\sqrt{\frac{\bar{r} r_\star}{(r_\star-\bar{r})P(\bar{r},r_\star,M)}}
\mbox{d}\bar{r}\;, \qquad
r_\star\geq r_b\;,
\label{eq:ads-solution}
\end{equation}
where
\begin{eqnarray*}
&&P(r,r_\star,M)\equiv r_\star r^2 \left(M
  r_\star+r_\star{}^2+1\right)+r_\star r \left(M
  \left(r_\star{}^2+2\right)-r_\star
  \left(r_\star{}^2+1\right)\right)+M
  \left(r_\star{}^2+1\right){}^2\\ && \phantom{P(r,r_\star,M)}=\left(M
  r_\star+r_\star{}^2+1\right)\big(r-\alpha_+(M,r_\star)\big)\big(r-\alpha_-(M,r_\star)\big).
\end{eqnarray*}
Since $M>0$, if $\alpha_\pm(M,r_{\star})$ are
complex then  $P(r,r_\star,M)>0$. On the other hand, if $\alpha_\pm(M,r_{\star})$ are real then
\[
\alpha_+(M,r_\star)>\alpha_-(M,r_\star).
\]
Moreover, if $\alpha_{\pm}(M,r_{\star})$ are real, then we have the following result:

\begin{lemma}
 If $M>0$ and $\alpha_\pm(M,r_\star)$ are real then
\begin{equation}
\alpha_+(M,r_\star)<r_\star.
\end{equation}
\label{lem:P}
\end{lemma}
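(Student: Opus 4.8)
The plan is to work directly with the quadratic $P(r,r_\star,M)$ whose roots are $\alpha_\pm(M,r_\star)$ and show that evaluating the associated quantity at $r=r_\star$ forces $\alpha_+(M,r_\star)<r_\star$. Write $P(r,r_\star,M)=(Mr_\star+r_\star^2+1)\,r^2 + \big(Mr_\star(r_\star^2+2)-r_\star^2(r_\star^2+1)\big)r + M(r_\star^2+1)^2$, and note that the leading coefficient $c_2\equiv Mr_\star+r_\star^2+1$ is strictly positive since $M>0$ and $r_\star\geq r_b>0$. Because the parabola $r\mapsto P(r,r_\star,M)$ opens upwards, the statement $\alpha_+(M,r_\star)<r_\star$ is equivalent to the pair of conditions $P(r_\star,r_\star,M)>0$ together with $r_\star$ lying to the right of the vertex, i.e. $r_\star>-\,b_1/(2c_2)$ where $b_1$ is the linear coefficient. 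First I would verify $P(r_\star,r_\star,M)>0$.

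Evaluating at $r=r_\star$: $P(r_\star,r_\star,M)=r_\star^3(Mr_\star+r_\star^2+1)+r_\star^2\big(M(r_\star^2+2)-r_\star(r_\star^2+1)\big)+M(r_\star^2+1)^2$. Expanding the first two groups, the terms $r_\star^3\cdot r_\star^2$ and $-r_\star^2\cdot r_\star(r_\star^2+1)$ partially cancel, and a direct simplification should collapse everything to a manifestly positive expression — I expect something of the form $M\big(r_\star^4+r_\star^2+ (r_\star^2+1)^2 + 2r_\star^2\big)+\dots$, with all surviving monomials carrying a factor of $M>0$ or an explicitly positive power of $r_\star$. This is the routine computation I would not grind through here, but the key point is that the $\epsilon=+1$ (anti de Sitter) sign makes $D(r)=1-M/r+r^2$ have all the "right" signs, so no cancellation can turn the result negative. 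Once $P(r_\star,r_\star,M)>0$ is established, and since $P$ opens upward with two real roots $\alpha_-<\alpha_+$, the value $r_\star$ lies either to the left of $\alpha_-$ or to the right of $\alpha_+$.

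To rule out $r_\star\leq\alpha_-$ I would use the sign of $P$ at a smaller point, or more cleanly use the vertex location: the abscissa of the vertex is $r_v=-\,b_1/(2c_2)=\tfrac{1}{2}\cdot\frac{r_\star^2(r_\star^2+1)-Mr_\star(r_\star^2+2)}{Mr_\star+r_\star^2+1}$, and it suffices to show $r_v<r_\star$, i.e. $r_\star^2(r_\star^2+1)-Mr_\star(r_\star^2+2)<2r_\star(Mr_\star+r_\star^2+1)$, equivalently (dividing by $r_\star>0$) $r_\star(r_\star^2+1)-M(r_\star^2+2)<2Mr_\star+2r_\star^2+2$, i.e. $r_\star^3-r_\star^2-r_\star-2<M(r_\star^2+2r_\star+2)$. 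Since the right-hand side is positive, this holds automatically whenever $r_\star^3-r_\star^2-r_\star-2\leq 0$; and in the remaining range of large $r_\star$ I would invoke $r_\star\geq r_b$ together with the relation $M=r_b+r_b^3$ (equivalently the fact that $D_\star=1-M/r_\star+r_\star^2>0$, giving $M<r_\star+r_\star^3$) to close the gap — substituting the available lower bound on $M$ turns the inequality into a polynomial inequality in $r_\star$ alone that can be checked directly. The main obstacle I anticipate is precisely this last bookkeeping: making sure the bound $M<r_\star(1+r_\star^2)$ coming from $D_\star>0$ is strong enough in the regime where $r_\star^3-r_\star^2-r_\star-2>0$; if it is not immediately sufficient one may instead argue from $P(r_\star,r_\star,M)>0$ plus continuity, tracking the sign of $P$ on $(0,r_\star)$ to locate $\alpha_-$ strictly below $r_\star$ as well. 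With both $P(r_\star,r_\star,M)>0$ and $r_v<r_\star$ in hand, $r_\star$ lies strictly to the right of the larger root, which is exactly $\alpha_+(M,r_\star)<r_\star$.
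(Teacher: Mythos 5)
Your overall strategy --- treat $P(\cdot,r_\star,M)$ as an upward-opening quadratic, check $P(r_\star,r_\star,M)>0$, and place $r_\star$ to the right of the vertex --- is sound and genuinely different from the paper's proof, which works instead with the explicit discriminant, establishing $\sqrt{\Delta(M,r_\star)}\leq 3Mr_\star^3+2Mr_\star+r_\star^4+r_\star^2$ and then rearranging via an algebraic identity. The first half of your plan does go through: a direct expansion gives $P(r_\star,r_\star,M)=M\bigl(2r_\star^4+2r_\star^2+(r_\star^2+1)^2\bigr)>0$, exactly the ``every surviving monomial carries a factor of $M$'' cancellation you predicted.

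The genuine problem is the vertex computation. You take the leading coefficient to be $c_2=Mr_\star+r_\star^2+1$ but the linear coefficient to be $b_1=Mr_\star(r_\star^2+2)-r_\star^2(r_\star^2+1)$; in the paper's expanded form of $P$ both the quadratic and linear coefficients carry an overall factor $r_\star$, so these two choices belong to differently normalised polynomials and your $r_v=-b_1/(2c_2)$ is too large by a factor of $r_\star$. This manufactures the spurious condition $(r_\star-2)(r_\star^2+1)<M(r_\star^2+2r_\star+2)$ (your displayed left-hand side $r_\star^3-r_\star^2-r_\star-2$ contains further sign slips; from your own premise it should be $r_\star^3-2r_\star^2+r_\star-2$). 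For large $r_\star$ this condition would require a \emph{lower} bound $M\gtrsim r_\star$, whereas the bound you propose to invoke, $M<r_\star(1+r_\star^2)$ from $D_\star>0$, is an \emph{upper} bound --- so the ``close the gap for large $r_\star$'' step fails as written, and the continuity fallback you sketch is not developed enough to rescue it. Fortunately the difficulty is entirely self-inflicted: with consistent coefficients the vertex is $r_v=\tfrac12(\alpha_++\alpha_-)=\bigl(r_\star(r_\star^2+1)-M(r_\star^2+2)\bigr)/\bigl(2(Mr_\star+r_\star^2+1)\bigr)$, and the condition $r_v<r_\star$ reduces to $0<3Mr_\star^2+2M+r_\star^3+r_\star$, which is trivially true for $M>0$ and $r_\star>0$. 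With that correction your argument closes cleanly and is arguably more transparent than the paper's discriminant manipulation.
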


\begin{proof}
 One has the following explicit relations
\begin{eqnarray*}
&&\alpha_{\pm}(M,r_\star)\equiv\frac{r_\star\left(-M \left(r_\star^2+2\right)
    +r_\star(r_\star^2+1)\right)\pm\sqrt{\Delta(M,r_\star)}}{2 r_\star
    \left(M
    r_\star+r_\star^2+1\right)}\;,\nonumber\\
&&\Delta(M,r_\star)\equiv
  -M^2 \left(3 r_\star^2+4\right) r_\star^4-2 M \left(3 r_\star^6+9
  r_\star^4+8 r_\star^2+2\right) r_\star+\left(r_\star^2+1\right){}^2
  r_\star^4.
\end{eqnarray*}
We now notice that
\begin{equation}\label{InequalityUsedForLemmaSAdS}
 0\leq 4 M r_\star \left(3 r_\star{}^4+4r_\star{}^2+1\right)
 \left(Mr_\star+r_\star{}^2+1\right)\,
\end{equation}
which holds because as $M>0$ and $r_\star>0$ all the factors are
positive. Expanding out the product we find that the inequality
 \eqref{InequalityUsedForLemmaSAdS} can be
rewritten in the form
\[
\Delta(M,r_\star)\leq \left(3 M r_\star^3+2 M
r_\star+r_\star^4+r_\star^2\right){}^2.
\]
Under the assumptions that $\alpha_{\pm}$ are real
 one has that $\Delta(M,r_\star)\geq
0$, so that
\begin{equation}
\sqrt{\Delta(M,r_\star)}\leq |3 M r_\star^3+2 M
r_\star+r_\star^4+r_\star^2|=3 M r_\star^3+2 M
r_\star+r_\star^4+r_\star^2.
\label{eq:first-inequality}
\end{equation}
Next use the identity
\[
3 M r_\star^3+2 M r_\star+r_\star^4+r_\star^2=2 r_\star^2 (1 + M
r_\star+r_\star^2) - (r_\star^2 + r_\star^4 - M r_\star(2 +
r_\star^2))\;,
\]
which enables us to write the inequality \eqref{eq:first-inequality}
in the form
\[
(r_\star^2 + r_\star^4 - M r_\star(2 +
r_\star^2))+\sqrt{\Delta(M,r_\star)}\leq 2 r_\star^2 (1 + M
r_\star+r_\star^2).
\]
Finally, as $r_\star(1 + M r_\star+r_\star^2)\geq 0$ we conclude
that $\alpha_+(M,r_\star)<r_\star$.
\end{proof} 

\begin{lemma}
 If $(r_\star-r)P(r,r_\star,M)>0$ then $r<r_*$ and $P(r,r_\star,M)>0$.
\label{lem:P2}
\end{lemma}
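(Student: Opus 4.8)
The plan is to read off the sign behaviour of $P(\,\cdot\,,r_\star,M)$ directly from the fact that, as a polynomial in $r$, it is a quadratic with \emph{positive} leading coefficient $M r_\star+r_\star{}^2+1>0$. Such a quadratic is strictly positive outside the closed interval determined by its roots, and in particular it takes negative values only for $r$ strictly between its two real roots $\alpha_-(M,r_\star)<\alpha_+(M,r_\star)$; when those roots are complex it is positive everywhere. Feeding in Lemma \ref{lem:P}, which gives $\alpha_+(M,r_\star)<r_\star$ whenever $\alpha_\pm$ are real, one obtains the key intermediate fact: for every $r\geq r_\star$ one has $P(r,r_\star,M)\geq 0$ (strictly, unless $r$ happens to equal a root, which cannot occur for $r\ge r_\star>\alpha_+$).

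With that in hand the argument is short. First I would note that the hypothesis $(r_\star-r)P(r,r_\star,M)>0$ forces both factors to be nonzero and to have the same sign. Then I would argue by contradiction that $r<r_\star$: if instead $r\geq r_\star$, then $r_\star-r\leq 0$ while the intermediate fact gives $P(r,r_\star,M)\geq 0$, whence $(r_\star-r)P(r,r_\star,M)\leq 0$, contradicting strict positivity. Therefore $r<r_\star$, i.e.\ $r_\star-r>0$; since the product is positive, $P(r,r_\star,M)>0$ follows at once.

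I do not expect a genuine obstacle here: the only points requiring care are keeping the two cases (real versus complex $\alpha_\pm$) straight and invoking Lemma \ref{lem:P} only in the real case, after which everything reduces to the elementary sign analysis of a concave-up parabola. The slightly delicate bookkeeping is making sure the inequalities stay strict where needed, which is handled by observing that $P$ cannot vanish at any $r\ge r_\star$ because such an $r$ exceeds the larger root $\alpha_+(M,r_\star)$.
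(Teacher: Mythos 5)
Your proof is correct and follows essentially the same route as the paper: both arguments reduce to observing that the leading coefficient $Mr_\star+r_\star^2+1$ of $P$ is positive, that $P>0$ everywhere when $\alpha_\pm$ are complex, and that Lemma \ref{lem:P} places both real roots below $r_\star$, so that $P(r,r_\star,M)>0$ for $r\geq r_\star$ and the case $r\geq r_\star$ is incompatible with the hypothesis. The only difference is presentational (your contradiction versus the paper's explicit two-case enumeration), and your packaging of the sign analysis into the intermediate fact ``$P\geq 0$ for $r\geq r_\star$'' is if anything slightly cleaner.
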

\begin{proof}
If $(r_\star-r)P(r,r_\star,M)>0$ then either
\[
(a)\qquad r_\star>r  \qquad \text{and}\qquad P(r,r_\star,M)>0
\]
 or
\[
 (b)\qquad r_\star<r \qquad \text{and} \qquad
 P(r,r_\star,M)<0.
\] 
If $\alpha_{\pm}(M,r_{\star})$ are complex then
 $P(r,r_\star,M)>0$ and case (b) cannot hold so the lemma is proven.  
 If $\alpha_{\pm}(M,r_{\star})$ are real then by virtue of Lemma \ref{lem:P} one has
 $\alpha_{-}(M,r_{\star})<\alpha_{+}(M,r_{\star})<r_{\star}$. If case (b) holds then $P(r,r_{\star},M)<0$
 which, in turn, can only be true if $(r-\alpha_{+}(M,r_{\star}))(r-\alpha_{-}(M,r_{\star}))<0$.
 Nevertheless, by Lemma \ref{lem:P}, 
 $r>r_{\star}>\alpha_{+}(M,r_{\star})$ and, consequently, since
 $P(r,r_{\star},M)<0$ then $r<\alpha_{-}(M,r_{\star})$.
 However, this is a contradiction as $\alpha_{-}(M,r_{\star})<\alpha_{+}(M,r_{\star})$. 
Therefore,  case (b) cannot hold which then proves the lemma. 
\end{proof}

 \begin{remark}
\label{Remark:SignDerivativeInitial}
{\em
  Notice that Lemma \ref{lem:P2} implies that
 if the radicand in the  right hand side of equation
 \eqref{eq:ads-solution} is positive then necessarily $ r<r_\star$.
 Consequently, $r(\tilde{\tau})$ is decreasing if $\tilde{\tau}\neq 0$. 
By continuity, $r(\tilde{\tau})$ is decreasing until it reaches a value of $\tilde{\tau}$
where $r'(\tilde{\tau})$ vanishes. Using equation
 \eqref{ReducedEquation} one has that $r'(\tilde{\tau})$ vanishes 
 whenever $P(r,r_{\star},M)$ vanishes. In other words, if
 $\alpha_{\pm}(M,r_{\star})$ are real then $r'(\tilde{\tau})$ vanishes at
 $\tilde{\tau}_{\alpha_{\pm}}$ where
 $r=\alpha_{\pm}(M,r_{\star})$ and at $\tilde\tau=0$
where $r=r_{\star}$.
 If $\alpha_{\pm}(M,r_{\star})$ are
 complex then $r'(\tilde{\tau})$ is non-zero for $\tilde{\tau}>0$  so  $r(\tilde{\tau})$
 is always decreasing.  }
\end{remark}

\begin{remark}
{\em
 In the sequel, to simplify the notation $\alpha_{\pm}(M,r_{\star})$ will be simply denoted by $\alpha_{\pm}$.
Moreover, for future reference notice that equation \eqref{eq:ads-solution} can 
be written as}
\begin{equation}
\tilde\tau=\frac{\left(1+r_\star{}^2\right)\sqrt{r_{\star}}}{\sqrt{Mr_{\star}+ r_{\star}^2+1}}
\int_{r}^{r_\star}
\sqrt{\frac{\bar{r} }{(r_\star-\bar{r})(\bar{r}-\alpha_{+})(\bar{r}-\alpha_{-})}}
\mbox{d}\bar{r}.
\label{eq:ads-solutionSimplified}
\end{equation}
\end{remark}

 Following the above discussion, depending on
 the sign of $\Delta(M,r_{\star})$, we may distinguish three
 possibilities which are discussed.

\subsection{Qualitative analysis of the behaviour of the curves}
In this Section we analyse the different qualitative behaviours of the
conformal geodesics defined by the initial conditions given in the
previous Section. There are, broadly, three types of geodesics: a set
of geodesics parallel to the conformal boundary, geodesics
reaching timelike infinity and finally geodesics falling into the
singularity. 

\subsubsection{Conformal geodesics entering the horizon}

Consider $r_\star$ such that $\Delta(M,r_\star)<0$. In this case
$\alpha_{\pm}$ are complex and $P(r,r_\star,M)$ is strictly positive.
Therefore, there are no turning points and $0<r< r_\star$. The
conformal geodesics get through the event horizon $r=r_b$ and end up
in the singularity $r=0$ at
 \begin{equation}
 \tilde{\tau}_\lightning \equiv
 \left(1+r_\star{}^2\right)\int_{0}^{r_{\star}} \sqrt{\frac{r
     r_\star}{(r_\star-r)P(r,r_\star,M)}}\mbox{d} r.
\label{eq:taulightning-ads1}
\end{equation}
To verify that $\tilde{\tau}_{\lightning}$ is finite observe that since
$P(r,r_{\star},M)$ is strictly positive then there exist a small
$\delta>0$ such that $P(r,r_{\star},M)\geq \delta$. Consequently
 \[
 \tilde{\tau}_\lightning \leq \left(1+r_\star{}^2\right)
 \sqrt{\frac{r_{\star}}{\delta}}\int_{0}^{r_{\star}} \sqrt{\frac{r
   }{(r_\star-r)}}\mbox{d} r = (1+r_{\star}^2)\sqrt{\frac{r_{\star}}{\delta}}
\frac{\pi}{2}r_{\star}<\infty.
 \]

\medskip
\noindent
\textbf {Explicit expressions in terms of elliptic functions.} In the case $\Delta(M,r_\star)<0$ one has
\[
0<r<r_{\star} \quad \text{and}\quad \bar{\alpha}_{+}=\alpha_{-}\in \mathbb{C}.
\]
One can use this information to rewrite the integral given in equation
\eqref{eq:ads-solutionSimplified} in terms of elliptic functions.
  In particular, using formulae
259.07 and  361.62 of \cite{ByrFri13} 
 with  $R(t)= a-t$ and  parameters
\[
a=0, \quad  b=r_{\star}, \quad  c=\alpha_{+}, \quad  \bar{c}=\alpha_{-},
\]
  one obtains
\begin{equation}
\tilde\tau=\frac{\left(1+r_\star{}^2\right)\sqrt{r_{\star}}}{\sqrt{Mr_{\star}+r_{\star}^2+1}}
\left(\frac{Ar_{\star}}{A-B}\right) 
 \Bigg( u-\frac{1}{1+\alpha}\Big( \Pi\left(\varphi,\frac{\alpha^2}{\alpha^2-1},k\right)-\alpha f_{1}\Big) \Bigg)
\label{eq:ads-solutionSimplifiedDeltanegative}
\end{equation}
 where $\text{dn}u$, $\text{sn}u$ and $\text{cn}u$
 denote the \emph{delta amplitude}, the \emph{sine amplitude} and \emph{cosine amplitude} functions (Jacobi elliptic functions). 
 The function $\text{sd}u$ is defined as $\text{sd}u
\equiv \text{sn}u/\text{dn}u$ and $\Pi[\phi,\alpha^2,\kappa]$ is the
incomplete elliptic integral of the third kind. The constants $A, B,
g, k $ and $k'$ are determined in terms of $r_{\star}$ and
$\alpha_{+}$ via
\begin{eqnarray*}
A\equiv \frac{1}{2}\Big(\text{Re}(\alpha_{+})^2-\text{Im}(\alpha_{+})^2\Big), 
\quad B\equiv\Big(r_{\star}-\text{Re}(\alpha_{+})
  \Big)^2-\frac{1}{2}\text{Im}(\alpha_{+})^2, \\ 
\alpha\equiv \frac{A-B}{A+B}, \qquad g\equiv\frac{1}{\sqrt{AB}}, \qquad k^2\equiv\frac{r_{\star}^2-(A-B)^2}{4AB}, \qquad  k'\equiv\sqrt{1-k^2},
\end{eqnarray*}
\begin{eqnarray*}
\text{cn} u = \cos \varphi, \qquad \varphi= \cos^{-1}\Bigg(\frac{Br+A(r-r_{\star})}{Br-A(r-r_{\star})} \Bigg),\\ 
f_{1} =
\begin{cases} 
      \sqrt{\displaystyle\frac{1-\alpha^2}{k^2+k'^2\alpha^2}}\arctan\Big(\displaystyle\frac{k^2+k'^2\alpha}{1-\alpha^2}\text{sd} u\Big) & \quad \text{if} \quad \displaystyle\frac{\alpha^2}{\alpha^2-1}< k^2, \\
 \text{sd}u & \quad \text{if} \quad \displaystyle\frac{\alpha^2}{\alpha^2-1}= k^2, \\
      \frac{1}{2} \sqrt{\displaystyle\frac{1-\alpha^2}{k^2+k'^2\alpha^2}}\ln\left( 
\frac{\displaystyle \text{dn}u\sqrt{k^2+k'^2\alpha^2} + \text{sn}u\sqrt{\alpha^2-1}}
{\displaystyle \text{dn}u\sqrt{k^2+k'^2\alpha^2} - \text{sn}u\sqrt{\alpha^2-1}} \right)& \quad \text{if} \quad \displaystyle\frac{\alpha^2}{\alpha^2-1} > k^2.
   \end{cases}
\end{eqnarray*}

\begin{remark}
{\em The expression for $\tilde{\tau}$ given by
  \eqref{eq:ads-solutionSimplifiedDeltanegative} can be seen to be an
  analytic function of its arguments.}
\end{remark}

\subsubsection{Critical conformal geodesic}
We consider next $r_\circledast$ such that $\Delta(M,r_\circledast)=0$. In this case
$P(r,r_\circledast,M)$ has a double root and the integral expression
\eqref{eq:ads-solution} takes the form
\[
\tilde\tau=\frac{1+r_\circledast{}^2}{\sqrt{M
    r_\circledast+r_\circledast{}^2+1}}
\int_{r}^{r_\circledast} \sqrt{\frac{\bar{r}
    r_\circledast}{(r_\circledast-\bar{r})(\bar{r}-\alpha(M,r_\circledast))^2}}\mbox{d}\bar{r}\;,
\]
with
\[
\alpha(M,r_\circledast)=\frac{r_\circledast^3+r_\circledast -M (r_\circledast^2+2)} {2
  \left(M r_\circledast+r_\circledast^2+1\right)}.
\]
In fact, in this
particular case one can compute the integral in terms of elementary
functions, the result being
\begin{eqnarray}
&&\tilde\tau=(1+r_\circledast{}^2)\sqrt{\frac{r_\circledast{}}{Mr_\circledast+r_\circledast{}^2+1}}\times\nonumber\\
&&\left(2\sqrt{\frac{\alpha(M,r_{\circledast})}{r_\circledast{}-\alpha(M,r_\circledast)}}\log\left|\frac{\left(\frac{r}{r_\circledast{}-r}\right)^{\frac{1}{2}}
+\left(\frac{\alpha(M,r_\circledast)}{r_\circledast{}-\alpha(M,r_\circledast)}\right)^{\frac{1}{2}}}
{\left(\frac{r}{r_\circledast{}-r}\right)^{\frac{1}{2}}
-\left(\frac{\alpha(M,r_\circledast)}{r_\circledast{}-\alpha(M,r_\circledast)}\right)^{\frac{1}{2}}}\right|
+\pi-2\arctan\left(\frac{r}{r_\circledast{}-r}\right)^{\frac{1}{2}}
\right),\;\;\label{eq:critical-geodesic}
\end{eqnarray}
where the integral is carried out using that $\alpha(M,r_\circledast)< r_\circledast$
and $r<r_\circledast$ which arise respectively from Lemmas \ref{lem:P} and \ref{lem:P2}.
Observe that
\begin{equation}\label{divIntmnotex}
\lim_{ r\rightarrow r_{\circledast}^-}\tilde\tau=0\;,
\quad\lim_{ r\rightarrow\alpha(M,r_\circledast)^+}\tilde\tau=\infty.
\end{equation}
Hence we conclude that  $r$ in equation
(\ref{eq:critical-geodesic}) has to be taken such that 
$\alpha(M,r_\circledast)<r<r_\circledast$.
\begin{assumption} 
\label{AssumptionSadS}
{\em For the subsequent discussion the location of $\alpha(M,r_{\circledast})$ relative to $r_{b}$ is required. Numerical evaluations  suggest that $r_{b}<\alpha(M,r_{\circledast})$. In the following $r_b<\alpha(M,r_{\circledast})$ will be assumed.}
\end{assumption}
 Notice that $r_b<\alpha(M,r_\circledast)$ implies that this geodesic
never enters into the black hole region. Hence the conformal geodesic
starting at $r=r_\circledast$ with $r_\circledast$ satisfying the
condition $\Delta(M,r_\circledast)=0$ separates the conformal
geodesics which go to the black hole region and end up in the
singularity from those which do not enter this region.  This conformal
geodesic is depicted in Figure \ref{fig:GADS}.

\begin{remark}
{\em It can be readily be verified that the expression for
$\tilde{\tau}$ is an analytic expression of its parameters except at 
$r=\alpha(M,r_\circledast)$. 
}
\end{remark}

To further analyse the properties of the critical conformal geodesic we
need to study the behaviour of  $t(r,r_{\circledast})$ as well. This analysis 
 is carried out in the remainder of this Section.
Using  the chain rule and equations
(\ref{ReducedEquation}) and (\ref{Equation-t'}) we get
\begin{equation}
\frac{\mbox{d}r}{\mbox{d}t}=-\frac{D(r)}{|\gamma+\beta r|}\sqrt{(\gamma+\beta r)^2-D(r)}.
\end{equation}
Replacing  of $\beta$, $\gamma$ and $D(r)$ 
for the Schwarzschild-anti de Sitter case,  a computation renders
\begin{equation}
t(r,r_\circledast)= \frac{\gamma(1+r_\circledast^2)}{\sqrt{1+r_\circledast(r_\circledast+r_b(1+r_b^2))}}
\int^{r_\circledast}_{r}\frac{\bar r(1+r_\circledast\bar r)}{(\bar r-r_b)(1+\bar r^2+\bar r r_b+r_b^2)}
\left[\frac{\bar r}{(r_\circledast-\bar r)(\bar r-\alpha(M,r_\circledast))^2}\right]^{\frac{1}{2}}\mbox{d}\bar r.
\end{equation}
We use now the following inequalities
\begin{equation}
 \frac{r(1+r r_\circledast)}{(r-r_b)(1+r^2+r r_b+r_b^2)}>
 \frac{r_b}{3(r_\circledast-r_b)}\;,\quad
 \frac{r}{(r_\circledast-r)(r-\alpha(M,r_\circledast))^2}>
 \frac{r_b}{(r_\circledast-r_b)(r-\alpha(M,r_\circledast))^2}\;,
\end{equation}
which are valid for all values of $r$ such that $\alpha(M,r_\circledast)<r<r_\circledast$.
From these we 
get
\begin{eqnarray}
&&\int^{r_\circledast}_{r}\frac{\bar r(1+r_\circledast\bar r)}{(\bar r-r_b)(1+\bar r^2+\bar r r_b+r_b^2)}
\left[\frac{\bar r}{(r_\circledast-\bar r)(\bar r-\alpha(M,r_\circledast))^2}\right]^{\frac{1}{2}}\mbox{d}\bar r>\nonumber\\
&&\int^{r_\circledast}_{r}\left(\frac{r_b}{r_\circledast-r_b}\right)^{\frac{3}{2}}\frac{\mbox{d}\bar r}{3(\bar r-\alpha(M,r_\circledast))}=
\frac{1}{3}\left(\frac{r_b}{r_\circledast-r_b}\right)^{\frac{3}{2}}\log\left(\frac{r_\circledast-\alpha(M,r_\circledast)}{r-\alpha(M,r_{\circledast})}\right).
\end{eqnarray}
Hence
\[
\lim_{r\rightarrow\alpha(M,r_{\circledast})^+}t(r,r_\circledast)>\lim_{r\rightarrow\alpha(M,r_{\circledast})^+}
\frac{1}{3}\left(\frac{r_b}{r_\circledast-r_b}\right)^{\frac{3}{2}}
\log\left(\frac{r_\circledast-\alpha(M,r_\circledast)}{r-\alpha(M,r_\circledast)}\right)=
\infty.
\]
This shows that the critical conformal geodesic reaches infinite
coordinate time but neither enters the black hole nor escapes to
infinity. Therefore it asymptotes to a region which is neither
conformal infinity nor the singularity.

\subsubsection{Conformal geodesics not entering the horizon}

Finally, consider values of $r_\star$ for which $\Delta(M,r_\star)>0$. In this
case the roots $\alpha_{\pm}$ of the polynomial $P(r,r_\star,M)$ are real and Lemma
\ref{lem:P} applies. This implies that $r_\star>r>\alpha_+$ 
and the limit
\[
\tilde\tau(\alpha_+)\equiv(1+r_\star^2)\lim_{ r\rightarrow\alpha_+}\int_{r}^{r_\star}
\sqrt{\frac{\bar{r} r_\star}{(r_\star-\bar{r})P(\bar{r},r_\star,M)}} 
\mbox{d}\bar{r}.
\]
is finite. 
To see this, recall that equation (\ref{eq:ads-solutionSimplified}) implies that the latter limit can be computed via
\[
\lim_{ r\rightarrow\alpha_+}\int_{r}^{r_\star}
\sqrt{\frac{\bar{r} }{(r_\star-\bar{r})(\bar{r}-\alpha_{+})(\bar{r}-\alpha_{-})}}
\mbox{d}\bar{r}.
\]
Fix a constant value $R$ with $\alpha_+<r<R<r_\star$. Under these conditions we have the inequalities
\[
 \frac{{\bar r} r_\star}{r_\star-{\bar r}}<\frac{R r_\star}{r_\star-R}\;,\quad
\frac{1}{{\bar r}-\alpha_-}<\frac{1}{\alpha_+-\alpha_-}\;, 
\]
from which
\begin{eqnarray*}
&&\int_{r}^{r_\star}
\sqrt{\frac{\bar{r}}{(r_\star-\bar{r})(\bar{r}-\alpha_{+})(\bar{r}-\alpha_{-})}}
\mbox{d}\bar{r}< \left(\frac{R r_\star}{(r_\star-R)(\alpha_+-\alpha_-)}\right)^{\frac{1}{2}}
\int^{r_\star}_{r}\frac{\mbox{d}\bar r}{\sqrt{\bar r-\alpha_+}}\nonumber\\ 
&&\hspace{3cm}=2\left(\frac{R r_\star}{(r_\star-R)(\alpha_+-\alpha_-)}\right)^{\frac{1}{2}}\sqrt{r_\star-\alpha_+}\;,\quad
\alpha_+<r<R<r_\star.
\end{eqnarray*}
Hence,
\[
\lim_{ r\rightarrow\alpha_+}\int_{r}^{r_\star}
\sqrt{\frac{\bar{r} }{(r_\star-\bar{r})(\bar{r}-\alpha_{+})(\bar{r}-\alpha_{-})}}
\mbox{d}\bar{r}<
2\left(\frac{R r_\star}{(r_\star-R)(\alpha_+-\alpha_-)}\right)^{\frac{1}{2}}\sqrt{r_\star-\alpha_+}.
\]
Thus, the value $r=\alpha_+$ is reached in a finite amount of physical
proper time. Now, it can be readily verified that
$\mbox{d}r/\mbox{d}\tilde{\tau}=0$ at $r=\alpha_+$. Thus, at $r=\alpha_+$ one
has a turning point. The conformal geodesic reaching this point 
can be smoothly extended  by means of a reflection of the conformal 
geodesic with respect to the horizontal line defined by
$\tilde\tau=\tilde\tau(\alpha_+)$. By repeating this procedure
 an infinite number of times one gets an inextendible curve which
 is a periodic function in the variable $\tilde\tau$ when represented 
in the form $ r= r(\tilde\tau,r_\star)$ ---see Figure \ref{fig:GADS}.
The period is given by $2\alpha_+(M,r_\star)$. In particular, one has that,
although the value of $r(\tilde{\tau})$ remains bounded for
  $\tilde{\tau}\in[0,\infty)$, it does not have a limit as
  $\tilde{\tau}\rightarrow \infty$. Moreover, making use of expression
  for the coordinate $t$, one can readily verify that two consecutive
  turning points are reached in a finite amount of coordinate time $t$.
 Also, an explicit computation shows that
\[
\lim_{r_\star\rightarrow\infty}\alpha_+(M,r_\star)=\infty\;,
\]
which implies that the conformal geodesics approach a vertical line in
the limit $r_\star\rightarrow\infty$ ---the timelike conformal
boundary.
 
\begin{remark}
{\em The late time behaviour of the conformal geodesics close to the
  conformal boundary in the
  Schwarzschild-de Sitter spacetime is similar to the behaviour
  observed in the anti de Sitter spacetime in which the vicinity of
  the conformal boundary (and, in fact, the whole spacetime) is
  ruled by conformal geodesics.  Our analysis thus shows that there
  is an infinite number of conformal geodesics between the critical
  curve and the conformal boundary which neither fall into the black
  hole nor escape to the conformal boundary.}
\end{remark}

\medskip
\noindent
\textbf{Explicit expressions in terms of elliptic functions.} In the case $\Delta(M,r_\star)>0$ one has three different subcases
depending on the sign on $\alpha_{\pm}$:
\[
a)\quad r_{\star}>r>\alpha_{+}>\alpha_{-}>0, \qquad  b)\quad r_{\star}>r>\alpha_{+}>0>\alpha_{-}, \qquad c)\quad r_{\star}>r>0>\alpha_{+}>\alpha_{-}.
\]
All these cases can be discussed in a unified way using the formulae given in
\cite{ByrFri13}. To do so, let 
\begin{eqnarray*}
\varphi\equiv\sin^{-1}\left(\sqrt{\frac{(b-d)(a-r)}{(a-b)(r-d)}} \right), 
\quad \alpha^2\equiv \frac{(a-d)(c-d)}{(a-c)(b-d)},  \quad
\kappa^2\equiv\frac{(a-c)(c-d)}{(a-c)(b-d)},\\ g\equiv \frac{2}{\sqrt{(a-c)(b-d)}}, \qquad sn^2u\equiv \frac{(b-d)(a-r)}{(a-b)(r-d)}, 
\qquad sn u_{1} \equiv \sin\varphi\Rightarrow
u_1=\mbox{F}(\varphi,\kappa)\;,
\end{eqnarray*}
where $sn$ 
denotes the \emph{sine amplitude} function and $\mbox{F}(\varphi,\kappa)$ is the 
\emph{ incomplete elliptic integral of the first kind}. 
For case a) using formula 257.02 of \cite{ByrFri13} with parameters
\[
a=r_{\star}, \qquad b=\alpha_{+}, \qquad c=\alpha_{-}, \qquad d=0,
\]
the integral \eqref{eq:ads-solutionSimplified} can be expressed as
\[
\tilde\tau=\frac{\left(1+r_\star{}^2\right)\sqrt{r_{\star}}}{\sqrt{Mr_{\star}+ r_{\star}^2+1}}(a-d)g\Pi[\varphi,\alpha^2,\kappa].
\]
For case b) using formula 257.13 of \cite{ByrFri13} with parameters
\[
a=r_{\star}, \qquad b=\alpha_{+}, \qquad c=0, \qquad d=\alpha_{-},
\]
one obtains 
\begin{eqnarray*}
&&\tilde\tau=\frac{\left(1+r_\star{}^2\right)\sqrt{r_{\star}}}
{\sqrt{Mr_{\star}+ r_{\star}^2+1}}\frac{(c-a)g}{\alpha^2}
\Big(k^2u + (\alpha^2-k^2)
\Pi[\varphi,\alpha^2,\kappa]\Big)\Biggr\rvert _{0}^{u_{1}}\\
&&\phantom{\tilde\tau}=\frac{\left(1+r_\star{}^2\right)
\sqrt{r_{\star}}}{\sqrt{Mr_{\star}+ r_{\star}^2+1}}\frac{(c-a)g}
{\alpha^2}\Big(k^2u_1 + (\alpha^2-k^2)\Pi[\varphi,\alpha^2,\kappa]\Big).
\end{eqnarray*}
For case c) using formula 257.15 of \cite{ByrFri13} with parameters
\[
a=r_{\star}, \qquad b=0, \qquad c=\alpha_{+}, \qquad d=\alpha_{-},
\]
one obtains
\begin{eqnarray*}
&&\tilde\tau=\frac{\left(1+r_\star{}^2\right)\sqrt{r_{\star}}}{\sqrt{Mr_{\star}+ r_{\star}^2+1}}\frac{(b-a)g}{\alpha^2}\Big(u + (\alpha^2-1)\Pi[\varphi,\alpha^2,\kappa]\Big)\Biggr\rvert _{0}^{u_{1}}\\
&&\phantom{\tilde\tau}=\frac{\left(1+r_\star{}^2\right)\sqrt{r_{\star}}}{\sqrt{Mr_{\star}+ r_{\star}^2+1}}\frac{(b-a)g}{\alpha^2}\Big(u_1 + (\alpha^2-1)\Pi[\varphi,\alpha^2,\kappa]\Big).
\end{eqnarray*}
In these expressions $\Pi[\phi,\alpha^2,\kappa]$ is the \emph{incomplete elliptic integral of the third kind}. 


\begin{figure}[t]
 \centerline{\includegraphics[width=0.7\textwidth]{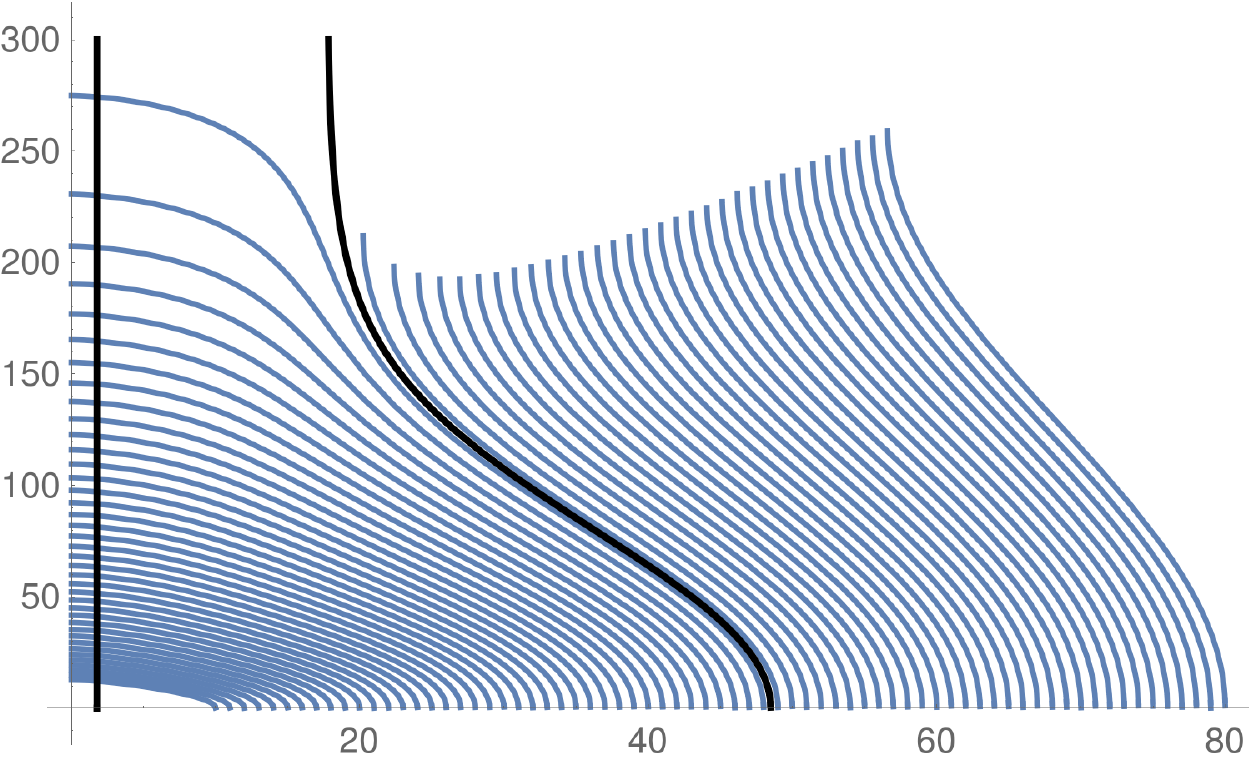}}
 \caption{Congruence of conformal geodesics in the exterior region of
   Schwarzschild anti-de Sitter for $M=7.5$ and $\beta$ given by
   (\ref{eq:gamma-beta}).  The horizontal axis corresponds to $r$ 
   and the vertical axis to $\tilde\tau$.  The thick curve is the separation (critical) geodesic and 
   the thick vertical line corresponds
   to $r_b$. Geodesics on the left to the separation geodesic go
   towards the singularity whereas geodesics to the right stay away
   from the singularity.}
\label{fig:GADS}
 \end{figure}

 \begin{figure}[t]
\centering
\includegraphics[scale=.7]{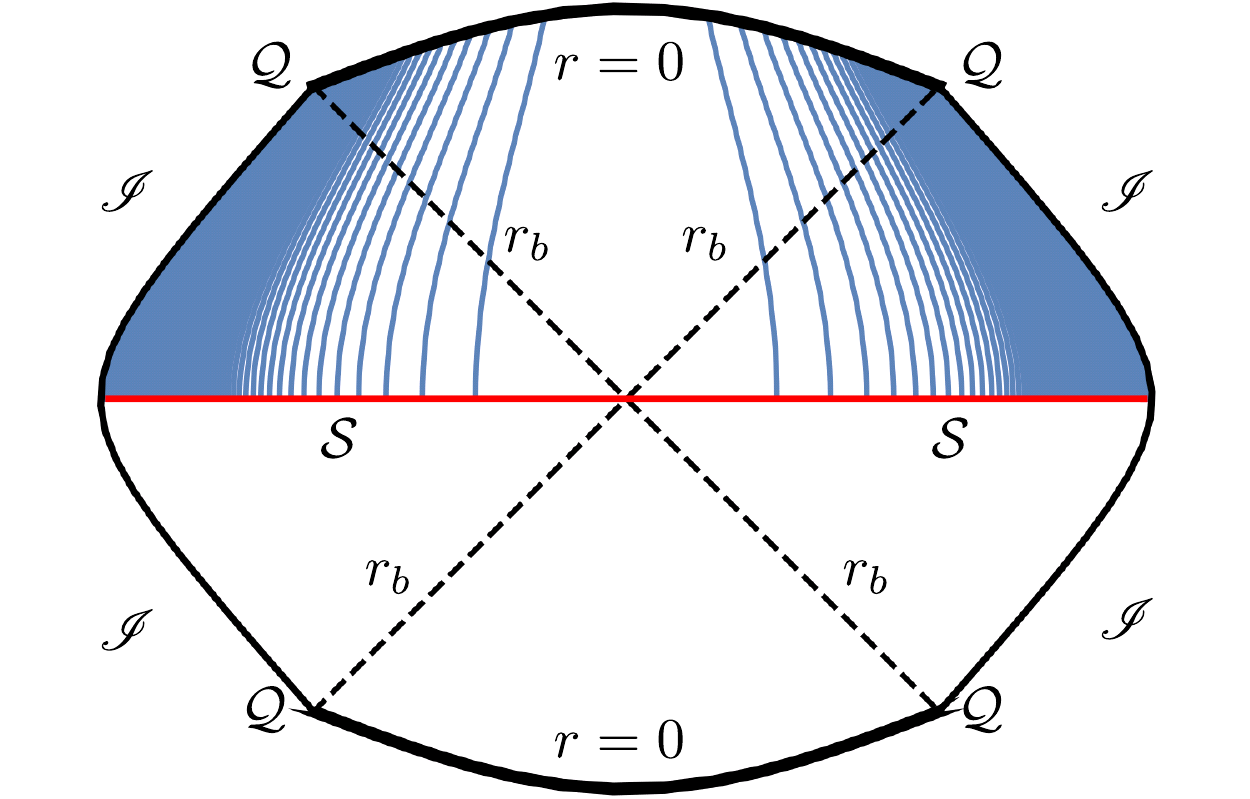}
\caption{In this figure we show a numerical simulation of the congruence of conformal geodesics 
plotted in the Kruskal diagram of the Schwarzschild anti de Sitter solution for $M=28.56$ and $\beta$ given by (\ref{eq:value-of-beta-SadS}). The graph also shows 
the initial data hypersurface $\mathcal{S}$, the singularity and the conformal boundaries.}
\label{fig:cgSchAdS}
\end{figure}

\begin{figure}[t]
\centering
\includegraphics[scale=.7]{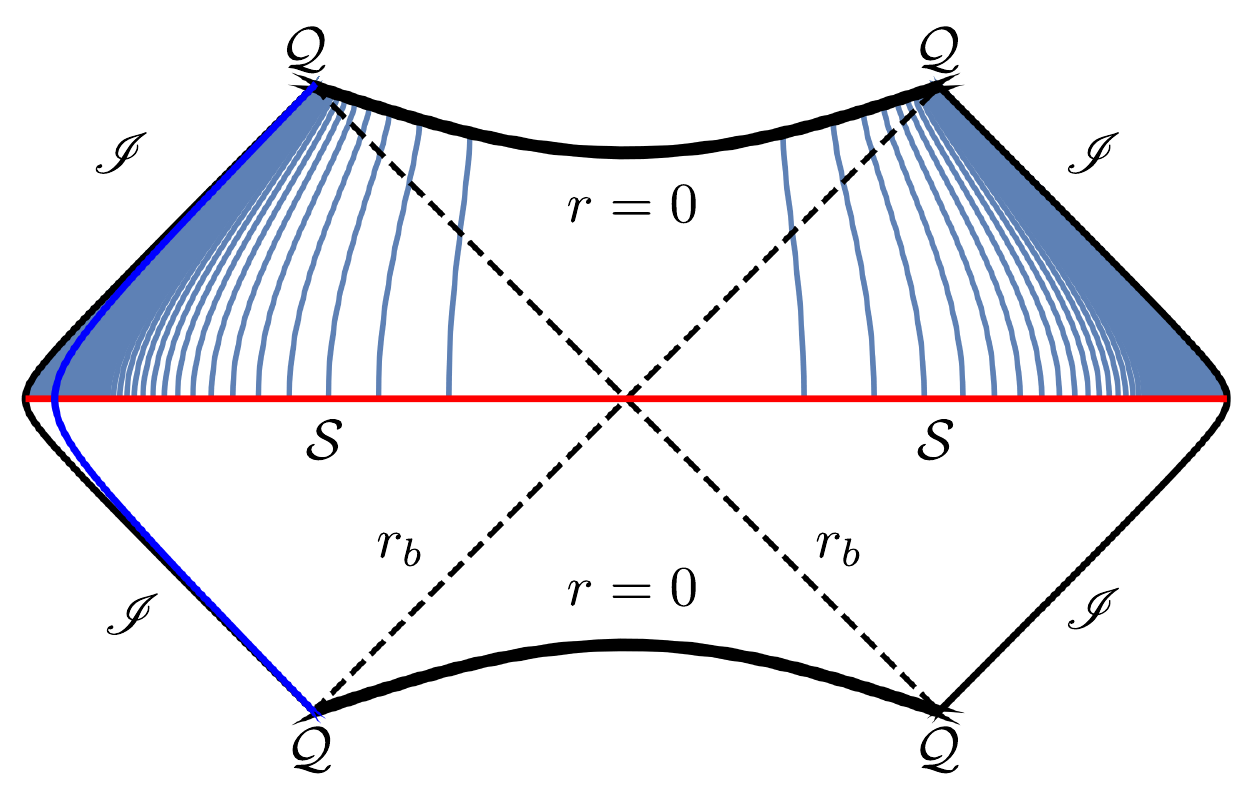}
\caption{In this figure we show a numerical simulation of the congruence of conformal geodesics for the Schwarzschild anti de Sitter solution similar to the 
one presented in figure \ref{fig:cgSchAdS} but of a value of the mass parameter given by $M=0.34$. In this case the critical conformal geodesic is also plotted.}
\label{fig:cgSchAdS2}
\end{figure}

\subsubsection{Conformal geodesic starting at the bifurcation sphere}
To study the conformal geodesic which starts at the bifurcation sphere
$r_\star=r_b$ it is necessary that we rewrite the conformal geodesic
equations we have developed in  Kruskal-like coordinates to
cover the maximal extension of the Schwarzschild anti de Sitter
spacetime.

To start the discussion notice that for the Schwarzschild-anti de Sitter solution, the Eddington-Finkelstein coordinates defined by equation (\ref{eq:ed-fink}) 
take the form
\begin{subequations}
\begin{eqnarray}
&&u=t+\frac{r_b \log \left(\displaystyle\frac{r^2+r_br+r_b^2+1}{(r-r_b)^2}\right)}
{6 r_b^2+2}-\frac{\left(3r_b^2+2\right) \arctan\left(\displaystyle\frac{2r+r_b}{\sqrt{3 r_b^2+4}}\right)}
{\left(3r_b^2+1\right) \sqrt{3 r_b^2+4}}\label{eq:uschwads}\\
&&v=t-\frac{r_b \log \left(\displaystyle\frac{r^2+r_br+r_b^2+1}{(r-r_b)^2}\right)}
{6 r_b^2+2}+\frac{\left(3r_b^2+2\right) \arctan\left(\displaystyle\frac{2r+r_b}{\sqrt{3 r_b^2+4}}\right)}
{\left(3r_b^2+1\right) \sqrt{3 r_b^2+4}}\label{eq:vschwads}.
\end{eqnarray}
\end{subequations}
From these coordinates, one defines the usual Kruskal-like coordinates as follows:
\[
U=\arctan[\exp(\alpha u)]\;,\quad
V=\arctan[-\exp(-\alpha v)]\;,
\]
where $\alpha$ is a constant given by
\[
\alpha\equiv-\frac{3 r_b^2+1}{2 r_b}.
\]
The radial coordinate can be implicitly written in terms of the Kruskal-like coordinates by means of the relation
\begin{equation}
\tan (U) \tan (V)=\frac{(r_b-r)\Delta(r,r_b)^2}{\sqrt{r^2+rr_b+r_b^2+1}}
\label{eq:implicit-radial-ads}
\end{equation}
where
\begin{equation}
\Delta(r,r_b)\equiv \exp \left(\frac{\left(3 r_b^2+2\right)}{2 r_b \sqrt{3r_b^2+4}}\arctan\left(\frac{2 r+r_b}
   {\sqrt{3r_b^2+4}}\right)\right).\\
\end{equation}
Using this relation, we can compute the explicit form of
 the metric in the Kruskal-like coordinates
\begin{eqnarray}
&& \tilde{\bmg}=G(r,r_b)e^{\alpha(U-V)}\cosh(\alpha U)\cosh(\alpha V)
 (\mathbf{d} U \otimes \mathbf{d} V + \mathbf{d} V \otimes \mathbf{d} U)-{\bmsigma}\;,\label{eq:metric-schw-ads-kruskal}\\
&& G(r,r_b)\equiv\frac{16(1+r^2+r r_b+r_b^2)^2r_b^2}{r(1+3r_b)^2}
 \exp\left(\frac{-(2+3r_b)}{\sqrt{4+3r_b^2}}\arctan\left(\frac{2r+r_b}{\sqrt{4+3r_b^2}}\right)\right).\nonumber
\end{eqnarray}

A computation then shows that the conformal geodesic equations for 
curves with initial datum $r_\star=r_b$ are given by 
\begin{subequations}
\begin{eqnarray}
&&U''=\frac{2 (U')^2}{\sin(2 U)(3 r_b^2+1)}\left(\left(3r_b^2+1\right) \cos (2 U)-
\frac{r_b \left(2 r^3+r_b^3+r_b\right)}{r^2}\right)\;,\label{eq:u2-rb}\\
&& r''=\frac{4 r_b \left(2 r^3+r_b^3+r_b\right) U'}
{\sin^2(2 U)\left(3 r_b^2+1\right)^2 r^3} \nonumber\\
&& \hspace{2cm} \times  
\left(\left(3r_b^2+1\right) r r' \sin (2U)+2 r_b (r_b-r)\left(r(r+r_b)+r_b^2+1\right) U'\right)\;,\\
&&2\left(r_*^2+1\right)
   \left(U'\frac{\sin (2 V)}{\sin(2 U)}-V'\right)(2r_br^3+r_b^2 \left(r_b^2+1\right))=0\;,\label{eq:U-V}
\end{eqnarray}
\end{subequations}
with initial conditions
\begin{subequations}
\begin{eqnarray}
&& U_*=V_*=0\;,\\
&&U'_*=-\frac{\sqrt[4]{3 r_b^2+1}}{2 \sqrt{r_b}}\exp\left(\frac{\left(3 r_b^2+2\right)}
{2 r_b \sqrt{3r_b^2+4}}\arctan\left(\frac{3 r_b}{\sqrt{3r_b^2+4}}\right)\right)\;.\label{eq:up0rb}
\end{eqnarray}
\end{subequations}
Equation (\ref{eq:U-V}) entails
\[
\frac{U'}{\sin(U)}=\frac{V'}{\sin(V)}\;.
\]
Combining the latter expression with the initial data 
for the congruence leads to
\[
 U(t)=V(t).
\]
We can write equation (\ref{eq:u2-rb}) as 
\[
\frac{U''}{U'}=\frac{2U'}{\sin(2 U)}\left(\cos (2 U)-
\frac{r_b \left(2 r^3+r_b^3+r_b\right)}{r^2\left(3r_b^2+1\right) }\right)\;,
\]
from which one obtains
\begin{equation}
\log(U')-\log(U'_*)=\int_0^U\left(\frac{2ds}{\sin(2s)}\left(\cos(2s)-\frac{r_b \left(2 r(s)^3+r_b^3+r_b\right)}{r(s)^2\left(3r_b^2+1\right)}\right)\right).
\label{eq:1st-integral}
\end{equation}
In this integral, the function $r(s)$ is defined implicitly through
the relation 
\begin{equation}
s=\arctan\left(\frac{\sqrt{\left| r(s)-r_b\right|}}{\sqrt[4]{r(s)^2+r(s)r_b+r_b^2+1}}\Delta(r(s),r_b)
\right)
\label{eq:new-implicit-relation}
\end{equation}
obtained from equation \eqref{eq:implicit-radial-ads} setting $U=V$. Exploiting
the last equation one makes a change of
variables in the integral of equation (\ref{eq:1st-integral}) to obtain
\begin{equation}
 \log(U')-\log(U'_*)=\int^r_{r_b}\frac{N(r,r_b)}{D(r,r_b)}dr\;,
\label{eq:new-integral}
\end{equation}
where
\begin{eqnarray*}
&& N(r,r_b)\equiv
 \sqrt{r^2+r r_b+r_b^2+1} \left(2 r^2
   r_b-r \left(r_b^2+1\right)-r_b
   \left(r_b^2+1\right)\right)\nonumber\\
&&\hspace{3cm}-\left(2 r^3r_b+r^2 \left(3
   r_b^2+1\right)+r_b^4+r_b^2\right)\Delta(r,r_b)\;,\quad \\
&& D(r,r_b)\equiv
2 r r_b \left(r^2+r r_b+r_b^2+1\right)^{\frac{3}{2}}
   \left(\frac{(r-r_b)\Delta(r,r_b)}{\sqrt{r^2+rr_b+r_b^2+1}}-1\right).\nonumber\\
\end{eqnarray*}
The integral of equation (\ref{eq:new-integral}) can be computed explicitly,
and after lengthy algebra one obtains
\[
\frac{2U'}{\sin(2U)}=-\frac{(3 r_b^2+1)^\frac{7}{4}}{2 r_b^{\frac{3}{2}}}
\left(\frac{r}{(r_b-r)(r^2+r r_b+r_b^2+1)}\right)^{\frac{1}{2}}.
\]
One eliminates the variable $U$ in this equation using the previous formulae
 to get $U'$ in terms of $r'$. This computation then renders
\[
 \frac{(3r_b^2+1)^{-\frac{3}{4}}r_b^{\frac{1}{2}}r'}
{\left((r_b-r)(r^2+r_br+1+r_b^2)\right)^{\frac{1}{2}}}=1\;,
\]
whose solution can be reduced to quadratures
\[
\tilde{\tau} =(3r_b^2+1)^{-\frac{3}{4}}r_b^{\frac{1}{2}}\int_r^{r_b}  \frac{\mbox{d}s}
{\left((r_b-s)(s^2+r_bs+1+r_b^2)\right)^{\frac{1}{2}}}.
\]
The integral is convergent for any value of $r$ in the interval $[0,r_b]$. In particular it enables us to compute the value of the 
physical time $\tilde{\tau}_\lightning$ at which this conformal
geodesic reaches the singularity:
\begin{equation}
\tilde{\tau}_\lightning=(3r_b^2+1)^{-\frac{3}{4}}r_b^{\frac{1}{2}}\int_0^{r_b}  \frac{\mbox{d}s}
{\left((r_b-s)(s^2+r_bs+1+r_b^2)\right)^{\frac{1}{2}}}.
\label{eq:taulightning-schads-2}
\end{equation}

\begin{remark}
{\em Summarising, the conformal geodesic starting at the bifurcation
  sphere reaches the singularity in a finite amount of (physical)
  proper time. }
\end{remark}

\subsubsection{Conformal geodesics at the conformal boundary}
An important property of the anti de Sitter spacetime is that the
conformal boundary can be ruled by a congruence of conformal
geodesics ---see e.g. \cite{Fri95}. In this Section we show that
the congruence of conformal geodesics in the Schwarzschild-anti de
Sitter spacetime considered in the previous Sections 
extends to the conformal boundary to include curves with a similar
property.

The approach to  the construction of conformal geodesics followed in
this article has been to solve the relevant equations in the physical
spacetime ---this strategy, however, cannot be followed to analyse 
conformal geodesics at the conformal boundary. In this case, one needs
to formulate the conformal geodesic equations and their initial data
in a conformal extension.

\medskip
Start by considering the conformal factor $\Xi=1/\sqrt{1+r^2}$
chosen earlier in this Section ---see equation
\eqref{ConformalFactorAdS}--- and let
\begin{eqnarray*}
&& \bar{\bmg} = \Xi^2 \tilde{\bmg}\\
&& \phantom{\bar{\bmg}} =
   \frac{r^3+r-M}{r(1+r^2)}\mathbf{d}t\otimes
   \mathbf{d}t 
- \frac{r}{(1+r^2)(r^3+r-M)} \mathbf{d}r\otimes\mathbf{d}r
   - \frac{r^2}{1+r^2}\bmsigma.
\end{eqnarray*}
By introducing a new radial coordinate $z=1/r$, the latter metric can
be shown to extend smoothly to the conformal boundary defined by
$z=0$. In particular,
the (Lorentzian) induced metric $\bmell$ on $\mathscr{I}$ (i.e. at $z=0$) is given by
\[
\bmell = \mathbf{d}t\otimes \mathbf{d}t -\bmsigma,
\]
the standard metric of $\mathbb{R}\times \mathbb{S}^2$ ---the
3-dimensional Einstein cylinder. On $(\mathbb{R}\times
\mathbb{S}^2,\bmell)$ consider now the family of curves given by
\begin{equation}
 x(\mbox{s}) \equiv (\mbox{s}, \underline{x}_\star), \qquad
 \underline{x}_\star\in \mathbb{S}^2, \qquad \mbox{s}\in \mathbb{R}.
\label{Geodesic3EinsteinCylinder}
 \end{equation}
 These curves have tangent given by $\bmpartial_\tau$ and can be readily
 shown to be geodesics of the metric $\bmell$. As $(\mathbb{R}\times
\mathbb{S}^2,\bmell)$ is a (3-dimensional) Einstein space, then the
curve given by equation \eqref{Geodesic3EinsteinCylinder} is, up to a
reparametrisation, a conformal geodesic ---see e.g. Lemma 5.2 in
\cite{CFEBook}. To find the reparametrisation $\tau=\tau(s)$ exhibiting
the conformal geodesic character of the curve one follows the argument
of the proof of that Lemma and consider a candidate 1-form
\[
\bar{\bmbeta} = \alpha(\tau) \mathbf{d}\mbox{s}, \qquad
\mbox{for some smooth function }\alpha(\tau). 
\]
The conformal geodesic equations (\ref{ConformalCurve1})-(\ref{ConformalCurve2})
readily give that
\[
\ddot{\mbox{s}} + \alpha \dot{\mbox{s}}^2=0,\qquad \dot{\alpha}
=\frac{1}{2}\dot{\mbox{s}}(\alpha^2 +1),
\]
where $\dot{\phantom{s}}$ denotes differentiation with respect to
$\tau$. These equations can be solved to give 
\[
\mbox{s} = 2 \mbox{arctan} \frac{1}{2}\tau, \qquad \alpha = \frac{1}{2}\tau.
\]
Now, a calculation readily yields $\langle \bar{\bmbeta}, \dot{\bmx}
\rangle = \alpha \dot{\mbox{s}}$. Accordingly, the conformal factor
$\vartheta$ on $\mathbb{R}\times
\mathbb{S}^2$ satisfying the condition $\vartheta^2 \bmell
(\dot{\bmx},\dot{\bmx}) =1$ obeys the ordinary differential equation
\[
\dot{\vartheta} = \langle \bar{\bmbeta}, \dot{\bmx}
\rangle \vartheta, \qquad \vartheta_\star =1.
\]
This differential equation can be solved to give
\[
\vartheta = 1 + \frac{1}{4} \tau^2,
\]
so that, in particular, one has that 
\[
\bar{\bmbeta} = \frac{1}{2}\tau \mathbf{d}t = \vartheta^{-1} \mathbf{d}\vartheta.
\]
Using the conformal factor $\vartheta$ one obtains a different
representative of the conformal class of conformal boundary of the
spacetime ---i.e. $\bmell' \equiv \vartheta^2 \bmell$ so that
\[
\bmell' =\mathbf{d}\tau \otimes \mathbf{d}\tau -\left(1+ \frac{1}{4}\tau^2  \right)^2 \bmsigma,
\]
where the parameter $\tau$ has been introduced as new time
coordinate. This representative of the conformal class $[\bmell]$ can
be regarded as canonical as in it, the parameter $\tau$ is the proper
time of the curve.

\medskip
Summarising the previous discussion, we have found that the pair $(x(\tau),\bar{\beta}(\tau))$
given by
\begin{equation}
x(\tau) = \big(2 \mbox{arctan} \frac{1}{2}\tau, \underline{x}_\star\big),
\qquad \bar{\bmbeta}(\tau) = \frac{2\tau}{4 + \tau^2}\mathbf{d}\tau,
\qquad \tau \in \mathbb{R}, \quad \underline{x}_\star \in \mathbb{S}^2,
\label{ConformalGeodesics:SadSConformalBoundary}
\end{equation}
are solutions to the $\bar{g}$-conformal geodesics. Observe that, in
particular, as the curve and the 1-form are completely intrinsic to
the conformal boundary, then $\langle \bar{\bmbeta}, \bmnu\rangle=0$
where $\nu$ is the unit normal vector to the initial hypersurface.

\begin{remark}
{\em This result is, in fact, a general property of anti de Sitter-like 
spacetimes: a conformal geodesic in an anti-de Sitter-like spacetime
which passes through a point $p\in \mathscr{I}$, is tangent to
$\mathscr{I}$ at $p$ and which satisfies $\langle \bmbeta,
\bmnu\rangle|_p =-\Pi$ with $\bmnu$ the unit normal to $\mathscr{I}$
and $\Pi$ the so-called \emph{Friedrich scalar} of the conformal
representation, remains in $\mathscr{I}$ and defines a conformal
geodesic for the conformal structure of $\mathscr{I}$ ---see
e.g. \cite{CFEBook}, Lemma 17.1. We recall that the Friedrich scalar
at a timelike (or spacelike) conformal boundary is closely related to
the extrinsic curvature of the hypersurface ---in particular, if
$\Pi=0$ then the conformal boundary is extrinsically flat, see
\cite{CFEBook}, Section 11.4.4.  In this Section we show that the
congruence of conformal geodesics for the Schwarzschild-anti de Sitter
spacetime considered in the previous Sections can be extended to
include curves on the conformal boundary with the above property. }
\end{remark}
 
\begin{remark}
{\em Observe that as $\tau\rightarrow \pm\infty$ then
  $\mbox{s}\rightarrow\pm \pi$. Thus, the parameter $\tau$ does not
  allow to exhaust the whole of the cylinder
  $\mathbb{R}\times\mathbb{S}^2$. This phenomenon is a 3-dimensional
  analogue of a similar observation for the (4-dimensional) anti de
  Sitter spacetime ---see \cite{Fri95} also \cite{CFEBook} Section
  6.4.2.}
\end{remark}

\medskip
\noindent
\textbf{Relation between the conformal geodesics on the conformal
  boundary and those in the bulk.} Finally, we analyse the relation
between the family of curves on the conformal boundary of the
Schwarzschild-anti de Sitter spacetime and the conformal geodesics in
the interior of the spacetime that have been constructed earlier. To do this, it is recalled that the
conformal geodesic equations \eqref{ConformalCurve1} and
\eqref{ConformalCurve2} are conformally invariant under a rescaling
$\bar{\bmg} = \Xi^2 \tilde{\bmg}$ if the 1-form $\bmbeta$ transforms
as
\[
\bar{\bmbeta} = \bmbeta - \Xi^{-1} \mathbf{d}\Xi. 
\]
Thus, the initial data
for the $\bar{\bmg}$-conformal geodesic equations implied by the
initial data for the $\tilde{\bmg}$-conformal geodesic equations in \eqref{eq:gamma-beta} satisfies
\[
\bar{\beta}_\star =0 
\] 
for all points in the conformal extension $\mathcal{S}$
of the
(physical) initial hypersurface $\tilde{\mathcal{S}}$. Moreover, it can be
readily verified that the curves in
\eqref{ConformalGeodesics:SadSConformalBoundary} are orthogonal to
$\mathcal{S}$. Thus, they are the limit 
of the family of
conformal geodesics in the interior of the spacetime considered in
this Section. 

\subsection{Analysis of the conformal geodesic deviation equation}
In this Section we apply the formalism introduced in Section
\ref{Section:ConformalDeviationEquations} to verify that the
congruence of conformal geodesics in the Schwarzschild-anti de Sitter
spacetime constructed in the previous Section is
non-singular. Remarkably, the various classes of conformal geodesics
in the Schwarzschild spacetime discussed in the previous Sections can
be analysed simultaneously.

\medskip
In the case of the Schwarzschild-anti de Sitter equation, equation
\eqref{eq:gdeviationDiso} with the value of $\beta$ given  by
\eqref{eq:gamma-beta} takes the form 
\[
\frac{\mbox{d}^2\tilde{\omega}}{\mbox{d}\tilde\tau^2}=
\left(\frac{M\left((1+r_\star^2)^2-r_\star r^3\right)- r^3(1+r_\star^2)}{\tilde r^3(1+r_\star^2)^2}\right)\tilde{\omega}+
 \frac{2r_\star(1+r_\star^2)+M(3r_\star^2-1)}{2\rho_\star(1+r_\star^2)^2}.
\]
It is observed that since both $M$ and $ r$
are positive, we have the inequality 
\[
 \frac{M\left((1+r_\star^2)^2-r_\star r^3\right)-
 r^3(1+r_\star^2)}{ r^3(1+r_\star^2)^2}=
 \frac{M}{r^3}-\frac{Mr_\star}{\left(r_\star{}^2+1\right){}^2}
-\frac{1}{r_\star{}^2+1}>-\frac{Mr_\star}{\left(r_\star{}^2+1\right){}^2}-\frac{1}{r_\star{}^2+1}.
\]
Therefore, $\tilde{\omega}$ satisfies the differential inequality
\[
 \frac{\mbox{d}^2\tilde{\omega}}{\mbox{d}\tilde\tau^2}\geq
 -\left(\frac{Mr_\star}{\left(r_\star{}^2+1\right){}^2}+\frac{1}{r_\star{}^2+1}\right)\omega+
 \frac{2r_\star(1+r_\star^2)+M(3r_\star^2-1)}{2\rho_\star(1+r_\star^2)^2}.
\]
The latter implies that the scalars $\tilde{\omega}$ and $\omega$
satisfy 
\[
 \omega\geq \varpi, \qquad \tilde{\omega}\geq \Theta\varpi\;,
\]
where $\varpi$ is the solution of 
\begin{eqnarray*}
&&\frac{\mbox{d}^2\varpi}{\mbox{d}\tilde\tau^2}=
-\left(\frac{Mr_\star}{\left(r_\star{}^2+1\right){}^2}+\frac{1}{r_\star{}^2+1}\right)\varpi+
 \frac{2r_\star(1+r_\star^2)+M(3r_\star^2-1)}{2\rho_\star(1+r_\star^2)^2}\;,\\
&& \varpi(0,\rho_\star)=\frac{r_\star}{\rho_\star}\;,\quad {\varpi}'(0,\rho_\star)=0.
\end{eqnarray*}
This differential equation can be explicitly solved, the result being
\[
\varpi=\frac{M \left(3 r_\star{}^2-1\right)+2r_\star\left(r_\star{}^2+1\right)}
{2 \rho _\star\left(r_\star\left(M+r_\star\right)+1\right)}-
\frac{M\left(r_\star{}^2-1\right)}{2\rho _\star\left(r_\star\left(M+r_\star\right)+1\right)}
\cos\left(\frac{\tilde\tau\sqrt{r_\star\left(M+r_\star\right)+1}}{r_\star{}^2+1}\right).
\]
Thus, using that 
\[
\tilde{\omega} \geq \varpi \geq \mbox{min}( \varpi) =
 \frac{2Mr_{\star}^2+2r_\star\left(r_\star{}^2+1\right)}
{2 \rho _\star\left(r_\star\left(M+r_\star\right)+1\right)}.
\]
Since the congruence never reaches the conformal boundary, 
one concludes that the congruence does not form caustic points. 

\begin{remark}
{\em We stress that the previous analysis holds for the three types of
  conformal geodesics considered in the previous Section.}
\end{remark}

\subsection{Conformal Gaussian coordinates in the Schwarzschild-anti
  de Sitter spacetime}
\begin{figure}[t]
\centering
\includegraphics[scale=0.4]{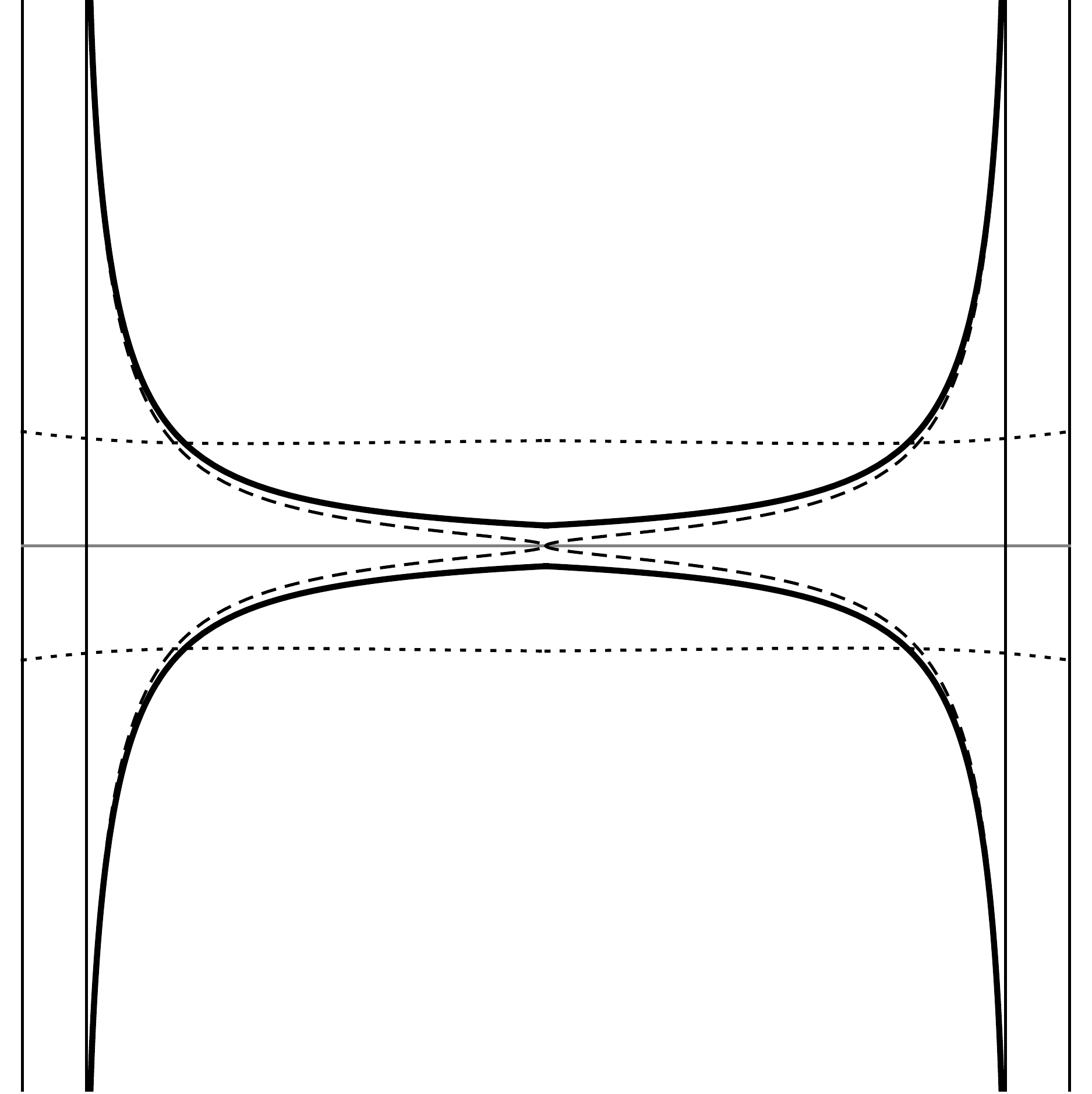}
\put(-20, -10){$\mathcal{Q}$}
\put(-20, 220){$\mathcal{Q}$}
\put(-202, -10){$\mathcal{Q}$}
\put(-202, 220){$\mathcal{Q}$}
\put(-125, 80){$\tau=-\infty$}
\put(-125, 135){$\tau=\infty$}
\put(-230, 160){$\mathscr{I}$}
\put(0, 160){$\mathscr{I}$}
\put(-28, 112){$r_{\circledast}$}
\put(-198, 112){$r_{\circledast}$}
\caption{The Schwarzschild anti de Sitter spacetime in conformal
Gaussian coordinates $(r_{\star},\tilde{\tau})$ for $M=1/2$.  The
horizontal central line represents the initial hypersurface
$\tilde{\mathcal{S}}$ which is parametrised by $r_{\star}$.  The
vertical axis in this plot corresponds to the physical proper time
$\tilde{\tau}$.  In particular, the location of the horizon in
conformal Gaussian coordinates,
$(r_{b},\tilde{\tau}(r_{b},r_{\star}))$ with $r_{b}\leq r_{\star}\leq
r_{\circledast}$, corresponds to the dashed curve.  The thick
continuous line denotes the location of the singularity
$(0,\tilde{\tau}(0,r_{\star}))$ with $0\leq r_{\star}\leq
r_{\circledast}$.  The inner vertical lines correspond to the
conformal geodesic with initial datum $r_{\star}=r_{\circledast}$
while the outer vertical lines and the conformal boundary
$\mathscr{I}$.  The dotted horizontal lines shows boundary of the
region that the congruence of conformal geodesics can cover with a
single parametrisation of the unphysical proper time
$\tau=\tau(\tilde{\tau})$. Similar to the case of the anti de Sitter
spacetime the conformal boundary $\mathscr{I}$ cannot be covered with
a single parametrisation of the unphysical proper time $\tau$. }
\label{fig:Gaussian-aSdS}
\end{figure}

In Remark \ref{Remark:ProperTimeSadS} it has been observed that the
$g$-proper time, $\tau$, of the curves of the congruence of conformal
geodesics in the Schwarzschild-anti de Sitter spacetime discussed in
the previous Section does not cover the whole span of the curves.
\emph{Thus, it follows that for the Schwarzschild-anti de Sitter
spacetime it is not possible to construct global systems of conformal
Gaussian coordinates} as it was the case in the cases of positive and
vanishing Cosmological constant. 

Notwithstanding the observation made in the previous paragraph, under
Assumption \ref{AssumptionSadS}, and using methods similar to those
employed in Sections \ref{Section:ConformalGaussianCoordinatesSdS} and
\ref{Section:ConformalGaussianCoordinateseSdS} it is still possible to
show that the congruence covers the maximal extension of the conformal
representation of the spacetime defined by the conformal factor
$\Theta$ as given by equation \eqref{eq:theta-ads}. Taking advantage
of the discrete symmetries of the spacetime, the discussion can be
restricted to the range $r_\star \in [r_b,\infty)$. We omit the
details.

\newpage
\subsection{Summary}
The analysis of Section \ref{subsec:sch-ads} can be summarised in the following
Theorem:

\begin{theorem}[\textbf{\em Conformal geodesics in the
    Schwarzschild-anti de Sitter spacetime}]
For $r_\star\geq r_b$ let
\[
\tilde\tau_\dagger(r_\star)\equiv \min\bigg\{ \frac{\pi(r_\star+1)}{\sqrt{r^2_\star+Mr_\star+1}}, \tilde\tau_\lightning(r_\star) \bigg\}\;,
\]
where $\tilde\tau_\lightning(r_\star)$ is defined by (\ref{eq:taulightning-ads1}) if $r_*>r_b$ or (\ref{eq:taulightning-schads-2}) if $r_*=r_b$.
The portion of the maximal extension of the Schwarzschild-anti de
Sitter spacetime corresponding to the region
\[
\bigg\{ (\tilde{\tau},r_\star)\in
(-\tilde{\tau}_\dagger(r_\star),\tilde{\tau}_\dagger(r_\star))\times [r_b,\infty)  \bigg\}\;,
\]
can be covered by a non-singular congruence of conformal geodesics.
\end{theorem}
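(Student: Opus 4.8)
The plan is to assemble the results of the preceding subsections. First I would recall the construction: the congruence is the family of conformal geodesics issuing orthogonally from the time-symmetric hypersurface $\tilde{\mathcal{S}}=\{t=0\}$ with initial data \eqref{eq:SadSInitialData}--\eqref{eq:gamma-beta}, parametrised by the radial coordinate $r_\star\in[r_b,\infty)$. Standard existence and uniqueness for the system \eqref{ConformalCurve1}--\eqref{ConformalCurve2} gives a unique conformal geodesic through each point of $\tilde{\mathcal{S}}$, and the explicit elliptic-function representations of $\tilde\tau(r,r_\star)$ and $t(r,r_\star)$ obtained above ---supplemented by the null coordinates $u,v$ of \eqref{eq:uschwads}--\eqref{eq:vschwads} and the Kruskal-like coordinates \eqref{eq:implicit-radial-ads} needed to follow the curve across the horizon $r=r_b$--- show that the map $(\tilde\tau,r_\star)\mapsto$ spacetime point is analytic wherever $r>0$.

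Next I would pin down, for each $r_\star$, the maximal interval of physical proper time over which the curve stays in the spacetime. By Remark \ref{Remark:SignDerivativeInitial}, $r(\tilde\tau,r_\star)$ is initially decreasing, and one invokes the trichotomy in the sign of $\Delta(M,r_\star)$: if $\Delta(M,r_\star)\le 0$ the curve runs monotonically to $r=0$ and reaches the singularity at the finite value $\tilde\tau_\lightning(r_\star)$ of \eqref{eq:taulightning-ads1}; if $\Delta(M,r_\star)>0$ it oscillates periodically between turning points without ever reaching $r=0$, so that $\tilde\tau_\lightning(r_\star)$ is to be read as $+\infty$; and for the boundary curve $r_\star=r_b$ the Kruskal-like reduction to quadratures shows the singularity is reached at the finite value \eqref{eq:taulightning-schads-2}. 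By the time-reflection symmetry of the time-symmetric data, each curve is then a smooth timelike curve for $\tilde\tau\in(-\tilde\tau_\lightning(r_\star),\tilde\tau_\lightning(r_\star))$. I would then superimpose the constraint coming from the natural parameter: by \eqref{eq:theta-ads} the conformal factor $\Theta$ never vanishes, so $\tau$ ranges over all of $\mathbb{R}$, but by Remark \ref{Remark:ProperTimeSadS} and \eqref{eq:unphysicaltophysical-case1} the reparametrisation $\tau\mapsto\tilde\tau$ is a strictly increasing bijection of $\mathbb{R}$ onto the bounded interval of radius $\pi(r_\star+1)/\sqrt{r_\star^2+Mr_\star+1}$. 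Intersecting the two intervals produces exactly $\tilde\tau\in(-\tilde\tau_\dagger(r_\star),\tilde\tau_\dagger(r_\star))$ with $\tilde\tau_\dagger$ the minimum appearing in the statement.

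For non-singularity I would appeal to the conformal-deviation analysis already carried out: the deviation scalar $\tilde\omega$ ---whose non-vanishing is equivalent to the linear independence of $\dot{\bmx}$ and $\bmz$, and which up to the positive factor $\Theta$ is the Jacobian of $(\tilde\tau,r_\star)\mapsto$ spacetime point--- satisfies, from \eqref{eq:gdeviationDiso}, $\tilde\omega\ge\varpi\ge\min(\varpi)>0$ throughout the region, so no conjugate points form and the parametrisation is everywhere a local diffeomorphism. Finally I would upgrade this local statement to the covering claim: under Assumption \ref{AssumptionSadS} the critical conformal geodesic $\Delta(M,r_\circledast)=0$ separates the curves with $\Delta<0$, which sweep out the part of the black-hole region between $\tilde{\mathcal{S}}$ and the singularity, from the curves with $\Delta>0$, which sweep out the exterior; together with the monotonicity statements this shows that distinct values of $r_\star$ label disjoint curves, which combined with the nowhere-vanishing Jacobian gives global injectivity of the parametrisation on the stated region.

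I expect the main obstacle to be precisely this last step ---passing from the local (nonvanishing Jacobian) statement to the global covering/injectivity statement. The delicate points are the behaviour at the two ends of the $r_\star$-range: matching the bulk congruence with the curve emanating from the bifurcation sphere $r_\star=r_b$ (where $t$ degenerates and one must pass to the Kruskal chart), and the limiting behaviour $r_\star\to\infty$, where the curves accumulate on the timelike conformal boundary. One also has to lean on Assumption \ref{AssumptionSadS} for the position of $\alpha(M,r_\circledast)$ relative to $r_b$, and check that the various analytic parametrisations knit together smoothly across $r=r_b$.
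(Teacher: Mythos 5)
Your proposal is correct and follows essentially the same route as the paper: the theorem is stated there as a summary of Section \ref{subsec:sch-ads}, and your assembly --- the trichotomy in the sign of $\Delta(M,r_\star)$ determining whether $\tilde\tau_\lightning$ is finite, the bound $\tilde\tau\to\pi(\cdot)/\sqrt{r_\star^2+Mr_\star+1}$ coming from the reparametrisation \eqref{eq:unphysicaltophysical-case1}, the lower bound $\tilde\omega\ge\varpi\ge\min(\varpi)>0$ from the deviation equation, and the reliance on Assumption \ref{AssumptionSadS} --- is exactly the paper's argument. The global covering/injectivity step you flag as delicate is indeed the one place where the paper itself only gestures at the method (``We omit the details''), so your caution there is well placed rather than a defect.
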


A qualitatively accurate depiction of the congruence of conformal
geodesics in the Penrose diagram for various values of the mass
parameter is shown in Figures \ref{fig:cgSchAdS} and
\ref{fig:cgSchAdS2}. A qualitatively accurate depiction of the region
of the spacetime that can be covered by conformal Gaussian coordinates
can be found Figure \ref{fig:Gaussian-aSdS}.

\section{Concluding remarks}
In this article we have studied conformal geodesics in the
Schwarz\-schild\ -de Sitter and Schwarzschild-anti de Sitter families of
spacetimes. In both cases, initial data for the congruence of curves
can be chosen in such a manner that they cover the whole maximal
extension of the spacetime.  Moreover, in the case of the
Schwarzschild-de Sitter spacetime, these curves provide global
coordinate systems which, in turn, can be used as the starting point
of a study of perturbations and global questions by means of conformal
methods. To do this, one would need to study the exact solutions
expressed in terms of the conformal Gaussian coordinates. As discussed
in \cite{GasVal17a} for the Schwarzschild-de Sitter spacetime, this is
not an easy task ---in absence of an explicit change of coordinates
one is forced to extract the required properties of the spacetime by
means of a direct analysis of the evolution equations satisfied by the
exact solution.  In contrast in the Schwarzschild-anti de Sitter case
only a portion of the maximal extension can be covered by conformal
Gaussian coordinates.  This phenomenon resembles the case of the anti
de Sitter spacetime in which the conformal extension of the spacetime
is only compact in the spatial directions but extends infinitely in
the time direction. In this case, the unphysical proper time is exhausted
before covering the full conformal extension of the spacetime.

\section*{Acknowledgements}
AGP is supported by projects IT956-16 (Basque government, Spain),
PTDC/MAT-ANA/1275/2014 (``Funda\c{c}\~{a}o para a Ci\^{e}ncia e a
Tecnolog\'{\i}a'' (FCT), Portugal), FIS2014-57956-P (``Ministerio de
Econom\'{\i}a y Competitividad'', Spain) and EG holds a
scholarship (494039/218141) from Consejo Nacional de Ciencia y
Tecnolog\'ia (CONACyT). AGP wishes to thank the School of Mathematical
Sciences of Queen Mary College, where part of this work was carried
out, for hospitality. The three authors thank Dr. Christian L\"ubbe for 
enlightening discussions.







\end{document}